\documentclass[aps,pra,superscriptaddress,nofootinbib,twocolumn]{revtex4-2}

\usepackage[T1]{fontenc}
\usepackage[utf8]{inputenc}
\usepackage{amsmath,amssymb,amsthm,mathtools}
\usepackage{physics}
\usepackage{dsfont}
\usepackage{bm}
\usepackage{microtype}
\usepackage[dvipsnames]{xcolor}
\usepackage[colorlinks=true,citecolor=Mulberry,linkcolor=MidnightBlue,
  urlcolor=MidnightBlue,
  pdftitle={Purification brings advantages in sequential quantum-channel discrimination},
  pdfauthor={Zoe Garcia del Toro, Marco Tulio Quintino, and Jessica Bavaresco}]{hyperref}
\usepackage[nameinlink,capitalize,noabbrev]{cleveref}
\usepackage{amscd}
\usepackage{indentfirst}
\usepackage{graphicx}
\usepackage[american,ngerman,greek,brazilian,british]{babel}
\usepackage{float}

\usepackage{xfrac}
\usepackage{ulem}
\usepackage{enumerate}
\usepackage{sidecap}
\usepackage{enumitem}
\usepackage{bbold}
\usepackage{listings}
\usepackage{algorithm}
\usepackage{algpseudocode}
\usepackage{multirow}
\usepackage{setspace}
\usepackage{lipsum}  
\usepackage{comment}
\usepackage{physics}

\newcommand{\id}{\mathds{1}}

\newcommand{\Hcal}{\mathcal{H}}

\newcommand{\Scal}{\mathcal{S}}

\newcommand{\Ccal}{\mathcal{C}}

\newcommand{\supp}{\operatorname{supp}}

\newtheorem{theorem}{Theorem}
\newtheorem*{theorem*}{Theorem}
\newtheorem{proposition}{Proposition}
\newtheorem{lemma}{Lemma}
\newtheorem{corollary}{Corollary}

\begin{document}

\title{Purification brings advantages in sequential quantum channel discrimination}

\author{Zoe Garc\'{i}a del Toro}
\email{zoe.garcia@lip6.fr}
\affiliation{Sorbonne Universit\'e, CNRS, LIP6, F-75005 Paris, France}

\author{Marco T\'ulio Quintino}
\email{marco.quintino@lip6.fr}
\affiliation{Sorbonne Universit\'e, CNRS, LIP6, F-75005 Paris, France}

\author{Jessica Bavaresco}
\email{jessica.bavaresco@lip6.fr}
\affiliation{Sorbonne Universit\'e, CNRS, LIP6, F-75005 Paris, France}

\begin{abstract}
Purifications are often used as a convenient mathematical representation of quantum states and channels when constructing protocols for quantum information processing tasks. Although powerful, this perspective can suggest that the purifying degrees of freedom and its correlations with an environment are just a useful change of representation. Here we show how they can instead be exploited as an operational resource in sequential quantum channel discrimination, even when the environmental reference frame is unknown or averaged. In contrast to state discrimination and parallel channel discrimination, for which bare and averaged-purification access are equivalent, sequential access to purified resources exhibits a strict advantage in probability of success. We prove this strict separation via two main results. In the first, we construct a pair of families of qubit-qubit channels with a strict separation in the two-query case, and in the second, we show an example of a pair of qubit-qubit channels whose purifications are perfectly distinguishable with three queries while no finite number of bare (non-purified) queries suffices for perfect discrimination. These separations imply an advantage of sequential over parallel strategies for channel discrimination and establish the impossibility of converting bare channel queries into averaged-purification queries via transformations that allow arbitrary interventions between queries, strengthening previous results.
\end{abstract}

\maketitle

\section{Introduction}
\label{sec:introduction}

Purification is one of the most fundamental constructions in quantum theory. Every mixed quantum state can be regarded as the reduced state of a pure state on a larger Hilbert space, and every quantum channel admits a Stinespring dilation~\cite{stinespring1995positive} to an isometric evolution involving an environment. Such descriptions are often mathematically convenient, as they allow one to reason about quantum information processing protocols entirely in terms of pure states and noiseless, reversible quantum operations. Yet, to recover the original state or channel, the additional degrees of freedom introduced by the purification must ultimately be discarded, even though retaining access to them may reveal information that is absent from the reduced resource. This raises a natural operational question: when does access to a purification provide an advantage over access to the corresponding bare quantum resource?

At first sight, one might expect such an advantage to be immediate. A purification contains the original resource as a marginal and therefore carries at least as much information. However, access to a purification is itself ambiguous: a quantum state or channel does not have a unique purification, since different purifications are related by unitaries acting on the environment system. In particular, one may distinguish between access to a fixed but unknown purification of a given state or channel and access to the average over all possible purifications of a given state or channel. While no-go theorems show that the former access model---of a fixed but unknown purification---is inaccessible via transformations acting on bare queries of the reduced quantum state or channel~\cite{Kleinmann2006physical,Kleinmann2007purifying,liu2026universalpurificationquantummechanics,deltoro2026probabilisticapproximateuniversalquantum}, the latter access model---of the average over all possible purifications---has been shown to be achievable via transformations acting on bare queries~\cite{tang2025conjugatequerieshelp,yoshida2026randomdilationsuperchannel,girardi2026randomstinespringsuperchannelconverting}. Although these results demonstrate that these purification access models are genuinely different, it has been shown that for quantum-to-classical tasks under parallel quantum strategies---such as discrimination, property testing, and learning---both the average over all purifications and fixed unknown purifications showcase an equal performance~\cite{chen2025localtestunitarilyinvariant,tang2025conjugatequerieshelp,Chen2025quantumchannel}. Yet, their usefulness in quantum information tasks remains an open problem, both in terms of whether purified resources can provide an operational advantage over bare queries in scenarios where sequential strategies are allowed and of whether the two distinct purification access models can themselves be operationally distinguished.

Here we show that the answer is yes. We consider three models of access to a quantum channel: bare access to several independent channel queries; access to the averaged purification queries; and access to queries of a fixed but unknown purification. We first show that, under sequential strategies, access to either purified channel resources can yield a strict advantage over bare queries in channel discrimination tasks. In particular, we show a family of discrimination tasks involving two qubit-qubit channels for which, using only two queries, access to either purification resource brings a strict advantage in the probability of successful discrimination. Next, we show another discrimination task involving two qubit-qubit channels for which perfect discrimination can be achieved with only three queries of either purification resource but cannot be achieved by the bare queries resource with any finite number of copies. We present two important consequences of these results. The first is an advantage of sequential over parallel strategies for channel discrimination. The second is that no finite number of queries to the bare channels suffices to prepare three or more queries of the averaged purification resource if the converting transformation is required to allow for arbitrary interventions in-between queries, strengthening the original result from Ref.~\cite{yoshida2026randomdilationsuperchannel}. Finally, we show that in quantum-to-quantum tasks, the two purification access models can be operationally distinguished, albeit through a somewhat artificial task designed specifically to highlight the distinction between the two resources.

\section{Framework}
\label{sec: States, channels, and purification resources }

\subsection{Purification}

We begin by defining the set of possible purifications of a given quantum channel, recovering purification of a quantum state as a particular case.  Quantum channels mapping quantum states on an input Hilbert space $\Hcal_I$ to quantum states on an output Hilbert space $\Hcal_O$ are described by completely positive trace-preserving maps $\mathcal{C}:\mathsf L(\Hcal_I)\rightarrow\mathsf L(\Hcal_O)$ whose Choi operator $C \in \mathsf{L}(\Hcal_{I}\otimes \Hcal_{O})$ is defined as $C\coloneqq \mathcal{I}_I\otimes\mathcal{C}(\ketbra{\Omega})$, where $\mathcal{I}_X$ is the identity map on $\mathsf L(\Hcal_X)$, $\ket{\Omega}\coloneqq\sum_{i=1}^{d_I}\ket{i}\ket{i}$, and $d_X\coloneqq\dim(\Hcal_X)$.

All quantum channels admit an isometric purification: for every quantum channel $\mathcal C$, there exist an environment Hilbert space $\mathcal H_E$ and an isometry $V_C:\mathcal H_I\rightarrow\mathcal H_O\otimes\mathcal H_E$ such that
\begin{align}
\mathcal C(\rho)=\operatorname{tr}_E\!\left[V_C\rho V_C^\dagger\right]
\end{align}
for every quantum state $\rho\in\mathsf{L}(\mathcal{H}_I)$. This fact is known as Stinespring dilation theorem. We denote the corresponding isometric channel by $\mathcal V_C(\rho)=V_C\rho V_C^\dagger$ and refer to it as a \textit{purification} of the quantum channel $\mathcal{C}$. Such condition can be expressed in terms of Choi operators according to
\begin{align}
    \tr_E\ketbra{V_C}&=C,
\end{align}
where $\ket{V_C}\coloneqq(\id_I\otimes V_C)\ket{\Omega}$ is the Choi vector associated to the quantum channel defined by the isometry $V_C$, with $\id_X$ denoting the identity operator in $\Hcal_X$. The set of all possible purifications $\{\ket{V_{C,U}}\}$ of a given Choi operator of a quantum channel $C$, for a fixed environment dimension, can be defined from any fixed purification $\ket{V_C}$ via a unitary $U\in\mathsf{L}(\Hcal_E)$ acting on the environment system, according to
\begin{align}
    \ket{V_{C,U}}&\coloneqq(\id_{IO}\otimes U)\ket{V_C}.
\end{align}
We remark that quantum states can be viewed as the particular case of quantum channels where $d_I=1$; that is, all density matrices are equivalent to Choi operators of quantum channels $C$ with $d_I=1$ and, in particular, all pure quantum states $\ket{\psi}$ are equivalent to Choi vectors of isometric channels $\ket{V_C}$ with $d_I=1$.

\subsection{Resources and convertibility}
 
In this work, we distinguish three models of access to several queries, or uses, of a quantum channel: via access to the ``bare'' quantum channel through independent queries; access to the average over several queries to the same unknown purification; and access to independent queries of a fixed but unknown purification of a quantum channel (i.e. several queries to the same unknown purification). Formally, these are defined as
\begin{enumerate}[label={R\arabic*.}]
    \item \textit{Bare access.} The available resource is $k$ independent queries to a quantum channel $\mathcal{C}$, mathematically described using Choi operators as
    \begin{align}\label{eq::R1}
        {R}_\mathrm{bare}^{(k)}(C)\coloneqq C^{\otimes k}.
    \end{align}

    \item \textit{Averaged purification access.} The available resource is the Haar average over unitaries acting on the environment of $k$ queries to the same purification of a quantum channel $\mathcal{C}$, mathematically described using Choi operators as
    \begin{align}\label{eq::R2}
        {R}_\mathrm{avg}^{(k)}(C)\coloneqq\int_{\text{Haar}}\!\!\!\!\!\!\!\!\left[(\id_{IO}\otimes U)\ketbra{V_{C}}(\id_{IO}\otimes U^\dagger)\right]^{\otimes k} \mathrm{d}U.
    \end{align}

    \item \textit{Fixed unknown purification access.} The available resource is $k$ independent queries to a fixed but unknown purification of a quantum channel $\mathcal{C}$, mathematically described using Choi operators as
    \begin{align}\label{eq::R3}
        {R}_\mathrm{ukn}^{(k)}(C,U)\coloneqq\left[(\id_{IO}\otimes U)\ketbra{V_{C}}(\id_{IO}\otimes U^\dagger)\right]^{\otimes k},
    \end{align}
    for some unknown unitary $U$ acting on the environment.
\end{enumerate}

\begin{figure}
    \centering
    \includegraphics[width=\linewidth]{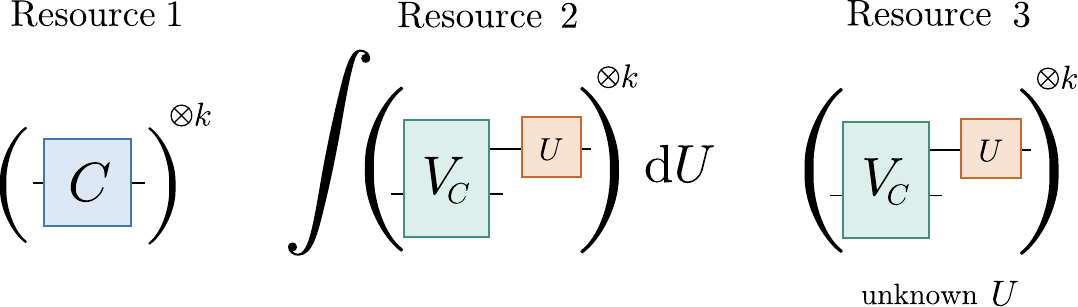}
    \caption{\textbf{Channel access models}. Depiction of the three channel access models. Resource 1 corresponds to $k$ independent queries of channel $C$, defined as ${R}_\mathrm{bare}^{(k)}(C)$ in Eq.~\eqref{eq::R1}. Resource 2 corresponds to the average over the environment of $k$ queries of the same purification of channel $C$, defined as ${R}_\mathrm{avg}^{(k)}(C)$ in Eq.~\eqref{eq::R2}. Resource 3 corresponds to $k$ queries of a fixed unknown purification of channel $C$, defined as ${R}_\mathrm{ukn}^{(k)}(C,U)$ in Eq.~\eqref{eq::R3}.}
    \label{fig:resources}
\end{figure}

These resources are depicted in Fig.~\ref{fig:resources}. Once again, the case of quantum states is regained by setting $d_I=1$ in each of the above access models.

Notice that in resources R2 and R3, the same unitary $U_E$ acts on every one of the $k$ queries of the purification of $C$, regardless if it is fixed or averaged over; therefore, in both cases the resulting environment systems are correlated.

It is easy to see that these resources satisfy the natural hierarchy 
\begin{align}
    \mathrm{R1}\prec \mathrm{R2}\prec \mathrm{R3}
\end{align}
in terms of convertibility. That is, there exists a linear, completely positive, and causally well-defined transformation $\Lambda$ (i.e. implementable by a quantum circuit) that converts $\mathrm{R3}\mapsto\mathrm{R2}$ and $\mathrm{R2}\mapsto\mathrm{R1}$. That is, there exists a transformation $\Lambda_{\mathrm{R2}\mapsto\mathrm{R1}}$ satisfying the above conditions such that
\begin{align}
    \Lambda_{\mathrm{R2}\mapsto\mathrm{R1}}[{R}_\mathrm{avg}^{(k)}(C)] \coloneqq \tr_E [{R}_\mathrm{avg}^{(k)}(C)] = {R}_\mathrm{bare}^{(k)}(C),
\end{align}
for all $k$ and $C$, which is the partial trace. 
Furthermore, there also exists a transformation $\Lambda_{\mathrm{R3}\mapsto\mathrm{R2}}$ satisfying these conditions. This transformation is given by the twirling map, according to

\begin{align}
    &\Lambda_{\mathrm{R3}\mapsto\mathrm{R2}}[R_\mathrm{ukn}^{(k)}(C,U)] \coloneqq \nonumber \\
    &\int_\text{Haar} (\id_{IO}\otimes V^{\otimes k}) R_\mathrm{ukn}^{(k)}(C,U) (\id_{IO}\otimes V^{\otimes k})^\dagger \mathrm{d}V \label{eq: Haar1}\\
    &= \int_\text{Haar} R_\mathrm{ukn}^{(k)}(C,VU) \mathrm{d}V \label{eq: Haar2}\\
    &= \int_\text{Haar} R_\mathrm{ukn}^{(k)}(C,V) \mathrm{d}V \label{eq: Haar3}\\
    &= R_\mathrm{avg}^{(k)}(C)
\end{align}
for all $k$, $C$, and $U$. The equality from Eq.\eqref{eq: Haar2} and Eq.\eqref{eq: Haar3} follows from the invariance of Haar measure.
In order to recognize that the group average is implementable by a parallel circuit, notice that  applying an arbitrary fixed unitary $V^{\otimes k}$ in the environment space can be done via a parallel quantum circuit, and since the set of parallel quantum circuits is convex, the group average defined in Eq.~\eqref{eq: Haar1} can be implemented by a parallel circuit as well.

Notice that R3 can also be directly mapped into R1 via a partial trace in the environment, i.e., 
\begin{align}\label{eq::converts_R3_to_R1}
    \Lambda_{\mathrm{R3}\mapsto\mathrm{R1}}[{R}_\mathrm{ukn}^{(k)}(C,U)] \coloneqq \tr_E [{R}_\mathrm{ukn}^{(k)}(C,U)] = {R}_\mathrm{bare}^{(k)}(C),
\end{align}
for all $k$, $C$, and $U$.

At the same time, conversion is \textit{not possible} from $\mathrm{R1}$ to $\mathrm{R3}$ and from $\mathrm{R2}$ to $\mathrm{R3}$: following arguments presented for quantum state purification in Refs.~\cite{Kleinmann2006physical,Kleinmann2007purifying,liu2026universalpurificationquantummechanics}, and similarly for quantum channel purification in Refs.~\cite{deltoro2026probabilisticapproximateuniversalquantum}, the only linear positive map that can decrease the rank of an operator is the one that discards its input and prepares a constant output. Hence, no linear positive map can universally ``purify'' resources R1 and R2 into R3 for every quantum channel $\mathcal{C}$. In other words, since $\rank[{R}_\mathrm{ukn}^{(k)}(C,U)]=1$ for all $k$, $C$, and $U$, and in general $\rank[{R}_\mathrm{avg}^{(k)}(C)]>1$ and $\rank[{R}_\mathrm{bare}^{(k)}(C)]>1$, it follows that 
\begin{align}
\begin{split}
    \nexists \ &\text{linear and positive} \ \Lambda_{\mathrm{R1}\mapsto\mathrm{R3}}; \\
    &\Lambda_{\mathrm{R1}\mapsto\mathrm{R3}}[{R}_\mathrm{bare}^{(k)}(C)] =  {R}_\mathrm{ukn}^{(k)}(C,U)
\end{split}    
\end{align}
for all $k$, $C$, and $U$, and similarly,
\begin{align}
\begin{split}\label{eq::nonconversionR2_to_R3}
    \nexists \ &\text{linear and positive} \ \Lambda_{\mathrm{R2}\mapsto\mathrm{R3}}; \\
    &\Lambda_{\mathrm{R2}\mapsto\mathrm{R3}}[{R}_\mathrm{avg}^{(k)}(C)] =  {R}_\mathrm{ukn}^{(k)}(C,U)
\end{split}    
\end{align}
for all $k$, $C$, and $U$.
The convertibility of $\mathrm{R1}\mapsto\mathrm{R2}$ depends on the causal constraints imposed on the quantum circuit that implements such transformation, and in particular, on whether it allows for arbitrary interventions in between uses~\cite{yoshida2026randomdilationsuperchannel,girardi2026randomstinespringsuperchannelconverting}. This point will be discussed in more detail in Sec.~\ref{sec:discussion}.

We now show how this hierarchy of convertibility between different access models can be exploited as a resource for general estimation/learning/property testing tasks, which we here refer to as quantum-to-classical tasks. In particular, we show that access to purification resources R2 and R3 brings strict advantages to sequential quantum channel discrimination.

\subsection{Quantum-to-classical tasks}
\label{sec: discrimination tasks}

Let $(\mathsf{X},\Sigma_{\mathsf X})$ be a measurable space, referred to as the hypothesis set, let $\{p(x)\}_{x\in\mathsf{X}}$ be the prior distribution over the hypothesis set, and let $(\mathsf{Y},\Sigma_{\mathsf Y})$ be a measurable space, referred to as the decision set. A bounded measurable function
\begin{align}
    s:\mathsf{X}\times\mathsf{Y}\longrightarrow\mathbb R
\end{align}
assigns a score value $s(x,y)$ when the true hypothesis is $x$ and the
decision is $y$. This abstract framework captures a general family of ``quantum-to-classical'' tasks, such as parameter estimation, learning, and hypothesis and property testing. These are tasks where the relevant figure of merit depends on some classical information encoded into quantum systems, and which is decoded via a quantum strategy, that culminates into a quantum measurement yielding a classical value. 
In general, an interesting figure of merit in such tasks is the average score given by the expression 
\begin{align}
    \mathrm{Score}(p,s,\mathrm{Pr}) = 
    \int_\mathsf{X} \int_\mathsf{Y} p(x) s(x,y) \mathrm{Pr}(y|x) \mathrm{d}x\mathrm{d}y,
\end{align}
where the decision function $\mathrm{Pr}(y|x)$ is given by the probability of outputting decision $y$ conditioned on a given hypothesis $x$, which is a function of the quantum strategy typically expressed as generalized Born rule.

Take as an example the particular case of quantum state discrimination. In this case, the number of hypotheses $x$ is countable and finite, and the set of decisions can be taken to be the same as the set of hypothesis $\mathsf{Y}=\mathsf{X}$ without loss of generality. The score function is the all-or-nothing function 
\begin{align}
    s(x,y)=\delta_{x,y} 
\end{align}
given by the Kronecker delta function which assigns score of $1$ to correct decisions and of $0$ to incorrect ones. Each hypothesis $x\in\mathsf{X}$ corresponds to the label of a quantum state $\{\rho_x\}_{x\in\mathsf{X}}$ and the quantum strategy, given by a positive operator-valued measure (POVM) $\{M_y\}_{y\in\mathsf{Y}}$, defines the function
\begin{align}
    \mathrm{Pr}(y|x)=\tr(\rho_xM_y).  
\end{align}
The figure of merit is then the maximum probability of successful discrimination of an ensemble $\{p(x),\rho_x\}$, given by
\begin{align}
    P_\mathrm{succ}(\{p(x),\rho_x\}) &\coloneqq\max_{\{M_y\}}\mathrm{Score}(p,s,\mathrm{Pr}) \\
    &= \max_{\{M_y\}} \sum_{x,y} p(x)\delta_{x,y} \tr(\rho_xM_y) \\
    &= \max_{\{M_x\}} \sum_{x} p(x)\tr(\rho_xM_x).
\end{align}

Another example is the task of quantum state estimation, where a common choice of score is the fidelity between the unknown state $\rho_x$ and the state estimate $\sigma_y$ associated with the measurement outcome $y$~\cite{Hayashi1998,Bagan2006Estimation}. Taking the decision set $\mathsf{Y}$ to label the state estimates and the score function to be the fidelity $s(x,y)=F(\rho_x,\sigma_y)$, the optimal average fidelity is given by
\begin{align}
    \langle F(\{p(x),\rho_x\})\rangle &\coloneqq \max_{\{M_y\}}\mathrm{Score}(p,s,\mathrm{Pr}) \\
    &= \max_{\{M_y\}} \int_\mathsf{X}\int_\mathsf{Y} p(x) F(\rho_x,\sigma_y) \tr(\rho_xM_y) \mathrm{d}x\mathrm{d}y.
\end{align}

Here we consider quantum-to-classical tasks where the hypotheses are encoded into quantum channels that are available in one of the three access models we defined. That is, the hypotheses and their associated priors are given in the form of ensembles 
\begin{align}
    \mathsf{C}_\mathrm{bare}^{(k)} &\coloneqq \{p(x), {R}_\mathrm{bare}^{(k)}(C_x)\}_{x\in\mathsf{X}} \label{eq::ensemble_bare} \\
    \mathsf{C}_\mathrm{avg}^{(k)}  &\coloneqq \{p(x), {R}_\mathrm{avg}^{(k)}(C_x)\}_{x\in\mathsf{X}} \label{eq::ensemble_avg} \\
    \mathsf{C}_\mathrm{ukn}^{(k)}(\{U_x\})  &\coloneqq \{p(x), {R}_\mathrm{ukn}^{(k)}(C_x,U_x)\}_{x\in\mathsf{X}}. \label{eq::ensemble_unk}
\end{align}
where resources ${R}_\mathrm{bare}^{(k)}(C_x)$, ${R}_\mathrm{avg}^{(k)}(C_x)$, and ${R}_\mathrm{ukn}^{(k)}(C_x,U_x)$ are defined in Eqs.~\eqref{eq::R1}--\eqref{eq::R3}.

\begin{figure}
\begin{center}
     \includegraphics[width=0.9\linewidth]{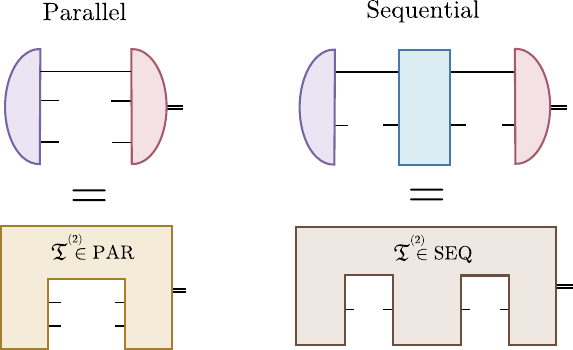}
     \caption{\textbf{Implementation of two-query quantum testers with quantum circuits.} Top: a quantum strategy for a quantum-to-classical task that acts on two queries of a channel resource either in parallel or sequentially, implemented by a quantum circuit. Bottom: the corresponding representation of these circuits as parallel and sequential quantum testers $\mathfrak{T}^{(2)}\in\mathrm{PAR}$ and $\mathfrak{T}^{(2)}\in\mathrm{SEQ}$.}
     \label{fig:testers}
 \end{center}
 \end{figure}

\subsection{Parallel and sequential quantum strategies}

We consider two kinds of quantum strategies that can probe the quantum channels available across the different access models: Non-adaptive ones, called \textit{parallel strategies}, which can be implemented by quantum circuits that simultaneously query the available resources, and adaptive ones, called \textit{sequential strategies}, which can be implemented by quantum circuits that query the available resources sequentially, allowing for arbitrary quantum interventions between queries. Formally, these are characterized by parallel and sequential quantum testers~\cite{Chiribella2008Memory,Ziman2008process,bavaresco2021strict,bavaresco2022unitary}, which are depicted for the case of two queries in Fig.~\ref{fig:testers}.

Parallel quantum testers are described by a set of linear operators $\mathfrak T^{(k)} = \{T_y^{(k)}\}_{y\in\mathsf{Y}} \in \mathrm{PAR}$, where $T_y^{(k)}\in\mathsf{L}(\bigotimes_{i=1}^k\Hcal_{I_i}\otimes\Hcal_{\widetilde{O}_i})$, with $\Hcal_{\widetilde{O}_i}\coloneqq\Hcal_{O_i}\otimes\Hcal_{E_i}$ (note that in some cases we will take $d_{E_i}=1$), are positive semidefinite operators, and $T^{(k)}\coloneqq\int_\mathsf{Y}T_y^{(k)}\mathrm{d}y$ satisfies
\begin{align} \label{eq:PAR_tester}
    \tr(T^{(k)}) = \prod_{i=1}^k d_{\widetilde{O}_i}, \quad T^{(k)} = {}_{\widetilde{O}_1\ldots \widetilde{O}_k}T^{(k)},
\end{align}
where ${}_{X}W\coloneqq\tr_X(W) \frac{\id_X}{d_X}$ is the trace-and-replace map. Such conditions guarantee that every parallel quantum tester can be realized by preparing a common, possibly entangled, input state across all channel uses, followed by a final joint quantum measurement on the resulting outputs~\cite{Chiribella2008Memory,Ziman2008process}.

Sequential quantum testers are described by a set of linear operators $\mathfrak T^{(k)} = \{T_y^{(k)}\}_{y\in\mathsf{Y}} \in \mathrm{SEQ}$, where $T_y^{(k)}\in\mathsf{L}(\bigotimes_{i=1}^k\Hcal_{I_i}\otimes\Hcal_{\widetilde{O}_i})$ are positive semidefinite operators and $T^{(k)}\coloneqq\int_\mathsf{Y}T_y^{(k)}\mathrm{d}y$ satisfies
\begin{align} \label{eq:SEQ_tester}
\begin{split}
    &\tr(T^{(k)}) = \prod_{i=1}^k d_{\widetilde{O}_i}, \quad T^{(k)} = {}_{\widetilde{O}_k}T^{(k)}, \\
    &{}_{I_j\widetilde{O}_j\cdots I_k\widetilde{O}_k}T^{(k)}={}_{\widetilde{O}_{j-1}I_j\widetilde{O}_j\cdots I_k\widetilde{O}_k}T^{(k)}, \ \ \forall \ j\in\{2,\ldots,k\}.
\end{split}
\end{align}
Such conditions guarantee that every sequential quantum tester can be realized by an initial state preparation, followed by a sequence of intermediate quantum channels that may carry quantum memory between consecutive queries, forming what is called a quantum comb~\cite{Chiribella2009Networks,Chiribella2008quantum}, and a final quantum measurement~\cite{Chiribella2008Memory,Ziman2008process}.

Let $\mathrm{QS}\in\{\mathrm{PAR},\mathrm{SEQ}\}$ denote the kind of quantum strategy. For a general quantum-to-classical task defined by an arbitrary score function $s$, we define the maximum average score attained by a quantum strategy $\mathrm{QS}$ in each access model. For the access model related to resource R1, we have that
\begin{align}
    &P_\mathrm{QS}(\mathsf{C}_\mathrm{bare}^{(k)};s) \nonumber \\
    &\coloneqq \max_{\{T_y^{(k)}\}\in\mathrm{QS}} \int_\mathsf{X} \int_\mathsf{Y} p(x) s(x,y) \tr[{R}_\mathrm{bare}^{(k)}(C_x)\,T_y^{(k)}] \mathrm{d}x\mathrm{d}y \\
    &= \max_{\{T_y^{(k)}\}\in\mathrm{QS}} \int_\mathsf{X} \int_\mathsf{Y} p(x) s(x,y) \tr[C_x^{\otimes k}\,T_y^{(k)}] \mathrm{d}x\mathrm{d}y.
\end{align}
For the access model related to resource R2,
\begin{align}
    &P_\mathrm{QS}(\mathsf{C}_\mathrm{avg}^{(k)};s) \nonumber \\
    &\coloneqq \max_{\{T_y^{(k)}\}\in\mathrm{QS}} \int_\mathsf{X} \int_\mathsf{Y} p(x) s(x,y) \tr[{R}_\mathrm{avg}^{(k)}(C_x)\,T_y^{(k)}] \mathrm{d}x\mathrm{d}y \\
    &= \max_{\{T_y^{(k)}\}\in\mathrm{QS}} \int_\mathsf{X} \int_\mathsf{Y}\mathrm{d}x\mathrm{d}y \,p(x) s(x,y) \nonumber\\
    &\times \tr\left[\int_\mathrm{Haar} \!\!\!\!\!\!\!\! \mathrm{d}U_x\left[(\id_{IO}\otimes U_x)\ketbra{V_{C_x}}(\id_{IO}\otimes U_x^\dagger)\right]^{\otimes k}\,T_y^{(k)}\right] .
\end{align}
Finally, for the access model related to resource R3, we consider the worst-case scenario across all possible unknown unitaries acting on the environment, yielding
\begin{align}
    &P_\mathrm{QS}(\mathsf{C}_\mathrm{ukn}^{(k)};s) \nonumber \\
    &\coloneqq \max_{\{T_y^{(k)}\}\in\mathrm{QS}} \min_{\{U_x\}}\int_\mathsf{X} \int_\mathsf{Y} p(x) s(x,y) \nonumber\\
    &\quad\quad\quad\quad\times \tr[{R}_\mathrm{ukn}^{(k)}(C_x,U_x)\,T_y^{(k)}] \mathrm{d}x\mathrm{d}y \\
    &=\max_{\{T_y^{(k)}\}\in\mathrm{QS}} \min_{\{U_x\}} \int_\mathsf{X} \int_\mathsf{Y}\mathrm{d}x\mathrm{d}y \,p(x) s(x,y) \nonumber\\
    &\times \tr[\left[(\id_{IO}\otimes U_x)\ketbra{V_{C_x}}(\id_{IO}\otimes U_x^\dagger)\right]^{\otimes k}\,T_y^{(k)}] .
\end{align}

The equivalence between Haar-averaged and worst-case unknown purification access in quantum-to-classical tasks follows from a standard Haar-symmetrization argument, with roots in Holevo's approach to covariant quantum estimation~\cite{Holevo1982}, and its tester analogue~\cite{Chiribella2008Covariant}. The lemma below extends the group average argument from Lemma~3.8  of Ref.~\cite{Chen2025quantumchannel} to sequential strategies. For completeness, we present a short proof in our notation and framework, covering both parallel and sequential strategies.

\begin{lemma}\label{lemma::avg_equal_unk_sequential}
    For every score function $s$, every finite number of queries $k$, all quantum channel ensembles $\mathsf{C}_\mathrm{avg}^{(k)}$ and $\mathsf{C}_\mathrm{ukn}^{(k)}(\{U_x\})$, and strategy classes $\mathrm{QS} \in \{\mathrm{PAR},\mathrm{SEQ}\}$, it holds that
    \begin{align}
        P_{\mathrm{QS}}(\mathsf{C}_\mathrm{avg}^{(k)};s)
        =P_{\mathrm{QS}}(\mathsf{C}_\mathrm{ukn}^{(k)};s).
    \end{align}
\end{lemma}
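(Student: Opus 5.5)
The proof will establish the two inequalities $P_{\mathrm{QS}}(\mathsf{C}_\mathrm{ukn}^{(k)};s)\le P_{\mathrm{QS}}(\mathsf{C}_\mathrm{avg}^{(k)};s)$ and $P_{\mathrm{QS}}(\mathsf{C}_\mathrm{avg}^{(k)};s)\le P_{\mathrm{QS}}(\mathsf{C}_\mathrm{ukn}^{(k)};s)$ separately, for a fixed $\mathrm{QS}\in\{\mathrm{PAR},\mathrm{SEQ}\}$.

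\emph{Worst case is at most average.} Fix any tester $\{T_y^{(k)}\}\in\mathrm{QS}$. The minimum over $\{U_x\}$ of the score is at most its average when each $U_x$ is drawn independently from the Haar measure. The score is linear in each $R_\mathrm{ukn}^{(k)}(C_x,U_x)$, and each hypothesis $x$ involves only its own $U_x$. Exchanging the Haar integral with the trace and with the integrals over $\mathsf X,\mathsf Y$ (Fubini applies since $s$ is bounded and everything is positive and normalized) therefore turns this average into the score of the same tester on $\mathsf{C}_\mathrm{avg}^{(k)}$. Maximizing over testers gives the first inequality.

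\emph{Average is at most worst case.} Take an optimal tester $\{T_y^{(k)}\}$ for the averaged ensemble; the maximum exists by compactness of the tester set. Define its twirled version
\begin{align}
\widetilde T_y^{(k)}\coloneqq\int_{\mathrm{Haar}}(\id\otimes W^{\otimes k})^\dagger\, T_y^{(k)}\,(\id\otimes W^{\otimes k})\,\mathrm{d}W,
\end{align}
where $W^{\otimes k}$ acts on $\Hcal_{E_1}\otimes\cdots\otimes\Hcal_{E_k}$. I then need two facts.
\begin{enumerate}[label=(\roman*)]
\item $\{\widetilde T_y^{(k)}\}\in\mathrm{QS}$. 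Operationally, a random $W$ is sampled classically and $W$ is applied to the environment output $E_i$ of each query right after that query, before anything else acts on it. This is a local operation on each output, so it preserves parallel or sequential structure, and convexity of $\mathrm{QS}$ handles the average. Formally, conjugation by a unitary on $\widetilde O_i$ commutes with the trace-and-replace maps ${}_{\widetilde O_i}(\cdot)$ and preserves traces and positivity, so the linear constraints in Eqs.~\eqref{eq:PAR_tester} and \eqref{eq:SEQ_tester} are preserved term by term. The integral then preserves them by linearity.
\item For every choice $\{U_x\}$, the score of $\widetilde T$ on $R_\mathrm{ukn}^{(k)}(C_x,U_x)$ equals the score of $T$ on $R_\mathrm{avg}^{(k)}(C_x)$. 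Moving the conjugation onto the resource gives $R_\mathrm{ukn}^{(k)}(C_x,WU_x)$, and Haar invariance of $W\mapsto WU_x$ turns the average into $R_\mathrm{avg}^{(k)}(C_x)$, exactly as in Eqs.~\eqref{eq: Haar1}--\eqref{eq: Haar3}.
\end{enumerate}
Hence the score of $\widetilde T$ is independent of $\{U_x\}$, so its minimum over $\{U_x\}$ equals $P_{\mathrm{QS}}(\mathsf{C}_\mathrm{avg}^{(k)};s)$. This gives the second inequality.

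The main point to check carefully is step (i) in the sequential case. One must confirm that conjugation by $\id_{I_j}\otimes W_{E_j}$ commutes with each nested trace-and-replace condition in Eq.~\eqref{eq:SEQ_tester}, including those that trace $\widetilde O_{j-1}$ while leaving $I_j$ intact. This holds because the unitary factorizes across the $\widetilde O_i$ and acts trivially on the inputs, and because ${}_X(UAU^\dagger)=U\,{}_X(A)\,U^\dagger$ whenever $U$ acts on $X$ only. A milder technicality is attainment of the max and min, which follows from compactness of the unitary group and of the tester set and continuity of the score.
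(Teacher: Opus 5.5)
Your proposal is correct and follows essentially the same route as the paper: the bound worst case $\le$ average, then Haar-twirling an optimal averaged-access tester over collective environment unitaries $W^{\otimes k}$, so that by Haar invariance its score on $R_\mathrm{ukn}^{(k)}(C_x,U_x)$ no longer depends on $\{U_x\}$ and equals the averaged-access optimum. Your explicit check that this conjugation commutes with every nested trace-and-replace condition, because it factorizes across the $\widetilde O_i$ and acts trivially on the inputs, supplies the ``direct calculation'' that the paper leaves implicit.
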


\begin{proof}
   We start by noticing that, since the average can never be smaller than the worst case it holds that 
   $ P_{\mathrm{QS}}(\mathsf{C}_\mathrm{ukn}^{(k)};s) \leq P_{\mathrm{QS}}(\mathsf{C}_\mathrm{avg}^{(k)};s)$.

   We will now show that, for parallel or sequential strategies, it holds that $P_{\mathrm{QS}}(\mathsf{C}_\mathrm{avg}^{(k)};s)
    \leq P_{\mathrm{QS}}(\mathsf{C}_\mathrm{ukn}^{(k)};s)$. Let $T_y^{*(k)}$ be the parallel or sequential tester that attains $P_{\mathrm{QS}}(\mathsf{C}_\mathrm{avg}^{(k)};s)$. We define the new tester
    \begin{align}
        {T'}_y^{(k)}\coloneqq\int_\text{Haar} (\id_{IO}\otimes V^{\otimes k}) T_y^{*(k)}(\id_{IO}\otimes V^{\otimes k})^\dagger \mathrm{d}V.
    \end{align}
    Due to the invariance of the Haar measure (see Eq.~\eqref{eq: Haar2}), it holds that
    \begin{align}
        &  \int_\mathsf{X} \int_\mathsf{Y} p(x) s(x,y) \tr[{R}_\mathrm{avg}^{(k)}(C_x)\, {T}_y^{*(k)}] \mathrm{d}x\mathrm{d}y \\
        =& \int_\mathsf{X} \int_\mathsf{Y} p(x) s(x,y) \tr[{R}_\mathrm{avg}^{(k)}(C_x)\, {T'}_y^{(k)}] \mathrm{d}x\mathrm{d}y  \\ 
        =&  \int_\mathsf{X} \int_\mathsf{Y} p(x) s(x,y) \tr[{R}_\mathrm{ukn}^{(k)}(C_x)\, {T'}_y^{(k)}] \mathrm{d}x\mathrm{d}y . 
    \end{align}
    That is, the new tester $\{{T'}_y^{(k)}\}_y$ attains the same score of $\{{T}_y^{*(k)}\}_y$ in the average case, and the same score of $\{{T}_y^{*(k)}\}_y$ in the worst case, case corresponding to the $\mathsf{C}_\mathrm{ukn}^{(k)}$.

    We finish the proof by showing that, if $\{T_y^{*(k)}\}\in\mathrm{QS}$ then $\{{T'}_y^{(k)}\}\in\mathrm{QS}$. Since $T_y^{*(k)}\geq0$, it follows that $T_y^{'(k)}\geq0 $.
    Also, direct calculation shows that, if $\{{T}_y^{*(k)}\}\in\mathrm{PAR}$, then  $\{{T'}_y^{(k)}\}$ respects the parallel tester constraints of Eq.~\eqref{eq:PAR_tester}. Analogously, if $\{{T}_y^{*(k)}\}\in\mathrm{SEQ}$, then  $\{{T'}_y^{(k)}\}$ respects the sequential tester constraints of Eq.~\eqref{eq:SEQ_tester}. 
\end{proof}

We remark that the proof of Lemma~\ref{lemma::avg_equal_unk_sequential} holds also in the case of general testers~\cite{bavaresco2021strict} which correspond to well-defined transformations that do not necessarily respect a definite causal order~\cite{costa2026indefinitequantumcausality} (see e.g. App. B of Ref.~\cite{bavaresco2022unitary} for an explicit calculation).

\section{No advantage of purification resources in quantum-to-classical tasks with parallel strategies}

It follows from Theorem 3.1 of Ref.~\cite{Chen2025quantumchannel} that, for parallel strategies, access to a fixed unknown purification offers no advantage over bare queries access in quantum-to-classical tasks. Here, we revisit this result using our notation and approach and present a short, direct proof.

\begin{proposition}[No purification advantage for parallel strategies]
\label{prop:unk-eq-avg-channel}
    For every bounded measurable score function $s$, every finite number of queries $k$, and all quantum channel ensembles $\mathsf{C}_\mathrm{bare}^{(k)}$, $\mathsf{C}_\mathrm{avg}^{(k)}$, and $\mathsf{C}_\mathrm{ukn}^{(k)}(\{U_x\})$, parallel quantum strategies yield
    \begin{align}\label{eq::thm1}
        P_{\mathrm{PAR}}(\mathsf{C}_\mathrm{bare}^{(k)};s)
        =P_{\mathrm{PAR}}(\mathsf{C}_\mathrm{avg}^{(k)};s)
        =P_{\mathrm{PAR}}(\mathsf{C}_\mathrm{ukn}^{(k)};s).
    \end{align}
\end{proposition}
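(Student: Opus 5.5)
The second equality in Eq.~\eqref{eq::thm1} is Lemma~\ref{lemma::avg_equal_unk_sequential} with $\mathrm{QS}=\mathrm{PAR}$. So it suffices to show $P_{\mathrm{PAR}}(\mathsf{C}_\mathrm{bare}^{(k)};s)=P_{\mathrm{PAR}}(\mathsf{C}_\mathrm{avg}^{(k)};s)$, and I would prove the two inequalities separately.

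\textbf{The inequality $P_{\mathrm{PAR}}(\mathsf{C}_\mathrm{bare}^{(k)};s)\leq P_{\mathrm{PAR}}(\mathsf{C}_\mathrm{avg}^{(k)};s)$.} This follows from the conversion $\Lambda_{\mathrm{R2}\mapsto\mathrm{R1}}$, the partial trace on $E_1\ldots E_k$. Given a bare parallel tester $\{T_y\}$, define $T'_y\coloneqq T_y\otimes\id_{E_1\ldots E_k}$. Then
\begin{align}
\tr[R_\mathrm{avg}^{(k)}(C_x)T'_y]=\tr[C_x^{\otimes k}T_y].
\end{align}
Positivity of $T'_y$ is immediate. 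The normalisation conditions of Eq.~\eqref{eq:PAR_tester} hold because $\tr T'=\prod_i d_{O_i}d_{E_i}$ and trace-and-replace on $\widetilde O_1\ldots\widetilde O_k$ acts on $T\otimes\id_E$ exactly as trace-and-replace on $O_1\ldots O_k$ acts on $T$.

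\textbf{The inequality $P_{\mathrm{PAR}}(\mathsf{C}_\mathrm{avg}^{(k)};s)\leq P_{\mathrm{PAR}}(\mathsf{C}_\mathrm{bare}^{(k)};s)$.} This is the substantive direction. I would exploit the structure of parallel testers: every parallel tester is equivalent to preparing a state $\rho\in\mathsf L(\Hcal_{A}\otimes\bigotimes_i\Hcal_{I_i})$, applying the $k$ queries, and measuring a POVM $\{M_y\}$ on $A\,O_1E_1\ldots O_kE_k$. The plan is to reproduce the output state of the averaged resource using only bare queries.

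Purify $\rho$ to a pure state. With the averaged resource, the output is
\begin{align}
\int \mathrm{d}U\,(\id\otimes U^{\otimes k})\,\bigl(\id_A\otimes V_C^{\otimes k}\bigr)\ketbra{\psi}\bigl(\id_A\otimes V_C^{\otimes k}\bigr)^\dagger(\id\otimes U^{\otimes k})^\dagger.
\end{align}
Before averaging, this is a pure state $\ket{\Phi_C}$ on $A\,O^k\,E^k$ whose marginal on $A\,O^k$ is $\sigma_C\coloneqq(\mathcal I_A\otimes\mathcal C^{\otimes k})(\ketbra{\psi})$. That marginal is exactly what bare queries give. The key point is that the Haar twirl over $U^{\otimes k}$ on $E^{k}$ applied to any purification of $\sigma_C$ depends only on $\sigma_C$, because any two purifications are related by an isometry on the purifying system. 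The same argument as in Eqs.~\eqref{eq: Haar1}--\eqref{eq: Haar3} then makes the twirled state a function of $\sigma_C$ alone.

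I would make this concrete through Schur--Weyl duality. The twirl $\mathcal T(X)=\int (U^{\otimes k})X(U^{\otimes k})^\dagger\,\mathrm dU$ on $E^k$ projects onto the commutant. The task is to find a fixed channel $\mathcal P$ from $A\,O^k$ to $A\,O^kE^k$, independent of $C$, such that $\mathcal P(\sigma_C)=(\mathcal I\otimes\mathcal T)(\ketbra{\Phi_C})$. Then $\{\mathcal P^\dagger(M_y)\}$ defines a bare parallel tester achieving the same score. Such a $\mathcal P$ is the ``random purification channel'' of Refs.~\cite{tang2025conjugatequerieshelp,yoshida2026randomdilationsuperchannel,girardi2026randomstinespringsuperchannelconverting}, applied here to the state $\sigma_C$. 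I would assume its existence for states, with environment dimension at least the rank; the case of larger environment dimensions embeds.

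\textbf{Main obstacle.} The key subtlety is that $\ket{\Phi_C}$ has a very specific structure: the environment is $E_1\ldots E_k$, with $U^{\otimes k}$ acting identically on each factor. It is not a generic purifying system with a full unitary twirl. So I must check that the random-purification construction for the state $\sigma_C$ carries the $U(d_E)$-symmetry with the correct tensor-power action and reproduces precisely $(\mathcal I\otimes\mathcal T)(\ketbra{\Phi_C})$.

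I would handle this as follows. Write $\ket{\Phi_C}=(\id\otimes W_C)\ket{\psi'}$, where $W_C$ is built from $V_C^{\otimes k}$. Decompose $E^{\otimes k}$ via Schur--Weyl duality into irreducible blocks $\lambda$, with $U^{\otimes k}$ acting irreducibly on the first factor of each block and the symmetric group acting on the multiplicity space. Averaging then leaves, within each block $\lambda$, a maximally mixed state on the $U$-irrep tensored with a partial trace. The resulting state is obtained from $\sigma_C$ by a fixed map that measures $\lambda$ on the symmetric-group side, using only the permutation symmetry that $\sigma_C$ inherits.

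Should this direct construction prove delicate, I would instead cite Theorem~3.1 of Ref.~\cite{Chen2025quantumchannel}, which is exactly this statement.
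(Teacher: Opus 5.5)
Your second equality (Lemma~\ref{lemma::avg_equal_unk_sequential} with $\mathrm{QS}=\mathrm{PAR}$) is correct. So is your proof of $P_{\mathrm{PAR}}(\mathsf{C}_\mathrm{bare}^{(k)};s)\leq P_{\mathrm{PAR}}(\mathsf{C}_\mathrm{avg}^{(k)};s)$ via $T_y\otimes\id_{E_1\ldots E_k}$, which is the paper's data-processing step made explicit.

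\textbf{The reverse inequality has a genuine gap.} You keep the input $\ket{\psi}$ fixed and look for a $C$-independent post-processing $\mathcal P$ with $\mathcal P(\sigma_C)=(\mathcal I\otimes\mathcal T)(\ket{\Phi_C}\!\bra{\Phi_C})$, on the grounds that the twirled state depends only on $\sigma_C$. That claim is false. Two purifications of $\sigma_C$ on $E_1\ldots E_k$ are related by a generic unitary on $\Hcal_E^{\otimes k}$, but $\mathcal T$ is invariant only under unitaries of the form $U^{\otimes k}$. The argument of Eqs.~\eqref{eq: Haar1}--\eqref{eq: Haar3} works only because every purification there has the product form $V_{C,U}^{\otimes k}$. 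By the Weingarten expansion, the twirled state is determined by the operators $(\mathcal I_A\otimes\mathcal C^{\otimes k})(\psi\,W^{I}_{\pi^{-1}})\,W^{O}_{\pi}$ for all $\pi\in\mathsf S_k$, and only the $\pi=e$ term is $\sigma_C$.

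A two-query counterexample: on input $\ket{0}_{I_1}\ket{1}_{I_2}$, take the identity channel ($V\ket{i}=\ket{i}_O\ket{e}_E$) and the completely dephasing channel ($V\ket{i}=\ket{i}_O\ket{i}_E$). Both give $\sigma_C=\ket{01}\!\bra{01}$. Their purified outputs carry environments $\ket{ee}$ and $\ket{01}_{E_1E_2}$, with weight $0$ and $1/2$ on $\Pi^{\mathrm{asym}}_E$, and the $U^{\otimes 2}$-twirl preserves these weights. So no such $\mathcal P$ exists, even one allowed to depend on $\psi$. For the same reason, $\sigma_C$ inherits no permutation symmetry for a general $\psi$. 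The state random-purification channel acts on i.i.d.\ inputs $\rho^{\otimes n}$, so it cannot simply be applied to the single state $\sigma_C$.

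\textbf{The missing idea} is that the simulation must also act on the inputs: you need a superchannel with an encoder, not a decoder alone. One route is as follows. $R_\mathrm{avg}^{(k)}(C)$ is invariant under simultaneous permutation of the copies $(I_i,O_i,E_i)$. You may therefore replace $\ket{\psi}$ by $\frac{1}{\sqrt{k!}}\sum_\pi W^I_\pi\ket{\psi}\ket{\pi}_R$ and undo this with an $R$-controlled permutation of the $O_iE_i$ before measuring, without changing the score. For this input, right-multiplication of $\psi$ by $W^I_\pi$ equals right-multiplication by a permutation unitary acting on $R$ alone. The $\pi\neq e$ terms then become functions of the bare output. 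You would still have to show that the resulting decoding map is completely positive and that the composed bare tester lies in $\mathrm{PAR}$.

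This is exactly what the paper imports from Refs.~\cite{yoshida2026randomdilationsuperchannel,girardi2026randomstinespringsuperchannelconverting}. They provide a parallel circuit $\Lambda_{\mathrm{R1}\mapsto\mathrm{R2}}$ with $\Lambda_{\mathrm{R1}\mapsto\mathrm{R2}}[R_\mathrm{bare}^{(k)}(C)]=R_\mathrm{avg}^{(k)}(C)$ for all $C$. Its adjoint maps $\mathrm{PAR}$ testers to $\mathrm{PAR}$ testers because the circuit has a single encoder and a single decoder. Your fallback of citing Theorem~3.1 of Ref.~\cite{Chen2025quantumchannel} would settle the statement, but only by citation, not through the construction you outline.
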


\begin{proof}
    The equality $P_{\mathrm{PAR}}(\mathsf{C}_\mathrm{avg}^{(k)};s)
    =P_{\mathrm{PAR}}(\mathsf{C}_\mathrm{ukn}^{(k)};s)$ follows from Lemma~\ref{lemma::avg_equal_unk_sequential}. To prove the equality $P_{\mathrm{PAR}}(\mathsf{C}_\mathrm{bare}^{(k)};s)
    =P_{\mathrm{PAR}}(\mathsf{C}_\mathrm{avg}^{(k)};s)$, first notice that, since the averaged purification queries can be converted into the bare channel queries, using a data processing inequality it follows that 
    \begin{align}
        P_{\mathrm{PAR}}(\mathsf{C}_\mathrm{bare}^{(k)};s)
        \leq P_{\mathrm{PAR}}(\mathsf{C}_\mathrm{avg}^{(k)};s).
    \end{align}
    Now, we invoke results from Refs.~\cite{yoshida2026randomdilationsuperchannel,girardi2026randomstinespringsuperchannelconverting}, which adapted the ``acorn trick'' method for quantum states of  Ref.~\cite{tang2025conjugatequerieshelp} to show the following: there exists a transformation $\Lambda_{\mathrm{R1}\mapsto\mathrm{R2}}$ that can be implemented by a parallel quantum circuit and which simultaneously satisfies the two following properties:
   \begin{enumerate}[label={(\alph*)}]
        \item $\Lambda_{\mathrm{R1}\mapsto\mathrm{R2}}[{R}_\mathrm{bare}^{(k)}(C)] = {R}_\mathrm{avg}^{(k)}(C) \ \ \forall \ k,C$
        \item $\forall \ \{T^{(k)}_y\}_y\in\mathrm{PAR}, \quad \left\{\Lambda_{\mathrm{R1}\mapsto\mathrm{R2}}^\dagger\left(T^{(k)}_y\right)\right\}_y\in\mathrm{PAR}$,
    \end{enumerate}
    where $\Lambda_{\mathrm{R1}\mapsto\mathrm{R2}}^\dagger$ is the adjoint map of $\Lambda_{\mathrm{R1}\mapsto\mathrm{R2}}$. Point (b) is a consequence of the fact that $\Lambda_{\mathrm{R1}\mapsto\mathrm{R2}}$ 
    can be implemented by a quantum circuit that queries the resources in parallel, using a single encoder and single decoder channel, which is the same structure of parallel testers. Hence, $\tr[{R}_\mathrm{avg}^{(k)}(C_x)\,T_y^{(k)}] = \tr[\Lambda_{\mathrm{R1}\mapsto\mathrm{R2}}\left({R}_\mathrm{bare}^{(k)}(C_x)\right)\,T_y^{(k)}] = \tr[{R}_\mathrm{bare}^{(k)}(C_x)\,\Lambda_{\mathrm{R1}\mapsto\mathrm{R2}}^\dagger\left(T_y^{(k)}\right)] = \tr[{R}_\mathrm{bare}^{(k)}(C_x)\,T_y^{(k)}{'}]$ where $\{T_y^{(k)}{'}\}_y\in\mathrm{PAR}$ attains the same score as $\{T_y^{(k)}\}_y\in\mathrm{PAR}$. Consequently, via a data processing inequality, it follows that 
    \begin{align}
        P_{\mathrm{PAR}}(\mathsf{C}_\mathrm{bare}^{(k)};s)
        \geq P_{\mathrm{PAR}}(\mathsf{C}_\mathrm{avg}^{(k)};s),
    \end{align}
    which allows us to conclude Eq.~\eqref{eq::thm1}.
\end{proof}

An analogue result of Prop.~\ref{prop:unk-eq-avg-channel} for quantum states, showing the lack of an advantage from purified samples was previously established in the context of quantum property testing~\cite{chen2025localtestunitarilyinvariant}. Later, this result was re-derived in Ref.~\cite{tang2025conjugatequerieshelp}, method then simplified by Ref.~\cite{girardi2026randompurificationchannelsimple}. Since quantum states can be formally identified as quantum channels with $d_I=1$, the lack of advantage for purified state resources for quantum-to-classical tasks can be recognized as a particular case of Proposition~\ref{prop:unk-eq-avg-channel}:

\begin{corollary}
    For any quantum-to-classical task in which the hypotheses are encoded in quantum states, the three access models—$(\mathrm{R1})$ access to bare copies of the state, $(\mathrm{R2})$ access to the average over all purifications, and $(\mathrm{R3})$ access to a fixed unknown purification—yield the same performance for any success function $s$, any finite number of queries $k$, and any ensemble of quantum states.
\end{corollary}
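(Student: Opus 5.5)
The plan is to obtain the Corollary as the special case $d_I=1$ of Proposition~\ref{prop:unk-eq-avg-channel}, together with Lemma~\ref{lemma::avg_equal_unk_sequential}. The proof consists of three checks.

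First, we make the identification explicit. For a state $\rho$ on $\Hcal_O$, take the channel $\mathcal{C}_\rho$ with trivial input space, $d_I=1$, whose Choi operator is $C=\rho$. A purification $\ket{\psi}\in\Hcal_O\otimes\Hcal_E$ of $\rho$ is exactly a Choi vector $\ket{V_C}$ of an isometry $V_C:\mathbb{C}\to\Hcal_O\otimes\Hcal_E$. All other purifications of the same environment dimension are $(\id_O\otimes U)\ket{\psi}$. Substituting into Eqs.~\eqref{eq::R1}--\eqref{eq::R3} gives the following:
\begin{itemize}
\item ${R}_\mathrm{bare}^{(k)}(C)=\rho^{\otimes k}$;
\item ${R}_\mathrm{avg}^{(k)}(C)=\int_\mathrm{Haar}[(\id\otimes U)\ketbra{\psi}(\id\otimes U^\dagger)]^{\otimes k}\mathrm{d}U$, the averaged-purification state resource;
\item ${R}_\mathrm{ukn}^{(k)}(C,U)$ is $k$ copies of a fixed unknown purification.
\end{itemize}
Any ensemble of states $\{p(x),\rho_x\}$ with score $s$ is therefore an ensemble of channels with $d_I=1$ of the form \eqref{eq::ensemble_bare}--\eqref{eq::ensemble_unk}.

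Second, we observe that when $d_{I_i}=1$ the tester classes collapse to state measurements. In that case every query has no input, so the sequential constraints in Eq.~\eqref{eq:SEQ_tester} relate only output spaces. They reduce to the single normalization $T^{(k)}=\id$ on $\bigotimes_i\Hcal_{\widetilde O_i}$, which is also what Eq.~\eqref{eq:PAR_tester} gives. Indeed, iterating the trace-and-replace conditions forces $T^{(k)}$ to be proportional to the identity, and the trace condition fixes the constant. Hence $\mathrm{PAR}=\mathrm{SEQ}=$ the set of POVMs $\{M_y\}$ on the $k$ copies. The optimal score for each access model is then exactly the optimal POVM score for the corresponding state ensemble. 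This shows that "performance" in the corollary is unambiguous and need not specify the strategy class.

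Third, we apply Proposition~\ref{prop:unk-eq-avg-channel} with $\mathrm{QS}=\mathrm{PAR}$ to these $d_I=1$ ensembles. This yields $P(\mathsf{C}_\mathrm{bare}^{(k)};s)=P(\mathsf{C}_\mathrm{avg}^{(k)};s)=P(\mathsf{C}_\mathrm{ukn}^{(k)};s)$ for all $s$, $k$ and state ensembles.

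The only step needing care is confirming that the ingredients of the Proposition's proof specialize correctly to $d_I=1$. These are the data processing direction (partial trace), the conversion $\Lambda_{\mathrm{R1}\mapsto\mathrm{R2}}$ of Refs.~\cite{yoshida2026randomdilationsuperchannel,girardi2026randomstinespringsuperchannelconverting}, and the twirl in Lemma~\ref{lemma::avg_equal_unk_sequential}. For states, this conversion is the purification channel of Ref.~\cite{tang2025conjugatequerieshelp}, and its adjoint maps POVMs to POVMs because it is a channel. So property (b) holds trivially, and no further obstacle remains.
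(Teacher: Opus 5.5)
Your proposal is correct and follows the paper's argument, which is simply to specialize Proposition~\ref{prop:unk-eq-avg-channel} to $d_I=1$. Your extra check that both the parallel and sequential tester constraints reduce to $T^{(k)}=\id$ when every input space is trivial (so $\mathrm{PAR}=\mathrm{SEQ}=$ POVMs) is a useful clarification that the paper leaves implicit: it shows that the strategy-free notion of ``performance'' in the corollary is well defined.
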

The proof of this corollary requires only taking the particular case of $d_I=1$ in the proof of Prop.~\ref{prop:unk-eq-avg-channel}.

\section{Advantages of purification resources in quantum channel discrimination with sequential strategies}
\label{sec: Discrimination advantage}

We now show that in the case of sequential strategies, this equivalence between resources breaks down. In particular, while the averaged purification and fixed unknown purifications remain equivalent resources in all quantum-to-classical tasks with sequential strategies (Lemma~\ref{lemma::avg_equal_unk_sequential}), we establish two complementary separations between bare access and purified access, for tasks of quantum channel discrimination with sequential strategies. The first separation result (Thm.~\ref{thm:explicit-2copies-gap}) involves the sequential discrimination of a pair of entanglement-breaking qubit-qubit channels, with a uniform prior, for which the optimal probability of successful discrimination for $k=2$ queries is strictly larger for access models ${R}_\mathrm{ukn}^{(k)}(C,U)$ and ${R}_\mathrm{avg}^{(k)}(C)$ than for access model ${R}_\mathrm{bare}^{(k)}(C)$. In the second separation result (Thm.~\ref{thm: perfect discrimination 3 copies}), we show a pair of qubit-qubit channels that can be perfectly sequentially discriminated with $k=3$ queries in the either of the two purified access models ${R}_\mathrm{ukn}^{(k)}(C,U)$ and ${R}_\mathrm{avg}^{(k)}(C)$, but that, nonetheless, cannot be perfectly discriminated in the bare access model ${R}_\mathrm{bare}^{(k)}(C)$ for any finite number of queries. Thus, unlike state discrimination and parallel channel discrimination (which are particular cases of quantum-to-classical tasks under parallel strategies), purification constitutes a genuine resource in the sequential setting, even when averaging over the environment freedom or for an unknown environment freedom.

\begin{theorem}[Two-query sequential channel discrimination advantage from purification]\label{thm:explicit-2copies-gap}
    In a channel discrimination task, where the score function is $s(x,y)=\delta_{x,y}$, for $k=2$ queries, there exist channel ensembles such that, for sequential strategies,
    \begin{align}
        P_{\mathrm{SEQ}}(\mathsf{C}_\mathrm{bare}^{(2)};s)
        < P_{\mathrm{SEQ}}(\mathsf{C}_\mathrm{avg}^{(2)};s) = P_{\mathrm{SEQ}}(\mathsf{C}_\mathrm{ukn}^{(2)};s).
    \end{align}
    \\
    
    In particular, let $\Hcal_I\simeq\Hcal_O\simeq\mathbb C^2$ and $\Hcal_E\simeq\mathbb C^{d_E}$, with $d_E\geq4$, and let $X,Y,Z$ be the Pauli matrices. On $\Hcal_I\otimes\Hcal_O$, set
    \begin{align}
        H\coloneqq X_I\otimes Z_O-Y_I\otimes Y_O-\id_I\otimes X_O.
    \end{align}
    Now take the hypothesis set $\mathsf{X}=\{+,-\}$ and the decision set $\mathsf{Y}=\mathsf{X}$. Encode the hypothesis set into the family of qubit-qubit channels with Choi operators defined as
    \begin{align}
        C_\pm(t)\coloneqq\frac{\id_{IO}}{2}\pm tH
        \in\mathsf L(\Hcal_I\otimes\Hcal_O),
    \label{eq:condition-positivity-Cpm}
    \end{align}
    for $0<t<\frac{1}{2\sqrt5}$.
    Then, set the channel ensembles $\mathsf{C}_\mathrm{bare}^{(2)}(t) \coloneqq \{p(\pm),R^{(2)}_\mathrm{bare}(C_\pm(t))\}$ and $\mathsf{C}_\mathrm{avg}^{(2)}(t) \coloneqq \{p(\pm),R^{(2)}_\mathrm{avg}(C_\pm(t))\}$   
    according to the two access models in Eqs.~\eqref{eq::R1} and~\eqref{eq::R2} and by taking an equal prior distribution $p(+)=p(-)=1/2$. Then, considering sequential strategies,
    \begin{align}
        P_{\mathrm{SEQ}}(\mathsf{C}_\mathrm{avg}^{(2)}(t);s)
        -P_{\mathrm{SEQ}}(\mathsf{C}_\mathrm{bare}^{(2)}(t);s)
        \geq \frac{7-3\sqrt{5}}{6}\,t>0.
    \end{align}
\end{theorem}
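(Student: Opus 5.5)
The plan is to compute, or tightly upper bound, the bare sequential success probability and to lower bound the purified one with an explicit strategy. The equality $P_{\mathrm{SEQ}}(\mathsf{C}_\mathrm{avg}^{(2)};s)=P_{\mathrm{SEQ}}(\mathsf{C}_\mathrm{ukn}^{(2)};s)$ is immediate from Lemma~\ref{lemma::avg_equal_unk_sequential}, so only the gap needs work.

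\textbf{Preliminary checks.} I will first verify that $C_\pm(t)$ are valid Choi operators. Each term of $H$ is traceless on $O$, so $\tr_O C_\pm=\id_I$. A direct computation should give $\norm{H}_\infty=\sqrt5$, which yields $C_\pm(t)\geq0$ exactly for $t\leq\frac{1}{2\sqrt5}$ and explains the stated range.

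\textbf{Bare access.} Expanding the tensor powers, the $t^2H\otimes H$ terms cancel:
\begin{align}
C_+^{\otimes 2}-C_-^{\otimes 2}=t\left(\id\otimes H+H\otimes \id\right).
\end{align}
Hence, for uniform priors,
\begin{align}
P_{\mathrm{SEQ}}(\mathsf{C}_\mathrm{bare}^{(2)};s)=\tfrac12+\tfrac t2\max_{\{T_\pm\}\in\mathrm{SEQ}}\tr\!\left[(T_+-T_-)(\id\otimes H+H\otimes\id)\right],
\end{align}
which is exactly linear in $t$. The maximum is a $t$-independent constant $\beta_\mathrm{bare}$, namely the sequential (comb) norm of $\id\otimes H+H\otimes\id$. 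I will compute it by writing the primal SDP and exhibiting a matching dual feasible point. The dual is a two-slot comb $W$ that dominates $\pm(\id\otimes H+H\otimes\id)$, chosen with the Pauli symmetries of $H$ in mind. For this I will guess an ansatz $W=a\,\id+b(\ldots)$ built from Pauli strings appearing in $H$ and its square, and fit $a,b$ numerically. I expect the optimal value to involve $\sqrt5$, consistent with the constant in the statement.

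\textbf{Purified access.} Here I need a lower bound, so I will exhibit a concrete sequential tester for $R_\mathrm{avg}^{(2)}$. Writing $C_\pm=\sum_j\lambda_j^\pm\ketbra{v_j}$, the canonical purification is $\ket{V_{C_\pm}}=\sum_j\sqrt{\lambda_j^\pm}\ket{v_j}\ket{j}_E$. The purified strategy can exploit the shared Haar-random $U$ through correlations between $E_1$ and $E_2$. Invariant operators on $E_1E_2$ are spanned by $\id$ and the swap $F$, and
\begin{align}
\tr_{E_1E_2}\!\left[R_\mathrm{avg}^{(2)}(C)\,F\right]=\tr_{E}\!\left[\ketbra{V_C}\otimes\ketbra{V_C}\,F\right]
\end{align}
gives access to "$C$-squared" type terms that are invisible to $C^{\otimes2}$. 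The natural tester applies the bare-optimal sequential strategy on $IO$ while feeding the $E_1$ output forward as memory and measuring a swap/symmetric-subspace observable jointly with $E_2$ (this requires $d_E\geq4$ to hold the full purification). I will then evaluate the success probability as a function of $t$.

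\textbf{Main obstacle.} Because $\sqrt{\lambda_j^\pm}$ depends nonlinearly on $t$, the purified success probability is not linear in $t$. The key step is to show it is at least $\tfrac12+t\left(\tfrac{\beta_\mathrm{bare}}{2}+\tfrac{7-3\sqrt5}{6}\right)$ on the whole interval $0<t<\frac1{2\sqrt5}$. I will attempt this first by finding a single $t$-independent tester whose score is linear in $t$, using only the symmetric first-order structure so the nonlinear pieces cancel. If that fails, I will use concavity of the optimal value in the appropriate parametrization and check the bound at the endpoints. Establishing tightness of the bare dual certificate is the other delicate point.
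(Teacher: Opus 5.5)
Your overall architecture matches the paper's: Lemma~\ref{lemma::avg_equal_unk_sequential} for the avg/ukn equality, a feasible dual point bounding the bare value, an environment-flagged sequential tester bounding the averaged value, and weak duality. You do not need tightness of the bare certificate. As written, however, the proposal does not prove the statement, for three concrete reasons.

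First, your normalization is off by a factor of two. Since $\tr[T C_\pm^{\otimes 2}]=1$ for every deterministic tester, $P_{\mathrm{SEQ}}(\mathsf{C}^{(2)}_{\mathrm{bare}};s)=\tfrac12+\tfrac t4\max\tr[(T_+-T_-)D]$ with $D=\id\otimes H+H\otimes\id$. With your $\tfrac t2$ you overstate the success-probability gap by a factor of two, so certificates you would accept could fall short of the stated constant.

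Second, your ``main obstacle'' does not exist. Because $U^{\otimes2}$ commutes with $F_E$, one has $\tr_{E_1E_2}[R^{(2)}_{\mathrm{avg}}(C)(\id\otimes F_E)]=\tr_{E_1E_2}[\ketbra{V_C}^{\otimes 2}(\id\otimes F_E)]=C^{\otimes2}F_A$, with no square roots of eigenvalues. The Weingarten expansion gives
\begin{align}
R^{(2)}_{\mathrm{avg}}(C)=\big(C^{\otimes2}\Pi^{\mathrm{sym}}\big)\otimes\tau^{\mathrm{sym}}_E+\big(C^{\otimes2}\Pi^{\mathrm{asym}}\big)\otimes\tau^{\mathrm{asym}}_E,\qquad \tau^{\mathrm{sym}}_E=\tfrac{2\Pi^{\mathrm{sym}}_E}{d_E(d_E+1)},\quad \tau^{\mathrm{asym}}_E=\tfrac{2\Pi^{\mathrm{asym}}_E}{d_E(d_E-1)},
\end{align}
which is linear in $C^{\otimes2}$. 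Hence every fixed tester scores exactly $\tfrac12+\tfrac t4\,c_T$ with $c_T$ independent of $t$, and the concavity/endpoint fallback is unnecessary.

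The third point is the main gap: the two certificates, which are the whole content of the theorem, are left to be guessed and fitted numerically. The paper reports the constant $\frac{7-3\sqrt5}{6}$ as numerically tight, so there is no slack. You need exact feasible points of value $8$ for the averaged problem and $\frac{10}{3}+2\sqrt5$ for the bare dual.

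On the purified side, the missing structural step is that $\tr_{E_2}\tau^{\mathrm{sym}}_E=\tr_{E_2}\tau^{\mathrm{asym}}_E=\id_{E_1}/d_E$. Both flags must therefore share one causal normalization $T=N\otimes\id_{O_2}$, while the final Helstrom measurement depends on the flag. This gives
\begin{align}
P_{\mathrm{SEQ}}(\mathsf{C}^{(2)}_{\mathrm{avg}};s)=\tfrac12+\tfrac t4\max_T\big(\lVert\sqrt T\,\Pi^{\mathrm{sym}}D\Pi^{\mathrm{sym}}\sqrt T\rVert_1+\lVert\sqrt T\,\Pi^{\mathrm{asym}}D\Pi^{\mathrm{asym}}\sqrt T\rVert_1\big).
\end{align}
Since $D$ commutes with the copy swap, the triangle inequality shows this is never below the bare value for the same $T$. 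The gain vanishes, however, whenever $T$ commutes with the swap. Nothing shows that the bare-optimal tester, which you have not identified, reaches the value $8$. The paper instead takes
\[
N^\star=\tfrac14\id+\tfrac16\,\id_{I_1}\otimes X_{O_1}\otimes Z_{I_2}+\tfrac1{12}Z_{I_1}\otimes\id_{O_1}\otimes Z_{I_2},
\]
whose two sector trace norms are $6$ and $2$.

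On the bare side, a generic ansatz $a\id+b(\cdots)$ is unlikely to hit the optimum. The paper's $R^\star$ is block diagonal in $\Pi^{(1)}_{\pm}\otimes\Pi^{(2)}_{\pm}$, with $\Pi_\pm=\tfrac12(\id\pm Z_I\otimes X_O)$ the eigenprojectors of $Z_I\otimes X_O$, which commutes with $H$. It equals $|D|$ on three blocks and is corrected on the $(+,-)$ block precisely so that the $Z_{I_2}$ component of $\tr_{O_2}|D|$ cancels. This yields $Q^\star=\tfrac{10}{3}\Pi_++2\sqrt5\,\Pi_-$ and $\lambda^\star=\tfrac{10}{3}+2\sqrt5$.
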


The complete proof is given in Appendix~\ref{ap:: qubit advantage}. There, we also show how this bound for the parameter $t$ can be numerically verified to be tight. Here we discuss the reasoning of the proof.

The gap arises because averaging over the unknown purification frame does not remove all environment-level information. Environmental correlations are organized into symmetric and antisymmetric sectors. A sequential tester can preserve information from the first purified output in a quantum memory and combine it with the second output in a sector-dependent way. On the other hand, bare-channel access discards the environment systems and therefore loses precisely this information. Under sequential strategy constraints, the two bare queries cannot always compensate for this loss.

The proof makes this mechanism rigorous by comparing the semidefinite programs associated with the bare and average access models. For the averaged purification resource, the symmetric-antisymmetric decomposition yields a feasible sequential tester whose two environment-sector contributions combine constructively. This gives an achievable lower bound on the corresponding discrimination probability. For the bare-queries resource, we construct a feasible dual certificate satisfying the sequential causal constraints. This certificate gives an upper bound valid for every sequential strategy using two bare channel queries. 
    
We find that the lower bound obtained with the averaged purification resource is strictly larger than the upper bound obtained with the bare channel queries. Weak duality therefore proves the discrimination gap without requiring either certificate to be optimal.
\\

\begin{figure}
\begin{center}
    \includegraphics[width=\columnwidth]{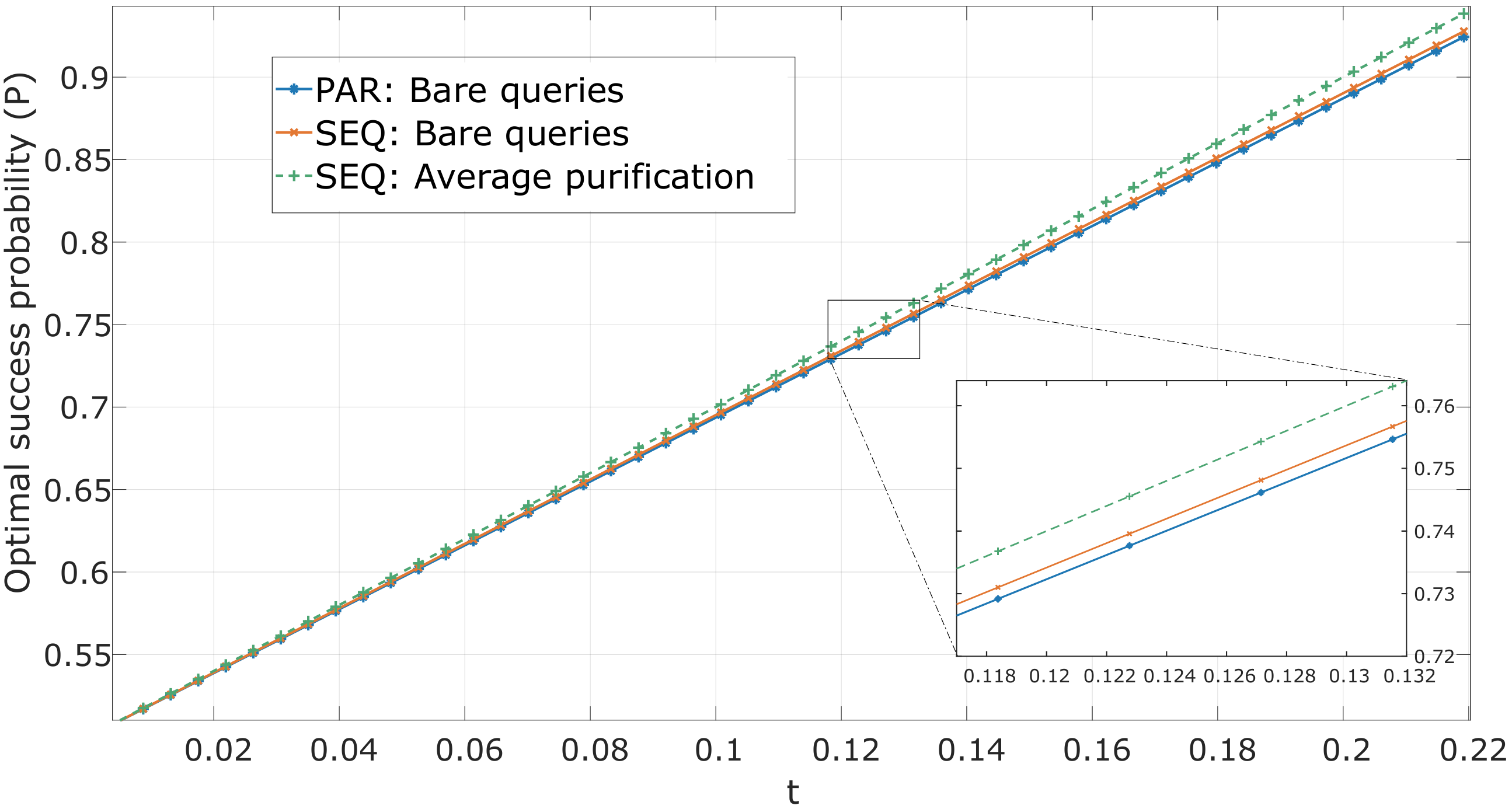}
    \caption{\textbf{Discrimination advantage from purifications implies sequential advantage over parallel strategies (Corollary~\ref{cor::PARvsSEQ}).} Strict advantage in the two-query probability of successful discrimination for the channels in Theorem~\ref{thm:explicit-2copies-gap} under different access models: Bare queries access (R1) under parallel strategies (blue), bare queries access (R1) under sequential strategies (orange), and access to averaged purification (R2) under sequential strategies (green). }
    \label{fig:plotPARvsSEQvsSEQ}
\end{center}        
\end{figure}

Next, we strengthen the separation by exhibiting a task for which access to the averaged purification resource or a fixed unknown purification resource enables perfect discrimination with $k=3$ queries, at the same time showing that this is unachievable for any finite number of bare channel queries.

\begin{theorem}[Perfect sequential channel discrimination with purification]
\label{thm: perfect discrimination 3 copies}
    In a channel discrimination task, where the score function is $s(x,y)=\delta_{x,y}$, there exist channel ensembles such that, for sequential strategies,
    \begin{align}\label{eq::inf_separation}
        P_{\mathrm{SEQ}}(\mathsf{C}_\mathrm{bare}^{(k)};s)
        < P_{\mathrm{SEQ}}(\mathsf{C}_\mathrm{avg}^{(3)};s) 
        = P_{\mathrm{SEQ}}(\mathsf{C}_\mathrm{ukn}^{(3)};s) = 1.
    \end{align}
    for all finite $k$.
    \\

    In particular, let $\Hcal_I\simeq\Hcal_O\simeq\mathbb C^2$ and $\Hcal_E\simeq\mathbb C^{d_E}$, with $d_E\geq3$. Now take the hypothesis set $\mathsf{X}=\{0,1\}$ and the decision set $\mathsf{Y}=\mathsf{X}$. Encode the hypothesis set into the pair of qubit-qubit channels with Choi operators given by
    \begin{align}
        C_0&\coloneqq\ketbra{00}+\ketbra{11}, \label{eq::C0} \\
        C_1&\coloneqq\frac12\ketbra{-0}+\ketbra{+1}+\frac12\ketbra{-1}, \label{eq::C1} 
    \end{align}
    where $\ket{\pm}=(\ket0\pm\ket1)/\sqrt2$. 
    Then, set the channel ensembles $\mathsf{C}_\mathrm{bare}^{(k)}$, $\mathsf{C}_\mathrm{avg}^{(3)}$, and $\mathsf{C}_\mathrm{ukn}^{(3)}(\{U_x\})$ 
    according to the two access models in Eqs.~\eqref{eq::R1}--\eqref{eq::R3} and by taking channels $C_0$ and $C_1$ and an equal prior distribution $p(0)=p(1)=1/2$. These ensembles satisfy Eq.~\eqref{eq::inf_separation}. 
\end{theorem}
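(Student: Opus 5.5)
The plan has three parts. First, reduce the purified side to the averaged resource using Lemma~\ref{lemma::avg_equal_unk_sequential}. Second, exhibit an explicit three-query sequential tester that perfectly discriminates $R^{(3)}_\mathrm{avg}(C_0)$ from $R^{(3)}_\mathrm{avg}(C_1)$. Third, show separately that no sequential strategy on any finite number of bare queries discriminates $C_0$ from $C_1$ perfectly. The strict inequality in Eq.~\eqref{eq::inf_separation} then follows because the bare success probability is strictly below $1$ for every finite $k$.

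\textbf{Structure of the channels.} In the Choi convention of the paper, $C_0$ is the completely dephasing channel: it measures in the $Z$ basis and reprepares the outcome. $C_1$ is the measure-and-prepare channel
\begin{align}
C_1(\rho)=\langle +|\rho|+\rangle\,\ketbra{1}+\langle -|\rho|-\rangle\,\tfrac{\id}{2}.
\end{align}
For the purifications I will write minimal Stinespring isometries explicitly, padded into $\Hcal_E\simeq\mathbb C^{d_E}$:
\begin{align}
V_0&=\ket{00}\bra{0}+\ket{11}\bra{1},\\
V_1&=\ket{1}\ket{e_0}\bra{+}+\tfrac{1}{\sqrt2}\bigl(\ket{0}\ket{e_1}+\ket{1}\ket{e_2}\bigr)\bra{-}.
\end{align}
The point of $d_E\geq3$ is that $V_1$ needs three orthogonal environment vectors. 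Consequently, for three queries of the same purification, the antisymmetric sector $\wedge^3\Hcal_E$ is nonzero for $C_1$. For $C_0$, whose environment vectors span only a two-dimensional space, this sector is identically zero.

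\textbf{Purified side.} By Lemma~\ref{lemma::avg_equal_unk_sequential}, it suffices to treat $R^{(3)}_\mathrm{avg}$. I will decompose $\Hcal_E^{\otimes3}$ by Schur--Weyl duality and design a sequential strategy whose measurement projects the three environment registers onto the Young sectors. This is allowed because the measurement acts only at the end and commutes with $U^{\otimes3}$.

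The strategy is the following.
\begin{enumerate}[label={(\roman*)}]
\item Choose inputs adaptively, using the outputs of earlier queries, so that under hypothesis $C_0$ the joint environment state always lies in $\mathrm{Sym}^{3}$ or in a mixed-symmetry sector that is orthogonal to what $C_1$ produces.
\item Under $C_1$, steer the three queries so that the three Kraus branches appear once each. In that case the environment component lands in the antisymmetric sector, or more generally in a sector that $C_0$ cannot reach.
\end{enumerate}
The adaptivity is essential. For example, after an output $\ket{1}$ the next input can be chosen in the $\pm$ basis, so that the branches of $C_1$ diversify while $C_0$ stays within its two branches.

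Concretely, I will look for a tester $\{T_0,T_1\}\in\mathrm{SEQ}$ with $\tr[R^{(3)}_\mathrm{avg}(C_1)T_0]=0$ and $\tr[R^{(3)}_\mathrm{avg}(C_0)T_1]=0$. I will find it by solving the SDP numerically and then rationalizing the solution. Feasibility is then checked directly against Eq.~\eqref{eq:SEQ_tester}, and orthogonality by a support computation. The key fact used here is that each $R^{(3)}_\mathrm{avg}(C_x)$ is block diagonal across Schur--Weyl sectors.

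\textbf{Bare side.} Here I will use the Duan--Feng--Ying characterization of finite-copy adaptive perfect distinguishability of two channels. It is stated in terms of the span of $E_i^\dagger F_j$ over Kraus operators $E_i$ of $C_0$ and $F_j$ of $C_1$, and the precise form I rely on is that perfect distinguishability with finitely many adaptive uses is impossible exactly when $\id$ lies in this span. The Kraus operators are
\begin{align}
C_0&:\ \ket{0}\bra{0},\ \ket{1}\bra{1},\\
C_1&:\ \ket{1}\bra{+},\ \tfrac{1}{\sqrt2}\ket{0}\bra{-},\ \tfrac{1}{\sqrt2}\ket{1}\bra{-}.
\end{align}
The nonzero products are $\ket1\bra+$, $\tfrac{1}{\sqrt2}\ket0\bra-$ and $\tfrac{1}{\sqrt2}\ket1\bra-$, so the span is spanned by $\ket1\bra0$, $\ket1\bra1$ and $\ket0\bra-$.

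\textbf{Main obstacle.} Already on this naive computation the span does not contain $\ket0\bra0$, hence not $\id$, so this precise form of the criterion would give the opposite of what the theorem claims. I therefore expect the main obstacle to be this step: first, getting the criterion and its conventions exactly right (input/output ordering in the Choi operator, complex conjugation of the input, and any positivity or convex-hull refinement the criterion requires); second, if it really does not apply as stated, replacing it with a direct argument.

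The direct argument I would try first is an inductive invariant on sequential protocols. Consider the two unnormalized branch states produced by an arbitrary comb. The invariant would assert that these branches always retain a nonzero overlap. The source of the overlap is that $C_1(\rho)\geq\langle-|\rho|-\rangle\,\id/2$, which has full support, while any input avoiding the $\ket{-}$ component makes $C_1$ output $\ket{1}$. That output is also produced by $C_0$ whenever the input has weight on $\ket{1}$. The plan is to propagate this overlap through arbitrary interventions and then conclude by compactness that the bare success probability satisfies $P_{\mathrm{SEQ}}(\mathsf C^{(k)}_\mathrm{bare};s)<1$ for each finite $k$.
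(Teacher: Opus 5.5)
Your architecture (reduce R3 to R2 via Lemma~\ref{lemma::avg_equal_unk_sequential}, exhibit a three-query tester, prove finite-query impossibility for bare access) matches the paper's. However, both substantive steps are missing, and the criterion you quote for the bare side is wrong.

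\textbf{Bare side.} Only one direction of the equivalence you attribute to Duan--Feng--Ying holds. If $\id\in\mathrm{span}\{E_i^\dagger F_j\}$, finite-query perfect discrimination is impossible: telescoping $\sum_{ij}c_{ij}E_i^\dagger F_j=\id$ through isometric interventions shows that some cross term $\bra{\psi}K^{(0)\dagger}K^{(1)}\ket{\psi}$ is nonzero. The converse fails, and these two channels are a counterexample to it. So your computation $\id\notin\mathrm{span}$ does not contradict the theorem; it simply says nothing.

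What the paper uses is entanglement-assisted disjointness of output supports. The missing ingredient is the one-shot statement for an arbitrary reference. For $\ket{\psi}_{RI}=\ket{r_0}\ket{0}+\ket{r_1}\ket{1}$ the two output supports are $\mathrm{span}\{\ket{r_0}\ket{0},\ket{r_1}\ket{1}\}$ and $\mathrm{span}\{(\ket{r_0}+\ket{r_1})\ket{1},(\ket{r_0}-\ket{r_1})\ket{0},(\ket{r_0}-\ket{r_1})\ket{1}\}$. They always share $\ket{r_1}\ket{1}$, or $\ket{r_0}\ket{0}$ if $\ket{r_1}=0$. Two corrections to your fallback follow:
\begin{itemize}
\item Your invariant must be exactly ``the two joint memory-plus-output states share a nonzero support vector''. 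A fidelity-type overlap is not preserved when the branches pass through different channels; two distinct unitaries reach exact orthogonality after finitely many adaptive uses.
\item Your single-system heuristic, that $\mathcal{C}_1(\rho)\geq\bra{-}\rho\ket{-}\,\id/2$ has full support, stops applying once a reference is present. The common vector instead comes from combining the $\ket{1}\!\bra{+}$ and $\ket{1}\!\bra{-}$ branches.
\end{itemize}
With the right invariant, the propagation you leave open is one line. Suppose $\ket{v}$ lies in both supports before a query, and let $\Lambda$ be any intervention. Any nonzero $\ket{w}\in\supp\Lambda(\ket{v}\!\bra{v})$ lies in both intermediate supports, because $\sigma_x\geq\epsilon\ket{v}\!\bra{v}$ and positivity give the support inclusions. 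Applying the one-shot statement to $\ket{w}$ then gives a common vector after the query. Hence the final states are never orthogonal, and attainment of the maximum for each finite $k$ gives $P_{\mathrm{SEQ}}(\mathsf{C}^{(k)}_{\mathrm{bare}};s)<1$.

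\textbf{Purified side.} ``Solve the SDP numerically and rationalize'' is a search procedure, not a proof. The Schur--Weyl heuristic also cannot give zero error by itself. Since $\wedge^3\supp C_0=0$, the $[1,1,1]$ environment sector only flags $C_1$ unambiguously, while the $[3]$ and $[2,1]$ sectors are populated under both hypotheses. Moreover, the Kraus branches of $\mathcal{C}_1$ cannot be forced to occur once each; the two $\bra{-}$ branches always arise together.

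The idea you are missing is a coherent difference of two histories that exploits $\supp C_1=\ket{+0}^{\perp}$. Every $\ket{v}=\sum_{ab}\ket{ab}_{IO}\ket{e_{ab}}_E\in\supp C_1\otimes\Hcal_E$ satisfies $\ket{e_{00}}=-\ket{e_{10}}$. Therefore the map $\bra{11}_{I_1O_1}\otimes\bra{00}_{I_2O_2}\otimes\bra{00}_{I_3O_3}\otimes\id_{E}-\bra{10}_{I_1O_1}\otimes\bra{11}_{I_2O_2}\otimes\bra{10}_{I_3O_3}\otimes F_{E_1E_2}\otimes\id_{E_3}$ sends $\ket{v}^{\otimes 3}$ to $\ket{e_{11}}\otimes(\ket{e_{00}}\ket{e_{00}}-\ket{e_{10}}\ket{e_{10}})=0$. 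It thus annihilates every $\ket{V_{C_1,U}}^{\otimes3}$, and hence $\supp R^{(3)}_{\mathrm{avg}}(C_1)$. For $C_0$, the first history is label-matched and survives, while the second is killed by the mismatch $(I_1,O_1)=(1,0)$.

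The paper takes $T_0=XX^\dagger$, with $X^\dagger$ this map, tensored with $\ket{0}\!\bra{0}_{O_3}$. It then completes $T_0$ causally with terms that each contain an input--output mismatch, so they vanish on $C_0$'s support. The two histories carry different $I_3$ labels, so their coherence drops out of the causal marginal; this is what the third query buys. Because the annihilation holds for each fixed purification, the same tester also covers R3 directly.
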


The complete proof is given in Appendix~\ref{app:: perfect discrimination k copies}. The first part of the proof relies on the following lemma:

\newpage

\begin{lemma}
    The quantum channels with Choi operators $C_0$ and $C_1$ in Eqs.~\eqref{eq::C0} and ~\eqref{eq::C1} are not entanglement-assisted disjoint. 
\end{lemma}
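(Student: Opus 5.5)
The plan is to characterise ``entanglement-assisted disjointness'' in Kraus form and then check it directly for $C_0$ and $C_1$. Two channels are entanglement-assisted disjoint when some input $\rho_{AI}$, with $A$ an arbitrary ancilla, gives orthogonal outputs under the two channels. By purification and the Schmidt decomposition, it suffices to take a pure input $(\id_A\otimes\sqrt{\rho})\ket{\Omega}$ with $\rho$ a state on $\Hcal_I$, up to transposition conventions. The outputs are then positive operators $\sigma_0,\sigma_1$, and they are orthogonal iff $\tr(\sigma_0\sigma_1)=0$. Writing $\mathcal C_0(\cdot)=\sum_i K_i\cdot K_i^\dagger$ and $\mathcal C_1(\cdot)=\sum_j L_j\cdot L_j^\dagger$, a direct computation gives
\begin{align}
\tr(\sigma_0\sigma_1)=\sum_{i,j}\left|\tr\!\left(\rho\,K_i^\dagger L_j\right)\right|^2 .
\end{align}
Hence the two channels are entanglement-assisted disjoint iff there is a density operator $\rho$ with $\tr(\rho K_i^\dagger L_j)=0$ for all $i,j$. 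The statement therefore reduces to showing that no such $\rho$ exists.

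Next I read off Kraus operators from the Choi operators, using $\ket{V}=\sum_i\ket{i}K\ket{i}$. For $C_0$, $K_1=\ketbra{0}$ and $K_2=\ketbra{1}$, so $\mathcal C_0$ is the completely dephasing channel. For $C_1$, I get $L_1=\tfrac{1}{\sqrt2}\ket{0}\!\bra{-}$, $L_2=\ket{1}\!\bra{+}$ and $L_3=\tfrac1{\sqrt2}\ket{1}\!\bra{-}$. As a consistency check, $\sum_j L_j^\dagger L_j=\id$. The nonzero products are $K_1^\dagger L_1\propto\ket{0}\!\bra{-}$, $K_2^\dagger L_2=\ket{1}\!\bra{+}$ and $K_2^\dagger L_3\propto\ket{1}\!\bra{-}$; the remaining products vanish.

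The conditions on $\rho$ are therefore $\bra{-}\rho\ket{0}=\bra{+}\rho\ket{1}=\bra{-}\rho\ket{1}=0$. The last two say that $\rho\ket{1}$ is orthogonal to both $\ket{+}$ and $\ket{-}$, so $\rho\ket{1}=0$. By positivity and hermiticity this forces $\rho=\ketbra{0}$. But then $\bra{-}\rho\ket{0}=\braket{-}{0}=1/\sqrt2\neq0$, a contradiction. Hence $\tr(\sigma_0\sigma_1)>0$ for every input, and the channels are not entanglement-assisted disjoint.

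I expect the main obstacle to be fixing the definition and sign conventions: the first step must match the paper's notion of entanglement-assisted disjointness, and the transpose or partial-trace conventions in the Choi-vector-to-Kraus conversion must be applied consistently. A potential pitfall is using the ``identity in the span'' criterion. Since $\mathrm{span}\{K_i^\dagger L_j\}$ is not closed under adjoints, the correct test is the existence of a density operator orthogonal to this span. Checking $\id\notin\mathrm{span}$ alone would wrongly suggest disjointness, so I will argue directly with $\rho$ as above.
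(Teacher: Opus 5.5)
Your Kraus operators and the single-shot calculation are correct: no input, with any ancilla, gives orthogonal outputs. However, that is not the property the lemma asserts. In the paper, entanglement-assisted disjointness means that some input produces outputs whose \emph{supports} intersect trivially, $\mathcal{S}_0(\psi)\cap\mathcal{S}_1(\psi)=\{0\}$. The lemma therefore claims that for every input $\ket{\psi}_{RI}$ the two output supports share a nonzero vector.

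Trivial intersection is much weaker than orthogonality. So ``never orthogonal'' is strictly weaker than ``never trivially intersecting'', and it cannot support the finite-query criterion the lemma feeds into. Take $U=\id$ and $V=e^{-i\theta Z}$ with small $\theta\neq 0$. No input gives orthogonal outputs, since $0$ lies outside the numerical range of $U^\dagger V$. Yet generic inputs give non-parallel pure outputs with trivially intersecting supports, and finitely many queries do discriminate the two unitaries perfectly. What you have actually shown is that a single entanglement-assisted query cannot perfectly discriminate $C_0$ and $C_1$, which says nothing about $k\geq 2$ queries.

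The missing step is short with your operators. For a pure input, $\mathcal{S}_0(\psi)=\mathrm{span}\{(\id_R\otimes K_i)\ket{\psi}\}$ and $\mathcal{S}_1(\psi)=\mathrm{span}\{(\id_R\otimes L_j)\ket{\psi}\}$. Since $K_2=\ket{1}\!\bra{1}=\tfrac{1}{\sqrt2}L_2-L_3$, the vector $(\id_R\otimes K_2)\ket{\psi}$ lies in both supports. If that vector vanishes, then $\ket{\psi}=\ket{r_0}\ket{0}$, and $(\id_R\otimes K_1)\ket{\psi}=\ket{r_0}\ket{0}=2(\id_R\otimes L_1)\ket{\psi}\neq 0$ lies in both. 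Mixed inputs follow because their output supports contain those of their pure components. This is exactly the paper's argument, phrased there via $R_\psi$ and $\supp C_i$ with common vectors $\ket{r_1}\ket{1}$ or $\ket{r_0}\ket{0}$. It is this support-level statement, not the absence of orthogonal outputs, that survives arbitrary intermediate interventions and rules out perfect discrimination with any finite number of bare queries.
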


Indeed, for every input state, including inputs entangled with an arbitrary reference system, the two output supports have a nonzero intersection. By the finite-query criterion of Ref.~\cite{Duan09}, these channels cannot be perfectly discriminated with a sequential strategy using any finite number of bare queries. The second part of the proof is accomplished by constructing an explicit sequential tester that achieves perfect discrimination with three queries of the averaged purification resource.

\section{Consequences of advantages from purification}
\label{sec: gep implies non simulability}

Our results showing advantages of purified resources have interesting consequences to channel discrimination and to the problem of the convertibility of bare channel uses (R1) into the average over all possible purifications (R2). Here we discuss these consequences.

\begin{corollary}[Sequential over parallel strategy advantage in quantum-to-classical tasks]\label{cor::PARvsSEQ}
    Any strict separation 
    \begin{align}
    \begin{split}
        P_{\mathrm{SEQ}}(\mathsf{C}_\mathrm{bare}^{(k)};s)
        &<P_{\mathrm{SEQ}}(\mathsf{C}_\mathrm{avg}^{(m)};s) \\
        &\implies\\
        P_{\mathrm{PAR}}(\mathsf{C}_\mathrm{avg}^{(k)};s)
        &<P_{\mathrm{SEQ}}(\mathsf{C}_\mathrm{avg}^{(m)};s)
    \end{split}
    \end{align}
    for the same quantum-to-classical task, same number of queries, and same channel ensemble. Moreover, such a strict separation exists.
\end{corollary}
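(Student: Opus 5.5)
The implication follows from a short chain of equalities and inequalities, using only Proposition~\ref{prop:unk-eq-avg-channel} and the inclusion $\mathrm{PAR}\subseteq\mathrm{SEQ}$. Assume the hypothesis $P_{\mathrm{SEQ}}(\mathsf{C}_\mathrm{bare}^{(k)};s)<P_{\mathrm{SEQ}}(\mathsf{C}_\mathrm{avg}^{(m)};s)$.

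First, I will observe that every parallel tester is also a sequential tester. The parallel constraints of Eq.~\eqref{eq:PAR_tester} imply those of Eq.~\eqref{eq:SEQ_tester}: trace-and-replace on all outputs $\widetilde O_1\ldots\widetilde O_k$ forces the nested causal conditions. This is a direct computation with the trace-and-replace maps, or equivalently follows because a parallel circuit is a sequential circuit with trivial interventions. Hence
\begin{align}
P_{\mathrm{PAR}}(\mathsf{C}_\mathrm{bare}^{(k)};s)\le P_{\mathrm{SEQ}}(\mathsf{C}_\mathrm{bare}^{(k)};s).
\end{align}
Next, Proposition~\ref{prop:unk-eq-avg-channel} gives
\begin{align}
P_{\mathrm{PAR}}(\mathsf{C}_\mathrm{avg}^{(k)};s)=P_{\mathrm{PAR}}(\mathsf{C}_\mathrm{bare}^{(k)};s).
\end{align}
Chaining these with the hypothesis yields
\begin{align}
P_{\mathrm{PAR}}(\mathsf{C}_\mathrm{avg}^{(k)};s)\le P_{\mathrm{SEQ}}(\mathsf{C}_\mathrm{bare}^{(k)};s)<P_{\mathrm{SEQ}}(\mathsf{C}_\mathrm{avg}^{(m)};s),
\end{align}
which is the claimed implication. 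By Lemma~\ref{lemma::avg_equal_unk_sequential} and the same proposition, the identical statement holds with $\mathsf{C}_\mathrm{ukn}$ in place of $\mathsf{C}_\mathrm{avg}$.

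For existence, I will instantiate the implication with Theorem~\ref{thm:explicit-2copies-gap}, taking $k=m=2$, the channels $C_\pm(t)$, and any $0<t<1/(2\sqrt5)$. The theorem gives $P_{\mathrm{SEQ}}(\mathsf{C}_\mathrm{bare}^{(2)};s)<P_{\mathrm{SEQ}}(\mathsf{C}_\mathrm{avg}^{(2)};s)$, so the implication gives $P_{\mathrm{PAR}}(\mathsf{C}_\mathrm{avg}^{(2)};s)<P_{\mathrm{SEQ}}(\mathsf{C}_\mathrm{avg}^{(2)};s)$. This is a strict sequential-over-parallel advantage with the same task, the same number of queries and the same ensemble.

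Alternatively, Theorem~\ref{thm: perfect discrimination 3 copies} with $k=m=3$ gives $P_{\mathrm{PAR}}(\mathsf{C}_\mathrm{avg}^{(3)};s)<1=P_{\mathrm{SEQ}}(\mathsf{C}_\mathrm{avg}^{(3)};s)$. Taking $m=3$ and arbitrary finite $k$ gives the stronger conclusion that no finite number of parallel queries to the averaged purification attains perfect discrimination.

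There is no real obstacle here. The only step needing care is checking $\mathrm{PAR}\subseteq\mathrm{SEQ}$ at the level of the tester constraints. One must verify that ${}_{\widetilde O_1\ldots\widetilde O_k}T=T$ implies ${}_{\widetilde O_k}T=T$ and the conditions ${}_{I_j\widetilde O_j\cdots}T={}_{\widetilde O_{j-1}I_j\widetilde O_j\cdots}T$. Both follow because trace-and-replace maps on distinct subsystems commute and are idempotent.
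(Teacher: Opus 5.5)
Your proof is correct and is essentially the paper's argument: you chain $P_{\mathrm{PAR}}(\mathsf{C}_\mathrm{avg}^{(k)};s)=P_{\mathrm{PAR}}(\mathsf{C}_\mathrm{bare}^{(k)};s)\le P_{\mathrm{SEQ}}(\mathsf{C}_\mathrm{bare}^{(k)};s)<P_{\mathrm{SEQ}}(\mathsf{C}_\mathrm{avg}^{(m)};s)$ using Proposition~\ref{prop:unk-eq-avg-channel} and $\mathrm{PAR}\subseteq\mathrm{SEQ}$ (which you check at the level of the tester constraints, more explicitly than the paper), then instantiate the hypothesis with a proven separation. The only difference is that your primary witness is Theorem~\ref{thm:explicit-2copies-gap} with $k=m=2$, whereas the paper uses Theorem~\ref{thm: perfect discrimination 3 copies} with $m=3$ and arbitrary finite $k$, which you also give as your alternative.
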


\begin{proof}
    It follows from Prop.~\ref{prop:unk-eq-avg-channel} that for all score functions $s$, all number of queries $k$, and all channel ensembles,
    \begin{align}
        P_{\mathrm{PAR}}(\mathsf{C}_\mathrm{avg}^{(k)};s) = P_{\mathrm{PAR}}(\mathsf{C}_\mathrm{bare}^{(k)};s).
    \end{align}
    From the fact that parallel strategies are a subset of sequential strategies, it follows that 
    \begin{align}
        P_{\mathrm{PAR}}(\mathsf{C}_\mathrm{bare}^{(k)};s) \leq P_{\mathrm{SEQ}}(\mathsf{C}_\mathrm{bare}^{(k)};s).
    \end{align}
    Now, if there exists a quantum-to-classical task such that for some $s$, $k$, $m$, and channel ensemble
    \begin{align}\label{eq::stricthyp}
        P_{\mathrm{SEQ}}(\mathsf{C}_\mathrm{bare}^{(k)};s) < P_{\mathrm{SEQ}}(\mathsf{C}_\mathrm{avg}^{(m)};s),
    \end{align}
    the chain of inequalities above implies that 
    \begin{align}\label{eq::strictcons}
        P_{\mathrm{SEQ}}(\mathsf{C}_\mathrm{bare}^{(k)};s) < P_{\mathrm{SEQ}}(\mathsf{C}_\mathrm{avg}^{(m)};s).
    \end{align}
    The existence of such an example satisfying the strict inequality in Eq.~\eqref{eq::stricthyp} is guaranteed by Thm.~\ref{thm: perfect discrimination 3 copies}, in particular, where $m=3$ and $k$ is any natural number. Hence, there also exists such an example satisfying the strict inequality in Eq.~\eqref{eq::strictcons}.
\end{proof}

The resource $R^{(k)}_\mathrm{avg}(C)$ is an example of a non-signaling channel. Therefore, this corollary implies the advantage of parallel versus sequential strategies in the discrimination of nonsignalling channels. See Fig.~\ref{fig:plotPARvsSEQvsSEQ} for a numerical plot of the probabilities of success in the example of Thm.~\ref{thm:explicit-2copies-gap}.
\\

\begin{figure}
\begin{center}
     \includegraphics[width=\columnwidth]{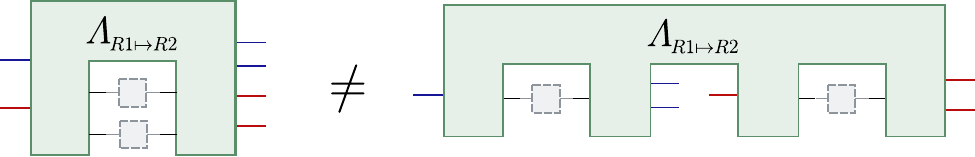}
     \caption{\textbf{Implementation constraints of the transformation converting $\mathrm{R1}$ into $\mathrm{R2}$ (Corollary~\ref{cor::noSEQpurificationsuperchannel}).} The transformation $\Lambda_{\mathrm{R1}\mapsto\mathrm{R2}}$ that can universally transform bare queries of a quantum channel $R^{(k)}_\mathrm{bare}(C)$ into the average of its purifications $R^{(k)}_\mathrm{avg}(C)$ can be implemented by a quantum parallel circuit (left) but not as a sequential quantum circuit that allows for arbitrary interventions in between queries. The image illustrates the case where $k=m=2$.}
     \label{fig:avge purif superchannel}
\end{center}
\end{figure}

We now show how our main results rule out the possibility of converting bare channel queries into the averaged purification via sequential-strategy preserving transformations. The impossibility of such transformation was first shown in Ref.~\cite{yoshida2026randomdilationsuperchannel}, which established their result as a consequence of the query complexity lower bounds on quantum-circuit simulations of the quantum switch~\cite{Bavaresco2025Switch}. More precisely, they prove the impossibility of converting $k\leq\max\{2,2^n-1\}$ queries of an $n$-qubit channel (meaning $d_I=d_O=2^n$) into two queries of the averaged dilation with a strategy that allows for arbitrary interventions in-between queries.

In Corollary~\ref{cor::noSEQpurificationsuperchannel} below, we formalize this concept in terms of quantum tester transformations and use Thm.~\ref{thm:explicit-2copies-gap} to provide a simple proof that two bare queries cannot be converted into two queries of the averaged dilation with a strategy that allows for arbitrary intermediate interventions. Also in Corollary~\ref{cor::noSEQpurificationsuperchannel}, we use Thm.~\ref{thm: perfect discrimination 3 copies} to prove that no finite number of bare channel queries allows for universal conversion into three or more averaged-purification queries while preserving sequential strategies. This impossibility already holds for qubit-qubit channels.

\begin{corollary}[Impossibility of conversion of bare channel queries into the averaged purification via sequential-strategy preserving transformations]\label{cor::noSEQpurificationsuperchannel}
    There does not exist a linear, positive, and causally well-defined transformation $\Lambda_{\mathrm{R1}\mapsto\mathrm{R2}}$ (i.e. implementable by a quantum circuit) that simultaneously satisfies the following two conditions:
    \begin{enumerate}[label={$(\alph*)$}]
        \item $\Lambda_{\mathrm{R1}\mapsto\mathrm{R2}}({R}_\mathrm{bare}^{(k)}(C)) = {R}_\mathrm{avg}^{(m)}(C) \ \ \forall \ C$
        \item $\forall \ \{T^{(m)}_y\}_y\in\mathrm{SEQ}, \quad \left\{\Lambda_{\mathrm{R1}\mapsto\mathrm{R2}}^\dagger\left(T^{(m)}_y\right)\right\}_y\in\mathrm{SEQ}$,
    \end{enumerate}
    if 
    \begin{align}
        m = k = 2 \quad \text{or} \quad
        m \geq 3,\ \forall \ k.
    \end{align}

    Consequently, it is impossible to implement a transformation $\Lambda_{\mathrm{R1}\mapsto\mathrm{R2}}$ that converts $k$ bare channel queries into $m$ averaged purification queries with a quantum circuit that accepts arbitrary interventions in-between queries, as depicted in Fig.~\ref{fig:avge purif superchannel}, for $m = k = 2$ and for $m \geq 3,\ \forall \ k$.
\end{corollary}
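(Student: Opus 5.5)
The proof goes by contradiction, combining a data-processing argument with the two separation theorems. Suppose a transformation $\Lambda_{\mathrm{R1}\mapsto\mathrm{R2}}$ satisfying (a) and (b) exists for some pair $(k,m)$. Fix any channel ensemble $\{p(x),C_x\}$ and score $s$. Let $\{T^{(m)}_y\}_y\in\mathrm{SEQ}$ be an optimal sequential tester for $\mathsf{C}_\mathrm{avg}^{(m)}$. By linearity and condition (a),
\begin{align}
\tr[R_\mathrm{avg}^{(m)}(C_x)\,T^{(m)}_y]=\tr[\Lambda_{\mathrm{R1}\mapsto\mathrm{R2}}(R_\mathrm{bare}^{(k)}(C_x))\,T^{(m)}_y]=\tr[R_\mathrm{bare}^{(k)}(C_x)\,\Lambda^\dagger_{\mathrm{R1}\mapsto\mathrm{R2}}(T^{(m)}_y)].
\end{align}
By condition (b), $\{\Lambda^\dagger_{\mathrm{R1}\mapsto\mathrm{R2}}(T^{(m)}_y)\}_y$ is a valid sequential tester on $k$ bare queries. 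Integrating against $p(x)s(x,y)$ then gives the data-processing inequality
\begin{align}
P_{\mathrm{SEQ}}(\mathsf{C}_\mathrm{avg}^{(m)};s)\le P_{\mathrm{SEQ}}(\mathsf{C}_\mathrm{bare}^{(k)};s)
\end{align}
for every ensemble and every score. This mirrors the argument in the proof of Prop.~\ref{prop:unk-eq-avg-channel}, with $\mathrm{SEQ}$ in place of $\mathrm{PAR}$. Only linearity of $\Lambda$ and condition (b) are used; positivity and causality enter only to make (b) meaningful.

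\emph{Case $m=k=2$.} Take the ensemble of Thm.~\ref{thm:explicit-2copies-gap} with any $0<t<1/(2\sqrt5)$ and $s(x,y)=\delta_{x,y}$. That theorem gives $P_{\mathrm{SEQ}}(\mathsf{C}_\mathrm{avg}^{(2)}(t);s)>P_{\mathrm{SEQ}}(\mathsf{C}_\mathrm{bare}^{(2)}(t);s)$, which contradicts the inequality above. Since (a) is required for all $C$, it holds in particular for $C_\pm(t)$, so the contradiction applies.

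\emph{Case $m\ge 3$, arbitrary finite $k$.} Use the channels $C_0,C_1$ of Thm.~\ref{thm: perfect discrimination 3 copies}, for which $P_{\mathrm{SEQ}}(\mathsf{C}_\mathrm{avg}^{(3)};s)=1>P_{\mathrm{SEQ}}(\mathsf{C}_\mathrm{bare}^{(k)};s)$ for all finite $k$. To go from $m=3$ to general $m\ge3$, I need monotonicity in the number of averaged queries:
\begin{align}
P_{\mathrm{SEQ}}(\mathsf{C}_\mathrm{avg}^{(m)};s)\ge P_{\mathrm{SEQ}}(\mathsf{C}_\mathrm{avg}^{(3)};s)=1.
\end{align}
This holds because a sequential tester on $m$ queries can ignore the last $m-3$ queries: feed them a fixed input and trace out their outputs, including the environments. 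Since $\tr_{\widetilde O}$ of a single query's Choi operator is the identity on its input, the partial trace of $R_\mathrm{avg}^{(m)}(C)$ over the extra queries is proportional to $R_\mathrm{avg}^{(3)}(C)\otimes$ (fixed states). Equivalently, the padded tester $T^{(3)}_y\otimes \rho_{I}\otimes\id_{\widetilde O}$ on the extra slots satisfies the constraints of Eq.~\eqref{eq:SEQ_tester} and reproduces the three-query statistics. The data-processing inequality then gives $1\le P_{\mathrm{SEQ}}(\mathsf{C}_\mathrm{bare}^{(k)};s)<1$, a contradiction.

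The step needing the most care is this monotonicity padding. One must check that the Haar average commutes with discarding the extra queries: the marginal of $\int[(\id\otimes U)\ketbra{V_C}(\id\otimes U^\dagger)]^{\otimes m}\mathrm{d}U$ on the first three queries is $\int[\cdot]^{\otimes 3}\mathrm{d}U$, because tracing a query over $O E$ leaves $\id_I$ independently of $U$. One must also check that appending discarded queries at the end preserves the sequential trace-and-replace conditions. Both are routine. The final sentence of the corollary about circuits with arbitrary interventions follows because any such circuit maps sequential testers to sequential testers under $\Lambda^\dagger$, which is exactly condition (b).
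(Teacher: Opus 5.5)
Your proof is correct and follows the paper's argument. You assume such a $\Lambda_{\mathrm{R1}\mapsto\mathrm{R2}}$ exists, use (a) and (b) to derive the data-processing inequality $P_{\mathrm{SEQ}}(\mathsf{C}_\mathrm{bare}^{(k)};s)\geq P_{\mathrm{SEQ}}(\mathsf{C}_\mathrm{avg}^{(m)};s)$, and contradict Thms.~\ref{thm:explicit-2copies-gap} and~\ref{thm: perfect discrimination 3 copies}. The paper does not spell out the padding step showing $P_{\mathrm{SEQ}}(\mathsf{C}_\mathrm{avg}^{(m)};s)\geq P_{\mathrm{SEQ}}(\mathsf{C}_\mathrm{avg}^{(3)};s)$ for $m\geq 3$, which is needed to cover all $m\geq3$ rather than only $m=3$; your version of that step is correct.
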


\begin{proof}
    The proof goes by contradiction. Assume there exists a map $\Lambda_{\mathrm{R1}\mapsto\mathrm{R2}}$ that, satisfies conditions (a) and (b) for the specified values of $k$ and $m$. Then, for all quantum-to-classical tasks, it follows that 
    \begin{align}
        P_{\mathrm{SEQ}}(\mathsf{C}_\mathrm{bare}^{(k)};s)
        \geq P_{\mathrm{SEQ}}(\mathsf{C}_\mathrm{avg}^{(m)};s),
    \end{align}
    which contradicts Thms.~\ref{thm:explicit-2copies-gap} and~\ref{thm: perfect discrimination 3 copies}.
\end{proof}

\section{Separation between averaged purification and fixed unknown purification in quantum-to-quantum tasks}
\label{sec:q2q tasks}

In the previous sections, we showed that our three quantum channel access models perform equally in quantum-to-classical tasks involving parallel strategies (including the case of quantum states) and proved a strict advantage of purified resources over bare channel queries in quantum-to-classical tasks involving sequential strategies. Yet, even in the latter case, the two purified access models---namely, access to several queries of a fixed unknown purification or access to the average over all possible purifications---showcase an equal performance: such a fact is a consequence of the linear dependence of the decision function in quantum-to-classical tasks on the channel resource. It then raises the question: is there a scenario where the two purified models of channel access can be operationally distinguished?

This is the case of some quantum-to-quantum tasks---tasks in which the output is itself a quantum object and the performance depends on properties of this quantum output, rather than a classical score obtained by measuring the quantum output. Consider the example of a task where the referee provides quantum channels to the players under different access models. In turn, the players must return a quantum channel to the referee, who checks the winning condition: if the returned channel is equal to one of the possible purifications of the input channel, the game is won, otherwise it is lost. Trivially, the player that was provided with access to a fixed unknown purification, can win the game in every round. On the other hand, the player that was provided with access to the averaged purification cannot convert it into a fixed purification, since it is not possible to universally convert the average purification resource into a fixed unknown purification~\cite{Kleinmann2006physical,Kleinmann2007purifying,liu2026universalpurificationquantummechanics,deltoro2026probabilisticapproximateuniversalquantum}. Hence, players with access to the average purification resource cannot in the game with probability one. A diagram depicting an example of a quantum-to-classical and a quantum-to-quantum tasks under different access models is shown in Fig.~\ref{fig::tasks}.

\section{Conclusion}
\label{sec:discussion}

\begin{figure}
        \centering
        \includegraphics[width=\columnwidth]{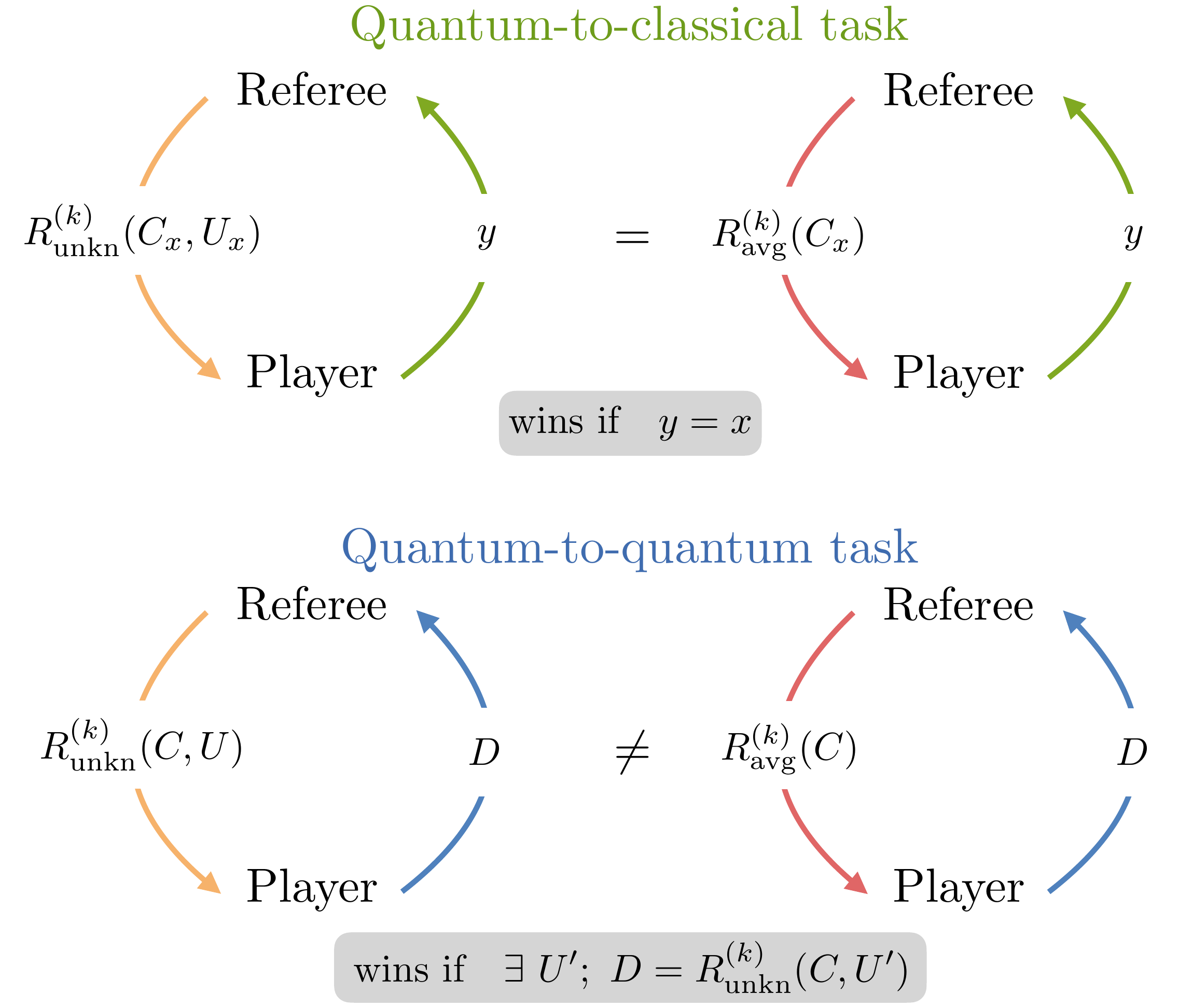}
        \caption{\textbf{Quantum-to-classical and quantum-to-quantum tasks.} Top: Example of a quantum-to-classical task related to discrimination. In all quantum-to-classical tasks, access to fixed unknown purifications (left) and averaged purification (right) achieves equal performance. Bottom: Example of a quantum-to-quantum task where access to fixed unknown purifications (left) outperforms access to averaged purification (right). This task shows how the non-convertibility of these resources can be phrased in operational terms.}
        \label{fig::tasks}
\end{figure}

We have studied the operational value of purification under three models of access to quantum channels: access to bare channel queries, access to averaged purification queries, and access to fixed but unknown purification queries. For quantum-to-classical tasks, the two purified access models are operationally equivalent for both parallel~\cite{Chen2025quantumchannel,chen2025localtestunitarilyinvariant} and sequential strategies. Moreover, in the parallel setting, neither purified access model provides an advantage over bare queries, recovering a result from Refs.~\cite{Chen2025quantumchannel,chen2025localtestunitarilyinvariant}. In contrast, this equivalence breaks down for sequential strategies: we exhibited a pair of qubit-qubit channel families for which two purified queries yield a strictly larger discrimination probability than two bare queries, and another pair that can be perfectly discriminated with three purified queries while remaining impossible to perfectly discriminate with any finite number of bare queries. These separations also imply an advantage of sequential over parallel strategies for the corresponding resources. Moreover, they provide an operational proof that bare channel queries cannot, in general, be converted into averaged purification queries by transformations that allow for arbitrary intermediary interventions.

The present results show that a purification is not merely a mathematical representation of a channel: correlations carried by the purifying systems can have observable operational consequences. A common environment unitary correlates the purified outputs across sequential queries, and Haar averaging organizes these correlations into invariant permutation sectors rather than destroying them. A sequential protocol can retain information from an earlier environment and combine it with a later output. The two-query example exploits the distinct Helstrom contributions of the symmetric and antisymmetric sectors, whereas the three-query construction turns relational information across the purified outputs into an exact support separation. This also explains why our results do not contradict the absence of an advantage in the parallel setting. When bare channel queries are jointly available, a parallel quantum circuit can convert them into the averaged purification resource~\cite{yoshida2026randomdilationsuperchannel,girardi2026randomstinespringsuperchannelconverting}; data processing then precludes any improvement in parallel quantum-to-classical tasks. In the sequential setting, however, the parallel circuit implementing this conversion cannot be combined with arbitrary intermediate interventions without violating its causal constraints. Our discrimination gaps show that this is not merely a limitation of one particular implementation of the conversion, but rather a general obstruction that cannot be circumvented.

At the same time, the equivalence between averaged and fixed unknown purifications in quantum-to-classical tasks does not imply that the two resources are fully interchangeable. We showed that they can be operationally distinguished in quantum-to-quantum tasks. Although our present example is deliberately constructed to expose their difference, this operational comparison clarifies what averaging over the environment does and does not erase. This leaves several natural open questions. In particular, it would be interesting to identify operationally motivated quantum-to-quantum tasks that separate the two purification access models, to characterize which channel families exhibit a purification advantage under sequential strategies for quantum-to-classical tasks.

\vspace*{11pt}
\noindent\textit{Acknowledgments.}
We acknowledge financial support from the French national quantum initiative managed by
Agence Nationale de la Recherche (ANR) in the framework of France 2030 through project EPIQ, with reference ANR-22-PETQ-0007 and the JCJC programme under grant number ANR-25-CE47-6396-01-HOQO-KS.

\vspace*{11pt}
\noindent\textit{AI disclosure.}
All analytical results were derived by the authors, who also created the figures.
OpenAI's ChatGPT-5.5 on Codex was used to write the code that produced the numerical plots. OpenAI's ChatGPT-6 Astra was used for grammar, spelling checks and text rephrasing. The authors take full responsibility for the content of the work.


%

\clearpage 
\appendix
\onecolumngrid
\newpage
\section*{APPENDIX}


\section{Weingarten expansion of the averaged purification}\label{ap:: expansion avge purif comb}

This appendix derives an explicit representation of the averaged purification introduced in Sec.~\ref{sec: States, channels, and purification resources }. he calculation is based on the unitary Weingarten calculus and its permutation expansion
\cite{CollinsSniady2006,Collins2022}. We include the relevant steps to make clear how the Haar integral produces the channel--environment correlations used later in the appendix. First, we derive a formula valid for an arbitrary number $k$ of uses. Second, we specialize this formula to $k=2$ and $k=3$, which are the cases used in the proofs of the two-query advantage and the three-query perfect-discrimination result.

Let $\Hcal_A=\Hcal_{I}\otimes \Hcal_O$ and let $\Hcal_E$ be the Hilbert space on which the averaged unitary acts. We compute
 \begin{align}
R_{avg}^{(k)}(C)=\mathbb E_{U\sim \operatorname{Haar}(\mathsf U(\Hcal_E))}\left[\left((\id_A\otimes U)\ketbra{V_C}(\id_A\otimes U^\dagger)\right)^{\otimes k}\right].
 \end{align}
Equivalently,
 \begin{align}\label{eq: def average purificaiton}
R_{avg}^{(k)}(C)=\int_{\mathsf U(\Hcal_E)}\left(\id_{A}^{\otimes k}\otimes U^{\otimes k}\right)\ketbra{V_C}^{\otimes k}\left(\id_{A}^{\otimes k}\otimes (U^\dagger)^{\otimes k}\right)\,\mathrm dU.
 \end{align}
 We now evaluate this Haar integral using unitary Weingarten calculus. Let $\mathsf S_k$ denote the symmetric group of permutations of the
$k$ copies. For $\pi\in \mathsf{S}_k$, let $W_\pi^X$ denote the permutation operator on $\Hcal_X^{\otimes k}$ associated with $\pi$, defined by
 \begin{align}
W_\pi^X(x_1\otimes\cdots\otimes x_k)=x_{\pi^{-1}(1)}\otimes\cdots\otimes x_{\pi^{-1}(k)},\qquad X=A,E.
 \end{align}
 
The Haar integral in Eq.\eqref{eq: def average purificaiton} averages over a choice of environment reference frame. After this averaging, the resulting operator cannot depend on any particular basis  of $\Hcal_E$. More precisely, this part has to be invariant under every collective unitary transformation $V^{\otimes k}$. By Schur--Weyl duality \cite{CollinsSniady2006,Collins2022}, the operators with this invariance are generated by permutations of the $k$ environment copies. Thus, the Haar integral can be expanded in terms of the operators $W_{\sigma}^{E}$.

The permutation operators are not orthogonal in general. Their overlaps
are given by
\begin{align}
\tr\left(W_\sigma^E W_{\pi^{-1}}^E\right)=d_E^{\#(\sigma\pi^{-1})},
\end{align}
where $\#(\rho)$ denotes the number of cycles of the permutation
$\rho$. The unitary Weingarten function$\operatorname{Wg}_{d_E,k}$ is the inverse of this overlap matrix \cite{CollinsSniady2006,Collins2022}. It therefore determines the coefficients of the permutation expansion. For$d_E\geq k$, this gives

 \begin{align}
\int_{\mathsf U(\Hcal_E)}\left(\id_{A}^{\otimes k}\otimes U^{\otimes k}\right)X\left(\id_{A}^{\otimes k}\otimes (U^\dagger)^{\otimes k}\right)\,\mathrm dU=\sum_{\pi,\sigma\in \mathsf{S}_k}\operatorname{Wg}_{d,k}(\pi^{-1}\sigma)\tr_{E^{\otimes k}} \left[X\left(\id_{A}^{\otimes k}\otimes W_{\pi^{-1}}^E\right)\right]\otimes W_\sigma^E,
 \end{align}
for every
$X\in\mathsf L(\Hcal_A^{\otimes k}\otimes\Hcal_E^{\otimes k})$.
When $d_E<k$, the same expression is understood using the
Moore--Penrose inverse of the permutation Gram matrix.

Substituting into the definition of the average purification \eqref{eq: def average purificaiton} gives
 \begin{align}\label{eq: expanded permutation Scal k}
R_{avg}^{(k)}(C)&=\sum_{\pi,\sigma\in \mathsf{S}_k}\operatorname{Wg}_{d_E,k}(\pi^{-1}\sigma)\tr_{E^{\otimes k}}\left[\ketbra{V_C}^{\otimes k}(\id_A\otimes W^{E}_{\pi^{-1}})\right]\otimes W_\sigma^E\\
&= \sum_{\pi,\sigma\in \mathsf{S}_k}\operatorname{Wg}_{d_E,k}(\pi^{-1}\sigma)\tr_{E^{\otimes k}}\left[\ketbra{V_C}^{\otimes k}(W^{A}_{\pi}\otimes \id_E)\right]\otimes W_\sigma^E\\
&= \sum_{\pi,\sigma\in \mathsf{S}_k}\operatorname{Wg}_{d_E,k}(\pi^{-1}\sigma)\left[C^{\otimes k}W^{A}_{\pi}\right]\otimes W_\sigma^E
 \end{align}

The Weingarten coefficient can be written explicitly as
\begin{align}
\operatorname{Wg}_{d_E,k}(\tau)=
\frac{1}{(k!)^2}\sum_{\substack{\lambda\vdash k\\ \ell(\lambda)\leq d_E}}\frac{(f^\lambda)^2}{s_\lambda(1^{d_E})}\chi^\lambda(\tau),
\end{align}
where $f^\lambda$ is the dimension of the irreducible representation of $\mathsf S_k$ labeled by $\lambda$, $\chi^\lambda$ is its character, and $s_\lambda(1^{d_E})$ is the dimension of the corresponding irreducible representation of $\mathsf U(d_E)$. Equivalently,
\begin{align}
s_\lambda(1^{d_E})=\prod_{(i,j)\in\lambda}\frac{d_E+j-i}{h_{ij}},
\end{align}
where $h_{ij}$ is the hook length of the box $(i,j)$ in the Young diagram of $\lambda$. Hence, Eq.~\eqref{eq: expanded permutation Scal k} is an explicit formula for arbitrary $k$.

\subsection{Two-query averaged purification}

We now specialize the general formula \eqref{eq: expanded permutation Scal k} to $k=2$. The group $\mathsf{S}_2$ contains only the identity permutation $e$ and the transposition $(12)$. For $d_E\geq2$, the corresponding Weingarten coefficients are
\begin{align}
\operatorname{Wg}_{d_E,2}(e)&=\frac{1}{d_E^2-1},\\
\operatorname{Wg}_{d_E,2}((12))&=-\frac{1}{d_E(d_E^2-1)}.
\end{align}
Moreover, for $X=A,E$, the identity permutation gives the identity operator and the transposition gives the swap operator on on $\Hcal_X^{\otimes2}$ :
\begin{align}
W_e^X=\id_{X_1X_2},\qquad W_{(12)}^X=F_X.
\end{align}
Substituting these expressions into Eq.~\eqref{eq: expanded permutation Scal k} gives
\begin{align}
R_{avg}^{(2)}(C)&=\frac{1}{d_E^2-1}\Bigg[C^{\otimes2}\otimes\left(\id_{E_1E_2}-\frac{F_E}{d_E}\right)+C^{\otimes2}F_A\otimes\left(F_E-\frac{\id_{E_1E_2}}{d_E}\right)\Bigg].
\label{eq:two-copy-averaged-comb}
\end{align}

To make the representation-theoretic structure explicit, define the symmetric and antisymmetric projectors
\begin{align}
\Pi^{\mathrm{sym}}&:=\frac{\id^{\otimes 2}+F}{2},\\
\Pi^{\mathrm{sym}}&:=\frac{\id^{\otimes 2}-F}{2}.
\end{align}
Since $C^{\otimes2}$ commutes with $F_A$, Eq.~\eqref{eq:two-copy-averaged-comb} can equivalently be written as
\begin{align}
R_{avg}^{(2)}(C)&=\frac{2}{d_E(d_E+1)}\left[\left(C^{\otimes2}\Pi_{IO}^{\mathrm{sym}}\right)\otimes \Pi_E^{\mathrm{sym}}\right]+\frac{2}{d_E(d_E-1)}\left[\left(C^{\otimes2}\Pi^{\mathrm{sym}}\right)\otimes \Pi_E^{\mathrm{asym}}\right].
\label{eq:two-copy-sector-decomposition}
\end{align}
Thus, the two-copy averaged purification decomposes into a symmetric contribution and an antisymmetric contribution. The same permutation sector appears simultaneously on the channel and environment systems.

Tracing out the purifying systems recovers the two bare channel uses:
\begin{align}
\tr_{E_1E_2}\left[R_{avg}^{(2)}(C)\right]=C^{\otimes2}.
\end{align}
For the complementary marginal, using $\tr C=d_I$ and
\begin{align}
\tr\left[C^{\otimes2}F_A\right]=\tr\left[C^2\right],
\end{align}
we obtain
\begin{align}
\tr_{A_1A_2}\left[R_{avg}^{(2)}(C)\right]=\frac{1}{d_E^2-1}\Bigg[\left(d_I^2-\frac{\tr[C^2]}{d_E}\right)\id_{E_1E_2}+\left(\tr[C^2]-\frac{d_I^2}{d_E}\right)F_E\Bigg]=\frac{d_I^2+\tr[C^2]}{d_E(d_E+1)}\Pi_E^{\mathrm{sym}}+\frac{d_I^2-\tr[C^2]}{d_E(d_E-1)}\Pi_E^{\mathrm{asym}}.
\label{eq:two-copy-environment-marginal}
\end{align}

The environmental marginal depends on the channel through
\(\tr[C^2]\), which determines the weights of the symmetric and antisymmetric environment sectors. This marginal is useful for identifying the available environment sectors, but it does not by itself capture the discrimination advantage. The advantage depends on the joint channel--environment correlations in Eq.~\eqref{eq:two-copy-sector-decomposition}. A sequential tester can access these correlations, whereas bare channel access discards the environment systems.

The state case is recovered by setting \(d_I=1\). The same observation then applies: the environment marginal alone does not determine the discrimination performance.

\subsection{Three-query averaged purification}

We now specialize Eq.~\eqref{eq: expanded permutation Scal k} to
$k=3$. The permutations in $\mathsf{S}_3$ split into three conjugacy
classes: the identity, the three transpositions, and the two
three-cycles. We denote the latter two classes by
\begin{align}
\mathsf{T}_3&:=\{(12),(13),(23)\},\\
\mathsf{K}_3&:=\{(123),(132)\}.
\end{align}
For $d_E\geq 3$, the corresponding Weingarten coefficients are
\begin{align}
w_e&:=\operatorname{Wg}_{d_E,3}(e)
=\frac{d_E^2-2}
{d_E(d_E^2-1)(d_E^2-4)},\\
w_t&:=\operatorname{Wg}_{d_E,3}((12))
=-\frac{1}
{(d_E^2-1)(d_E^2-4)},\\
w_c&:=\operatorname{Wg}_{d_E,3}((123))
=\frac{2}
{d_E(d_E^2-1)(d_E^2-4)}.
\end{align}

Since the Weingarten coefficient in Eq.~\eqref{eq: expanded permutation Scal k} depends on the relative permutation
$\pi^{-1}\sigma$, we write $\sigma=\pi\tau$ in
Eq.~\eqref{eq: expanded permutation Scal k}. Using the convention
$W_\pi W_\tau=W_{\pi\tau}$, this gives
\begin{align}
R_{avg}^{(3)}(C)&= \sum_{\pi\in\mathsf{S}_3}\left(C^{\otimes 3}W_\pi^A\right)\otimes W_\pi^E\left(w_e\id_{E_1E_2E_3}+w_t\sum_{\tau\in\mathsf{T}_3}W_\tau^E+w_c\sum_{\gamma\in\mathsf{K}_3}W_\gamma^E\right).
\label{eq:three-copy-weingarten-central-operator}
\end{align}

For later use, denote the operator in parentheses by
\begin{align}
\Xi_E^{(3)}:=w_e\id_{E_1E_2E_3}+w_t\sum_{\tau\in\mathsf T_3}W_\tau^E+w_c\sum_{\gamma\in\mathsf C_3}W_\gamma^E .
\end{align}
It is central in the permutation algebra because its coefficient depends only on the conjugacy class of the permutation.

The three-copy environment space admits the Schur--Weyl decomposition
\begin{align}
\Hcal_E^{\otimes 3}\simeq\mathbb{S}_{[3]}(\Hcal_E)\oplus\left(\mathbb{S}_{[2,1]}(\Hcal_E)\otimes\mathbb{C}^{2}\right)\oplus\mathbb{S}_{[1,1,1]}(\Hcal_E).
\end{align}
The labels \([3]\), \([2,1]\), and \([1,1,1]\) correspond, respectively, to the fully symmetric, mixed-symmetry, and fully antisymmetric sectors.
The corresponding central projectors are
\begin{align}
\Pi_E^{[3]}&=\frac{1}{6}\left(\id_{E_1E_2E_3}+\sum_{\tau\in\mathsf{T}_3}W_\tau^E+\sum_{\gamma\in\mathsf{K}_3}W_\gamma^E\right),\\
\Pi_E^{[1,1,1]}&=\frac{1}{6}\left(\id_{E_1E_2E_3}-\sum_{\tau\in\mathsf{T}_3}W_\tau^E+\sum_{\gamma\in\mathsf{K}_3}W_\gamma^E\right),\\
\Pi_E^{[2,1]}&=\id_{E_1E_2E_3}-\Pi_E^{[3]}-\Pi_E^{[1,1,1]}.
\end{align}
We use analogous projectors $\Pi_{IO}^{[3]}$, $\Pi_{IO}^{[2,1]}$, and $\Pi_{IO}^{[1,1,1]}$ on $\Hcal_A^{\otimes 3}$.

In terms of these projectors, the central operator $\Xi_E^{(3)}$ decomposes as
\begin{align}
\Xi_E^{(3)}
&=
\frac{1}{d_E(d_E+1)(d_E+2)}\Pi_E^{[3]}
+\frac{1}{d_E(d_E^2-1)}\Pi_E^{[2,1]}
\nonumber\\
&\quad+
\frac{1}{d_E(d_E-1)(d_E-2)}\Pi_E^{[1,1,1]}.
\label{eq:three-copy-weingarten-projector-decomposition}
\end{align}

The fully symmetric and fully antisymmetric sectors carry one-dimensional irreducible representations of \(\mathsf S_3\). The mixed-symmetry sector carries the two-dimensional standard representation. This is why the three-copy expression contains a nontrivial channel--environment correlation that cannot be written simply as a sum of products
\begin{align}
\Pi_{IO}^{[\lambda]}\otimes \Pi_E^{[\lambda]}.
\end{align}

To display this mixed-sector correlation, define
\begin{align}
\Pi_{AE}^{\mathrm{diag},[2,1]}
&:=
\frac{1}{6}
\sum_{\pi\in\mathsf{S}_3}
\left(\Pi_{IO}^{[2,1]}W_\pi^A\right)
\otimes
\left(\Pi_E^{[2,1]}W_\pi^E\right).
\label{eq:diagonal-mixed-projector}
\end{align}

Choosing the same real orthonormal basis $\{\ket{1},\ket{2}\}$ for the two standard-representation multiplicity spaces, this operator is the projector onto the invariant vector
\begin{align}
\ket{\Omega_{AE}}&:=\frac{1}{\sqrt{2}}\left(\ket{1}_A\ket{1}_E+\ket{2}_A\ket{2}_E\right),
\end{align}
More precisely,
\begin{align}
\Pi_{AE}^{\mathrm{diag},[2,1]}&=\id_{\mathbb S_{[2,1]}(\Hcal_A)}\otimes\id_{\mathbb S_{[2,1]}(\Hcal_E)}\otimes \ketbra{\Omega_{AE}}{\Omega_{AE}} .
\end{align}

Consequently, the full three-copy averaged purification comb takes the
sector-resolved form
\begin{align}
R_{avg}^{(3)}(C)
&=
\frac{6}{d_E(d_E+1)(d_E+2)}
\left(C^{\otimes 3}\Pi_{IO}^{[3]}\right)
\otimes \Pi_E^{[3]}
\nonumber\\
&\quad+
\frac{6}{d_E(d_E^2-1)}
\left(C^{\otimes 3}\otimes\id_E\right)
\Pi_{AE}^{\mathrm{diag},[2,1]}
\nonumber\\
&\quad+
\frac{6}{d_E(d_E-1)(d_E-2)}
\left(C^{\otimes 3}\Pi_{IO}^{[1,1,1]}\right)
\otimes \Pi_E^{[1,1,1]}.
\label{eq:three-copy-sector-decomposition}
\end{align}
Unlike the two-copy case, the three-copy averaged purification cannot be reduced to a sum involving only products of matching channel and environment projectors. The mixed-symmetry contribution retains correlated permutation information between the two parts.

For the perfect-discrimination problem, it is useful to characterize the support of the averaged comb directly from its Haar representation.

\begin{proposition}[Support of the averaged purification three-comb]
\label{prop:support-averaged-purification-three}
Let $C$ be the Choi operator of a quantum channel, let $r:=\operatorname{rank}C$, and let $\Hcal_E$ be an auxiliary space with $d_E\geq r$. Then
\begin{align}
\operatorname{supp}R_{avg}^{(3)}(C)=\operatorname{Sym}^{3}\left(\operatorname{supp}C\otimes\Hcal_E\right).
\end{align}
\end{proposition}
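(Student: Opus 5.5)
The plan is to prove the two inclusions separately, working with the purified Choi vectors. Write $\mathcal K\coloneqq\supp C\otimes\Hcal_E$, with $\mathcal K^{\otimes 3}$ understood as $\bigotimes_{i=1}^3(\Hcal_{A_i}\otimes\Hcal_{E_i})$ (one copy of $A$ and $E$ per query). Fix a spectral decomposition $C=\sum_{i=1}^r\lambda_i\ketbra{c_i}$ with $\lambda_i>0$. Then the fixed purification can be taken as $\ket{V_C}=\sum_{i=1}^r\sqrt{\lambda_i}\ket{c_i}\ket{e_i}$ for orthonormal $\{\ket{e_i}\}\subset\Hcal_E$; this is possible because $d_E\geq r$, and any other purification differs from it by an environment unitary. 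We then have $R_{avg}^{(3)}(C)=\int \ketbra{v_U}^{\otimes 3}\,\mathrm dU$ with $\ket{v_U}\coloneqq(\id_A\otimes U)\ket{V_C}\in\mathcal K$.

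The first step is a general fact. The Haar integral of positive semidefinite operators over a measure with full support has as its support the closed linear span of the individual supports. Since $U\mapsto\ket{v_U}^{\otimes3}$ is continuous, this gives
\begin{align}
\supp R_{avg}^{(3)}(C)=\operatorname{span}\{\ket{v_U}^{\otimes 3}:U\in\mathsf U(\Hcal_E)\}.
\end{align}
Every $\ket{v_U}^{\otimes3}$ is a permutation-invariant vector of $\mathcal K^{\otimes3}$. This immediately yields the inclusion $\subseteq\operatorname{Sym}^3(\mathcal K)$.

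For $\supseteq$, take $\ket{T}\in\operatorname{Sym}^3(\mathcal K)$ orthogonal to every $\ket{v_U}^{\otimes3}$; the goal is to show $\ket{T}=0$. Consider the function $p(w)\coloneqq\bra{T}\,(\ket{w}^{\otimes3})$. It is a homogeneous cubic polynomial in the coordinates of $\ket{w}\in\mathcal K$, and it is holomorphic, with no complex conjugates appearing. The function $q(G)\coloneqq p\big((\id_A\otimes G)\ket{V_C}\big)$ is then a holomorphic polynomial in the matrix entries of $G\in\mathsf L(\Hcal_E)$, and by assumption it vanishes on $\mathsf U(\Hcal_E)$.

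The key step, and the one I expect to need the most care, is to upgrade this vanishing to all of $\mathsf L(\Hcal_E)$. I would use the standard fact that $\mathsf U(d)$ is a totally real submanifold of $\mathsf{GL}(d,\mathbb C)$ of maximal real dimension $d^2$, i.e.\ its complexification is $\mathsf{GL}(d,\mathbb C)$. Concretely, $t\mapsto q(e^{iH_1t_1}\cdots)$ can be analytically continued in the Hermitian generators; equivalently, a holomorphic function on $\mathsf{GL}(d,\mathbb C)$ vanishing on $\mathsf U(d)$ vanishes identically. This shows that $q$ vanishes on $\mathsf{GL}(\Hcal_E)$, and hence on all of $\mathsf L(\Hcal_E)$ by density and continuity.

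Next, I would check that the vectors $(\id_A\otimes G)\ket{V_C}$ exhaust $\mathcal K$. Any $\ket{w}\in\mathcal K$ can be written as $\sum_{i=1}^r\ket{c_i}\ket{g_i}$ for some vectors $\ket{g_i}\in\Hcal_E$. Choosing $G$ with $G\ket{e_i}=\ket{g_i}/\sqrt{\lambda_i}$ gives $(\id_A\otimes G)\ket{V_C}=\ket{w}$. This uses both $\lambda_i>0$ and the fact that $d_E\geq r$, which guarantees that the $\ket{e_i}$ exist and are linearly independent. Consequently $p(w)=0$ for all $\ket{w}\in\mathcal K$.

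Finally, polarization finishes the argument. The symmetric tensors $\ket{w}^{\otimes3}$, $\ket{w}\in\mathcal K$, span $\operatorname{Sym}^3(\mathcal K)$: expand $p(\sum_j s_j\ket{w_j})$ in the complex parameters $s_j$ and read off the coefficients, which recover the symmetrized products $\ket{w_1}\ket{w_2}\ket{w_3}$ up to symmetrization. Hence $\ket{T}$ is orthogonal to all of $\operatorname{Sym}^3(\mathcal K)$ while also lying in it, which forces $\ket{T}=0$ and proves the equality.
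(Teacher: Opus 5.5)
Your proof is correct. The inclusion $\subseteq$ is obtained exactly as in the paper: the support of the Haar average is the span of the orbit $\{\ket{v_U}^{\otimes 3}\}$, and each orbit vector is a symmetric tensor over $\supp C\otimes\Hcal_E$. You justify that span more carefully than the paper does, via continuity and full support of the Haar measure.

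For $\supseteq$ you take a genuinely different route. The paper invokes the Cauchy decomposition $\operatorname{Sym}^3(\supp C\otimes\Hcal_E)\simeq\bigoplus_{\mu}\mathbb S_\mu(\supp C)\otimes\mathbb S_\mu(\Hcal_E)$. It then asserts that each $\mu$-component of $\ket{V_C}^{\otimes3}$ has full Schmidt support, so the collective environment action generates every summand in full. You avoid representation theory altogether. The unitarian trick (density of $\mathsf U(\Hcal_E)$ in $\mathsf L(\Hcal_E)$ for holomorphic polynomials) lets you replace the unitary orbit by the orbit under arbitrary $G$. That orbit visibly exhausts $\supp C\otimes\Hcal_E$ because the Schmidt coefficients are positive and $d_E\geq r$, and polarization closes the argument.

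Your approach is elementary, fully checkable, and works verbatim for any number of copies $k$. It replaces the paper's unargued sector-wise ``full Schmidt support'' claim by the evident surjectivity of $G\mapsto(\id_A\otimes G)\ket{V_C}$ onto $\supp C\otimes\Hcal_E$. The paper's approach, in return, describes the support sector by sector, which matches the sector-resolved form of $R^{(3)}_{avg}(C)$ used elsewhere in the appendix.

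Your ``concretely'' sentence for the key step is a bit loose. A clean version: $Z\mapsto q(e^{Z})$ is entire on $\mathsf L(\Hcal_E)$ and vanishes on the anti-Hermitian matrices. These form a totally real subspace of full real dimension, so the function vanishes identically. Since $\exp$ maps onto $\mathsf{GL}(\Hcal_E)$, $q$ vanishes on $\mathsf{GL}(\Hcal_E)$ and hence, by density, on all of $\mathsf L(\Hcal_E)$.
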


\begin{proof}
Using the Haar representation, take
\begin{align}
R_{avg}^{(3)}(C)&=\int_{\mathsf{U}(\Hcal_E)}
\ketbra{V_{C,U}}^{\otimes 3}\,\mathrm{d}U,\\
\ket{V_{C,U}}&=\left(\id_A\otimes U\right)\ket{V_C}.
\end{align}
Because this is a positive average of rank-one operators, its support is the linear span of the vectors appearing in the average:
\begin{align}
\operatorname{supp}R_{avg}^{(3)}(C)=\operatorname{span}\left\{\ket{V_{C,U}}^{\otimes 3}:U\in\mathsf{U}(\Hcal_E)
\right\}.
\label{eq:Haar-orbit-support}
\end{align}
Every vector \(\ket{V_C(U)}^{\otimes3}\) is symmetric under permutations of the three copies. Moreover, the channel component of \(\ket{V_C}\) belongs to \(\operatorname{supp}C\). Hence every vector in the Haar orbit belongs to
\begin{align}
\operatorname{Sym}^{3}\left(\operatorname{supp}C\otimes\Hcal_E\right),
\end{align}
which proves one inclusion.

To prove the converse inclusion, write a Schmidt decomposition of the purification:
\begin{align}
\ket{V_C}=\sum_{\alpha=1}^{r}\sqrt{\lambda_\alpha}\,\ket{c_\alpha}\ket{e_\alpha},\qquad\lambda_\alpha>0.
\end{align}
The degree-three Cauchy decomposition gives
\begin{align}
\operatorname{Sym}^{3}
\left(\operatorname{supp}C\otimes\Hcal_E\right)\simeq\bigoplus_{\substack{\mu\vdash 3\\
\ell(\mu)\leq r}}\mathbb{S}_\mu(\operatorname{supp}C)\otimes\mathbb{S}_\mu(\Hcal_E).
\label{eq:Cauchy-decomposition-three}
\end{align}
The positivity of all Schmidt coefficients implies that the projection
of $\ket{V_C}^{\otimes 3}$ onto each allowed summand has full Schmidt support. The collective unitary action on the environment spans the corresponding representation space $\mathbb{S}_\mu(\Hcal_E)$. Therefore, the Haar orbit spans every summand in
Eq.~\eqref{eq:Cauchy-decomposition-three}. This proves the reverse inclusion
\end{proof}

\section{Symmetry reductions for the discrimination tasks}\label{ap:: SDPs}

This appendix derives the two-copy sequential-discrimination formulas
used in the main text. We consider a binary channel ensemble with equal
prior probabilities,
\begin{align}
\mathsf C=\left\{\left(\frac{1}{2},C_0\right),\left(\frac{1}{2},C_1\right)\right\},
\end{align}
and use the identification score
\begin{align}
s(x,y)=\delta_{x,y}.
\end{align}
Thus, \(P_{\mathrm{SEQ}}(\mathsf C;s)\) denotes the maximum probability
of correctly identifying the channel label using a sequential strategy.

We compare two different two-use resources associated with this same
ensemble. For bare channel access, the two-copy ensemble is
\begin{align}
\mathsf C_{\mathrm{bare}}^{(2)}:=\left\{\left(\frac{1}{2},C_0^{\otimes2}\right),\left(\frac{1}{2},C_1^{\otimes2}\right)\right\}.
\end{align}
For averaged purification access, the corresponding ensemble is
\begin{align}
\mathsf C_{\mathrm{avg}}^{(2)}:=\left\{\left(\frac{1}{2},R_{avg}^{(2)}(C_0)\right),\left(\frac{1}{2},R_{avg}^{(2)}(C_1)\right)\right\},
\label{def: 2copies average set}\end{align}
where \(R_{avg}^{(2)}(C_i)\) denotes the two-copy averaged
purification comb associated with \(C_i\), as defined in
Eq.~\eqref{eq::R2} and expanded in Appendix~\ref{ap:: expansion avge purif comb}.
The notation \(\mathsf C_{\mathrm{bare}}^{(2)}\) and
\(\mathsf C_{\mathrm{avg}}^{(2)}\) refers to the two-use resource
ensembles being discriminated; the underlying hypotheses remain the same
channels \(C_0\) and \(C_1\).

The aim of this appendix is to derive two expressions for the
corresponding success probabilities. For bare access, we show that
\begin{align}
P_{\mathrm{SEQ}}
\left(
\mathsf C_{\mathrm{bare}}^{(2)};s
\right)
\end{align}
is obtained by filtering the two-copy difference
\begin{align}
D:=C_0^{\otimes2}-C_1^{\otimes2}
\end{align}
with a valid deterministic sequential tester and then taking a trace
norm. For averaged purification access, the two environment copies
split into orthogonal symmetric and antisymmetric sectors. We show that
the same tester normalization must be used in both sectors, while the
corresponding filtered channel differences contribute two separate trace
norms.

The bare queries trace-norm reduction, including the optimization over
deterministic sequential tester normalizations, is standard and can be
found in Ref.~\cite{Chiribella2008Memory}. We reproduce the derivation
here to fix the notation used in this manuscript and to make the
subsequent symmetry reduction for averaged purification access
transparent.

We consider valid deterministic two-round sequential testers throughout.
The tester formalism and its causal normalization conditions are standard
in the theory of quantum combs
\cite{Chiribella2008Memory,Chiribella2009Networks}.

\subsection{Primal SDP}\label{aub ap::  two copies primal SDPs}
We first treat discrimination of two bare channel resources.

\begin{proposition}[Sequential two-copy discrimination of two channels]
Let $\Ccal_0$ and $\Ccal_1$ be distinct quantum channels with Choi operators $C_i\in \mathsf L(\Hcal_{I}\otimes\Hcal_{O})$, $i=0,1$. For equal priors, the maximum probability of discriminating them with two sequential queries is
\begin{align}
   P_{\mathrm{SEQ}}(\mathsf{C}_\mathrm{bare}^{(2)};s) &= \frac{1}{2}+\frac{1}{4}\max_T\norm{\sqrt{T}D\sqrt{T}}_1
\label{eq:reduction primal 2 copies}
\end{align}
where the maximum is over valid deterministic two-round tester $\mathfrak  T^{(2)}=\{T_0,T_1\} \in \mathrm{SEQ}$, with $T:=T_0+T_1$ and $D:= C_0^{\otimes 2}-C_1^{\otimes 2}$.
\end{proposition}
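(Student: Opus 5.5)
The plan is to start from the definition of $P_{\mathrm{SEQ}}(\mathsf{C}_\mathrm{bare}^{(2)};s)$ with $k=2$, $d_{E_i}=1$, and equal priors. For any tester $\{T_0,T_1\}\in\mathrm{SEQ}$ the success probability is
\begin{align}
\tfrac12\tr[C_0^{\otimes2}T_0]+\tfrac12\tr[C_1^{\otimes2}T_1]
=\tfrac14\tr[(C_0^{\otimes2}+C_1^{\otimes2})T]+\tfrac14\tr[D(T_0-T_1)],
\end{align}
where $T=T_0+T_1$ and $D=C_0^{\otimes2}-C_1^{\otimes2}$. The first step is to show that $\tr[C^{\otimes2}T]=1$ for every Choi operator $C$ of a channel and every deterministic sequential tester $T$. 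This is the standard normalization of a comb paired with its tester, and it follows from the constraints in Eq.~\eqref{eq:SEQ_tester} together with $\tr_O C=\id_I$. With this, the first term equals $\tfrac12$, and the problem reduces to maximizing $\tr[D(T_0-T_1)]$ over decompositions $T=T_0+T_1$ with $T_0,T_1\geq0$, for fixed $T$.

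The second step handles the optimization for fixed $T$. The SEQ constraints in Eq.~\eqref{eq:SEQ_tester} involve only $T$, so any pair of positive operators summing to $T$ gives a valid tester. On $\supp T$ write $T_y=\sqrt{T}M_y\sqrt{T}$, where $\{M_0,M_1\}$ is a two-outcome POVM on $\supp T$ (one may take $M_y=T^{-1/2}T_yT^{-1/2}$ with the pseudo-inverse, and any POVM conversely yields a valid pair). Then
\begin{align}
\tr[D(T_0-T_1)]=\tr[\sqrt{T}D\sqrt{T}\,(M_0-M_1)].
\end{align}
The variational characterization of the trace norm gives $\max_{-\id\leq M_0-M_1\leq\id}\tr[A(M_0-M_1)]=\norm{A}_1$ for Hermitian $A$. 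The maximum is attained by the Helstrom choice, namely $M_0$ equal to the projector onto the positive part of $\sqrt{T}D\sqrt{T}$, which lies in $\supp T$. This yields $\tfrac14\norm{\sqrt{T}D\sqrt{T}}_1$. Maximizing over admissible $T$ then gives Eq.~\eqref{eq:reduction primal 2 copies}. The maximum exists because the set of deterministic sequential testers is compact and the objective is continuous.

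The main obstacle is the first step: checking carefully that the causal constraints of Eq.~\eqref{eq:SEQ_tester}, with the trace-and-replace convention and normalization $\tr T=d_{O_1}d_{O_2}$, force $\tr[C^{\otimes2}T]=1$ for all channel Choi operators. The plan for this step is to use the standard comb structure. First, $T=\tfrac{\id_{O_2}}{d_{O_2}}\otimes\tr_{O_2}T$, so $\tr[C^{\otimes2}T]$ reduces via $\tr_{O_2}C=\id_{I_2}$ to an expression involving $\tr_{I_2O_2}T$ paired with $C$ on the first slot. Second, the remaining constraint makes $\tr_{I_2O_2}T\propto\id_{O_1}\otimes(\text{state on }I_1)$, so the second trace also collapses. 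The overall constant is fixed by the trace condition. The subtlety over pseudo-inverses in the second step is routine.
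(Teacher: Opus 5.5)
Your proposal is correct and follows essentially the same route as the paper: you split the success probability into $\tfrac14\tr[T(C_0^{\otimes2}+C_1^{\otimes2})]=\tfrac12$ plus $\tfrac14\tr[D(T_0-T_1)]$, compress by the pseudo-inverse of $\sqrt{T}$ on $\supp T$ (your POVM $\{M_0,M_1\}$ is just the paper's contraction $\tilde G=M_0-M_1$), and invoke the variational characterization of the trace norm. Your explicit derivation of $\tr[C^{\otimes2}T]=1$ from the comb constraints, and your compactness argument for why the maximum over $T$ is attained, fill in points that the paper only asserts.
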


\begin{proof}
A binary sequential tester is a pair of positive operators $T_0$ and $T_1$ satisfying

\begin{align}
T_0 \succeq 0, \qquad T_1 \succeq 0, \qquad T_0 + T_1 = T,
\label{eq:binary-tester-normalization}
\end{align}
where $T$ is a valid sequential process operator. Outcome $i$ corresponds to guessing that the unknown channel is $C_i$.

For equal prior probabilities, the optimal success probability for discriminating both Choi operators with $2$ copies is the value of the semidefinite program

\begin{align}
 \max\quad & P_{\mathrm{SEQ}}(\mathsf{C}_\mathrm{bare}^{(2)};s)=\frac{1}{2} \tr \!\left[T_0 C_0^{\otimes 2}\right] + \frac{1}{2} \tr \!\left[T_1 C_1^{\otimes 2}\right]
\label{eq:raw-primal}\\
\text{subject to}\quad & T_0 \succeq 0, \qquad T_1 \succeq 0,\label{eq:primal-T-SDP-bare-positivity} \\
&T= T_0 + T_1,\label{eq:primal-T-SDP-bare-outcome-sum} \\
&_{O_2}T=_{I_2O_2}T, \label{eq:primal-T-SDP-bare-normalization} \\
&_{O_1I_2O_2}T=_{I_1O_1I_2O_2}T,\label{eq:causal-normalization-bare}
\end{align}
for ${}_X Y=\tr_X[Y]\otimes \frac{\id_X}{d_X}$.

The positivity of the two tester outcomes is imposed by Eq.~\eqref{eq:primal-T-SDP-bare-positivity}. Their sum is the deterministic tester normalization, as stated in Eq.~\eqref{eq:primal-T-SDP-bare-outcome-sum}. The constraint in Eq.~\eqref{eq:primal-T-SDP-bare-normalization} expresses the normalization of the final output \(O_2\). Finally, Eq.~\eqref{eq:causal-normalization-bare} imposes the causal constraint
between the first and second channel uses: the tester cannot depend on the future input \(I_2\) before the first round has been completed. Together, these constraints imply $\tr T=d_{O_1}d_{O_2}$ and $\tr N_{I_1 O_1 I_2}=d_{O_1}$.

Define the Hermitian two-copy difference

\begin{align}
D := C_0^{\otimes 2} - C_1^{\otimes 2}.
\label{eq:general-two-copy-difference}
\end{align}

Introduce the Hermitian decision operator

\begin{align}
G = T_0 - T_1.
\label{eq:raw-decision-operator}
\end{align}

Since $T_0 + T_1 = T$, one has

\begin{align}
T_0 = \frac{T+G}{2}, \qquad T_1 = \frac{T-G}{2}.
\label{eq:tester-outcomes-from-G}
\end{align}

The positivity constraints $T_0, T_1 \succeq 0$ are equivalent to

\begin{align}
-T \preceq G\preceq T.
\label{eq:G-order-interval}
\end{align}

For equiprobable hypotheses, substituting into the objective \eqref{eq:raw-primal} gives

\begin{align}
P_{\mathrm{SEQ}}(\mathsf{C}_\mathrm{bare}^{(2)};s)&=
\frac{1}{2}\Tr\!\left[T_0C_0^{\otimes2}\right]+\frac{1}{2}\Tr\!\left[T_1C_1^{\otimes2}\right]=\frac{1}{2}\Tr\!\left[\left(\frac{T+G}{2}\right)C_0^{\otimes2}\right]+\frac{1}{2}\Tr\!\left[\left(\frac{T-G}{2}\right)C_1^{\otimes2}\right]\\
&=\frac{1}{4}\Tr\!\left[(T+G)C_0^{\otimes2}\right]+\frac{1}{4}\Tr\!\left[(T-G)C_1^{\otimes2}\right]=\frac{1}{4}\Tr\!\left[TC_0^{\otimes2}+GC_0^{\otimes2}+TC_1^{\otimes2}-GC_1^{\otimes2}\right]\\
&=\frac{1}{4}\Tr\!\left[T\left(C_0^{\otimes2}+C_1^{\otimes2}\right)\right]+\frac{1}{4}\Tr\!\left[G\left(C_0^{\otimes2}-C_1^{\otimes2}\right)\right]=\frac{1}{4}\Tr\!\left[T\left(C_0^{\otimes2}+C_1^{\otimes2}\right)\right]+\frac{1}{4}\Tr[GD],
\end{align}
 Thus,
\begin{align}
P^{(2)}_{chan}(C_0,C_1)=\max_{\substack{T\in \mathfrak{T}^{seq}_{(2)}\\-T\preceq G\preceq T}}\left\{\frac14\Tr\!\left[T\left(C_0^{\otimes2}+C_1^{\otimes2}\right)\right]+\frac14\Tr[GD]\right\}.
\end{align}
For every deterministic tester $T$, the normalization comditions imply
\begin{align}
    \tr[TC_i^{\otimes 2}]=1.
\end{align}
Therefore, $\tr[T(C_0^{\otimes 2}+C_1^{\otimes 2})]=2$ and the first term in the objective is always \(1/2\). The remaining optimization is therefore

\begin{align}
\beta(D)=\max_{T,G}\quad & \Tr[GD] \label{eq:bias-primal-TG}\\
\text{subject to}\quad & -T\preceq G\preceq T, \notag\\
& T\in\mathfrak{T}^{seq}_{(2)}. \notag
\end{align}
We next optimize over $G$ for a fixed $T$. The operator $T$ need not be full rank, so we first restrict the optimization to its support. The constraint $-T\preceq G\preceq T$ is equivalent to $T+G\succeq0$ and $T-G\succeq0$. These inequalities imply \begin{align}
\ker(T)\subseteq\ker(G).
\end{align}

Indeed, if $\ket{\psi}\in\ker(T)$, one has $0\leq\bra{\psi}(T\pm G)\ket{\psi}=\pm\bra{\psi}G\ket{\psi}$, so $\bra{\psi}G\ket{\psi}=0$. Since $T\pm G\succeq0$, this further implies $(T\pm G)\ket{\psi}=0$, and hence $G\ket{\psi}=0$.\\

Let $\Pi_T$ be the projector onto the support of $T$, and let $T^{-1/2}$ denote the Moore--Penrose inverse of $\sqrt{T}$ on this support. Define

\begin{align}
    \tilde G:=T^{-1/2}GT^{-1/2}
\end{align}

on $\supp(T)$, and set $\tilde G=0$ on $\ker(T)$. Multiplying $T\pm G\succeq0$ on the left and right by $T^{-1/2}$ gives $\Pi_T\pm\tilde G\succeq0$, or equivalently,

\begin{align}
-\Pi_T\preceq \tilde G\preceq\Pi_T.
\label{eq:R-contraction}
\end{align}

Thus, $\tilde G$ is a Hermitian contraction on $\supp(T)$. Conversely, every Hermitian $\widetilde{G}$ satisfying \eqref{eq:R-contraction} defines a feasible decision operator

\begin{align}
G=\sqrt{T}\tilde G\sqrt{T}.
\label{eq:W-contraction-parametrization}
\end{align}
because
\begin{align}
T\pm G=\sqrt T\left(\Pi_T\pm\widetilde G\right)\sqrt T\succeq0.
\end{align}
 This operator-interval contraction parametrization follows, for example, from the Douglas factorization lemma~\cite{Douglas1966}; the resulting discrimination expression also appears in Refs.~\cite{Chiribella2008Memory,Chiribella2009Networks}.
Consequently,

\begin{align}
\max_{-T\preceq G\preceq T}\Tr[DG]&=\max_{-\Pi_T\preceq \tilde G\preceq\Pi_T}\Tr\!\left[D\sqrt{T}\tilde G\sqrt{T}\right]\notag\\
&=\max_{-\Pi_T\preceq \tilde G\preceq\Pi_T}\Tr\!\left[\sqrt{T}D\sqrt{T}\,\tilde G\right].
\label{eq:fixed-T-R-reduction}
\end{align}

Define the filsted difference
\begin{align}
    K_T:=\sqrt{T}D\sqrt{T}
\end{align}
This is the difference between the two hypotheses after they have been processed by a fixed $T$. The remaining optimization is therefore the binary outcome optimization within the class of sequential testers. The filtering by \(\sqrt{T}\) records which part of the channel difference can be accessed by that sequential process.

The variational characterization of the trace norm yields

\begin{align}
\max_{-\Pi_T\preceq \tilde G\preceq\Pi_T}\Tr[K_T\tilde G]=\lVert K_T\rVert_1.
\label{eq:trace-norm-variational}
\end{align}

An optimal contraction is
\begin{align}
    \tilde G^\star=\operatorname{sgn}(K_T)=\Pi_+(K_T)-\Pi_-(K_T),
\end{align}
where $\Pi_+(K_T)$ and $\Pi_-(K_T)$ are the spectral projectors onto the positive and negative eigenspaces of $K_T$, respectively. Its value on $\ker(K_T)$ may be chosen arbitrarily in the interval $[-1,1]$. Hence,

\begin{align}
\max_{-T\preceq G\preceq T}\Tr[DG]=\norm{\sqrt{T}D\sqrt{T}}_1,
\label{eq:fixed-T-trace-norm}
\end{align}

Finally, optimizing over all valid deterministic sequential testers
\begin{align}
\beta(D)=\max_{T}\norm{\sqrt{T}D\sqrt{T}}_1.
\label{eq:bias-trace-norm-form}
\end{align}

Combining the preceding steps gives the claimed primal expression:
\begin{align}
 P_{\mathrm{SEQ}}(\mathsf{C}_\mathrm{bare}^{(2)};s)&= \frac{1}{2}+\frac{1}{4}\max_T\norm{\sqrt{T}D\sqrt{T}}_1,\label{eq: en proof prop 1}\\
T&=N_{I_1 O_1 I_2}\otimes\id_{O_2},\\
N_{I_1 O_1 I_2}&\succeq0,\qquad \rho_{I_1}\succeq0,\\
\tr_{I_2}N_{I_1 O_1 I_2}&=\rho_{I_1}\otimes\id_{O_1},\qquad \tr\rho_{I_1}=1.
\end{align}
\end{proof}

The same reduction can be applied to the averaged purification ensemble \(\mathsf{C}_{\mathrm{avg}}^{(2)}\) in Eq. \eqref{def: 2copies average set}. The two-copy decomposition derived in Appendix~\ref{ap:: expansion avge purif comb}
allows us to treat the symmetric and antisymmetric environment sectors as
orthogonal flags.

\begin{proposition}[Sequential binary discrimination of two averaged purifications]\label{prop: reduction primal averaged}
For the binary averaged-purification ensemble
\begin{align}
\mathsf{C}_{\mathrm{avg}}^{(2)}=\left\{\left(\frac{1}{2},R_{avg}^{(2)}(C_0)\right),\left(\frac{1}{2},R_{avg}^{(2)}(C_1)\right)\right\},
\end{align}
the optimal sequential success probability is
\begin{align}
     P_{\mathrm{SEQ}}(\mathsf{C}_\mathrm{avg}^{(2)};s) &= \frac{1}{2}+\frac{1}{4}\max_T\left(\norm{\sqrt{T}\Pi^{\mathrm{sym}}D\Pi^{\mathrm{sym}}\sqrt{T}}_1+\norm{\sqrt{T}\Pi^{\mathrm{asym}}D\Pi^{\mathrm{asym}}\sqrt{T}}_1\right)
\label{eq:reduction-primal-averaged-two-copies}
\end{align}
where the maximum is over valid deterministic two-round tester $\mathfrak  T^{(2)}=\{T_0,T_1\} \in \mathrm{SEQ}$, with $T:=T_0+T_1$ and $D:= C_0^{\otimes 2}-C_1^{\otimes 2}$.
\end{proposition}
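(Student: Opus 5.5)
The plan is to repeat the bare-access reduction (the previous proposition), now on the enlarged spaces $I_1\widetilde O_1 I_2\widetilde O_2$ with $\widetilde O_i=O_iE_i$, and then use the sector decomposition \eqref{eq:two-copy-sector-decomposition} to split the trace norm. The Helstrom-type steps carry over verbatim: $\tr[T R_{avg}^{(2)}(C_i)]=1$, $G=T_0-T_1$ with $-T\preceq G\preceq T$, and the Douglas contraction parametrization. They give
\begin{align}
P_{\mathrm{SEQ}}(\mathsf C_{\mathrm{avg}}^{(2)};s)=\frac12+\frac14\max_{T}\norm{\sqrt T\,D_{\mathrm{avg}}\sqrt T}_1 ,
\end{align}
where $D_{\mathrm{avg}}:=R_{avg}^{(2)}(C_0)-R_{avg}^{(2)}(C_1)$ and the maximum runs over deterministic sequential testers on the enlarged spaces. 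Since $C^{\otimes2}$ commutes with $F_A$, we have $C^{\otimes2}\Pi^{\mathrm{sym}}_{IO}=\Pi^{\mathrm{sym}}C^{\otimes2}\Pi^{\mathrm{sym}}$, and likewise in the antisymmetric sector. Hence
\begin{align}
D_{\mathrm{avg}}=a\,(\Pi^{\mathrm{sym}}D\Pi^{\mathrm{sym}})\otimes\Pi^{\mathrm{sym}}_E+b\,(\Pi^{\mathrm{asym}}D\Pi^{\mathrm{asym}})\otimes\Pi^{\mathrm{asym}}_E,
\end{align}
with $a=2/(d_E(d_E+1))$ and $b=2/(d_E(d_E-1))$. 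These satisfy $a\tr\Pi^{\mathrm{sym}}_E=b\tr\Pi^{\mathrm{asym}}_E=1$.

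\emph{Achievability.} I will take $T=T_{IO}\otimes\id_{E_1E_2}$, where $T_{IO}$ is any valid bare two-round tester. This satisfies the sequential constraints \eqref{eq:SEQ_tester}, because $\id_{E_i}$ is proportional to its own trace-and-replace on the output spaces $E_i$; operationally, the memory simply keeps $E_1$ and measures it at the end. The operator $\sqrt T D_{\mathrm{avg}}\sqrt T$ is then block diagonal with respect to the orthogonal flags $\Pi^{\mathrm{sym}}_E$ and $\Pi^{\mathrm{asym}}_E$. Its trace norm therefore equals
\begin{align}
a\tr\Pi^{\mathrm{sym}}_E\,\norm{\sqrt{T_{IO}}\Pi^{\mathrm{sym}}D\Pi^{\mathrm{sym}}\sqrt{T_{IO}}}_1+b\tr\Pi^{\mathrm{asym}}_E\,\norm{\cdots}_1 ,
\end{align}
which is exactly the sum in \eqref{eq:reduction-primal-averaged-two-copies} evaluated at the common tester $T_{IO}$. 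This gives ``$\geq$''.

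\emph{Converse.} I will first invoke the twirling step from the proof of Lemma~\ref{lemma::avg_equal_unk_sequential}: averaging the optimal tester over $V^{\otimes2}$ on $E_1E_2$ preserves both $\mathrm{SEQ}$ and the score. The commutant of $\{V^{\otimes 2}\}$ is $\operatorname{span}\{\id,F_E\}$, so we may assume
\begin{align}
T=A\otimes\Pi^{\mathrm{sym}}_E+B\otimes\Pi^{\mathrm{asym}}_E,\qquad G=G_s\otimes\Pi^{\mathrm{sym}}_E+G_a\otimes\Pi^{\mathrm{asym}}_E ,
\end{align}
with $A,B\succeq0$ on the $IO$ systems. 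The objective then becomes
\begin{align}
\norm{\sqrt A\,\Pi^{\mathrm{sym}}D\Pi^{\mathrm{sym}}\sqrt A}_1+\norm{\sqrt B\,\Pi^{\mathrm{asym}}D\Pi^{\mathrm{asym}}\sqrt B}_1 .
\end{align}
It remains to show that this never exceeds the value obtained with a single common bare tester.

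\emph{Main obstacle.} The difficulty is that the sequential constraints do not force $A$ and $B$ to be valid testers individually. Tracing out $E_2$ gives $\tr_{E_2}T=\tfrac12\left[(d_E+1)A+(d_E-1)B\right]\otimes\id_{E_1}$, so the constraints control only that mixture together with the causal conditions on $I_2$ and $O_1$. My first attempt is to rewrite each trace norm as $\max_{-A\preceq G_s\preceq A}\tr[G_s\,\Pi^{\mathrm{sym}}D\Pi^{\mathrm{sym}}]$, which is linear in the pair $(A,B)$. I would then exploit the fact that $\Pi^{\mathrm{sym}}D\Pi^{\mathrm{sym}}$ is invariant under the swap of the two $IO$ copies, and symmetrize $A$ and $B$ under $F_{IO}$. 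The aim is to show that the constraint set for $(A,B)$, after this symmetrization, has the pair $(\bar T,\bar T)$ as a dominating point, with $\bar T:=\left[(d_E+1)A+(d_E-1)B\right]/(2d_E)$ a valid bare sequential tester. For example, I would try to show that replacing both $A$ and $B$ by $\bar T$ does not decrease the value. This would use the fact that the trace norm of the compression $\Pi^{\mathrm{sym}}\sqrt{X}D\sqrt{X}\Pi^{\mathrm{sym}}$ depends on $X$ only through its symmetric-sector compression, and that the causal constraints act identically in both sectors.

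If this symmetrization argument fails, the fallback is a dual (weak-duality) certificate. For each bare tester value $\lambda$, I would construct a feasible dual operator for the enlarged sequential SDP that is built sector-wise from the optimal bare dual variables, so that the dual objective equals the right-hand side of \eqref{eq:reduction-primal-averaged-two-copies}. This would give the matching upper bound.
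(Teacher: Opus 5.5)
Your achievability direction is correct and matches the paper's reconstruction $\widehat T_j$: $T_{IO}\otimes\id_{E_1E_2}$ is a valid enlarged tester, and the block-diagonal trace-norm computation with $a\tr\Pi^{\mathrm{sym}}_E=b\tr\Pi^{\mathrm{asym}}_E=1$ gives the value for a common bare $T_{IO}$.

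\textbf{The gap is in the converse.} You never prove the upper bound, and the ``main obstacle'' you identify does not exist. You use the normalization only through $\tr_{E_2}T$. But the sequential conditions require $T={}_{\widetilde O_2}T$ with $\widetilde O_2=O_2E_2$, so $T=\widetilde N_{I_1O_1E_1I_2}\otimes\id_{O_2E_2}$ must act trivially on $E_2$.

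Write your twirled normalization as $T=\tfrac{A+B}{2}\otimes\id_{E_1E_2}+\tfrac{A-B}{2}\otimes F_E$. For $d_E\geq 2$, $F_E$ is linearly independent of every operator of the form $Y_{E_1}\otimes\id_{E_2}$, so triviality on $E_2$ forces $A=B$. Concretely, the twirl acts only on $E_1$ of $\widetilde N$ and yields $A=B=M\otimes\id_{O_2}$ with $M=\frac{1}{d_E}\tr_{E_1}\widetilde N$. The condition $\tr_{I_2}\widetilde N=\rho\otimes\id_{O_1E_1}$ then gives $\tr_{I_2}M=\rho\otimes\id_{O_1}$. So both sectors share one valid bare normalization, and $-A\preceq G_s,G_a\preceq A$. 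Applying the bare trace-norm reduction sector by sector finishes the proof.

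This is exactly the paper's key step. The paper defines sector-conditioned outcomes by sandwiching with $\sqrt{\tau^\eta_E}$ and shows both have normalization $\frac{1}{d_E}\tr_{E_1}\widetilde N\otimes\id_{O_2}$. It gets this from $\tr_{E_2}\tau^{\mathrm{sym}}_E=\tr_{E_2}\tau^{\mathrm{asym}}_E=\id_{E_1}/d_E$ together with triviality of $T$ on $E_2$. Your twirl produces the same reduced process.

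The substitutes you sketch do not close the argument. The domination claim, that replacing $A$ and $B$ by $\bar T$ does not decrease the value, is unsupported. The map $f(X):=\norm{\sqrt X\,\Pi^{\mathrm{sym}}D\Pi^{\mathrm{sym}}\sqrt X}_1=\max_{-X\preceq G\preceq X}\tr[G\,\Pi^{\mathrm{sym}}D\Pi^{\mathrm{sym}}]$ is concave and homogeneous in $X$. Concavity only gives $f(\bar T)\geq\frac{d_E+1}{2d_E}f(A)+\frac{d_E-1}{2d_E}f(B)$, and likewise in the antisymmetric sector. That is not comparable to the mixed objective $f_{\mathrm{sym}}(A)+f_{\mathrm{asym}}(B)$ when $A\neq B$. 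Your dual-certificate fallback is left unspecified. Neither route is needed once you impose triviality on $E_2$.
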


\begin{proof}
We begin with the primal SDP
\begin{align}
 \max\quad & P_{\mathrm{SEQ}}(\mathsf{C}_\mathrm{avg}^{(2)};s)=\frac{1}{2} \tr \!\left[\widetilde{T}_0 R_{avg}^{(2)}(C_0)\right] + \frac{1}{2} \tr \!\left[\widetilde{T}_1R_{avg}^{(2)}(C_1)\right]
\label{eq:raw-prima-avg}\\
\text{subject to}\quad & \widetilde{T}_0 \succeq 0, \qquad \widetilde{T}_1 \succeq 0,\label{eq:primal-T-SDP-avg-positivity} \\
&\widetilde{T}= \widetilde{T}_0 + \widetilde{T}_1,\label{eq:primal-T-SDP-avg-outcome-sum} \\
&_{O_2}\widetilde{T}=_{I_2O_2}\widetilde{T}, \label{eq:primal-T-SDP-avg-normalization} \\
&_{O_1I_2O_2}\widetilde{T}=_{I_1O_1I_2O_2}\widetilde{T},\label{eq:causal-normalization-avg}
\end{align}
Here, as in Proposition~\ref{prop: reduction primal averaged}, the
notation \({}_X Y\) denotes the normalized partial trace
\[
{}_X Y:=\tr_X[Y]\otimes\frac{\id_X}{d_X}.
\]

For two copies, the averaged purification decomposes (Appendix \ref{ap:: expansion avge purif comb}, Eq.\eqref{eq:two-copy-sector-decomposition}) into matching symmetric and antisymmetric  sectors
\begin{align}
R_{avg}^{(2)}(C)&=\frac{2}{d_E(d_E+1)}\left[\left(C^{\otimes2}\Pi_{IO}^{\mathrm{sym}}\right)\otimes \Pi_E^{\mathrm{sym}}\right]+\frac{2}{d_E(d_E-1)}\left[\left(C^{\otimes2}\Pi^{\mathrm{sym}}\right)\otimes \Pi_E^{\mathrm{asym}}\right].
\end{align}

It is convenient to introduce the normalized environment states
\begin{align}
\tau_E^{\mathrm{sym}}&:=\frac{2\Pi_E^{\mathrm{sym}}}{d_E(d_E+1)},
&\tau_E^{\mathrm{asym}}&:=\frac{2\Pi_E^{\mathrm{asym}}}{d_E(d_E-1)}.
\end{align}
The two states are supported on orthogonal subspaces because
\[
\Pi_E^{\mathrm{sym}}\Pi_E^{\mathrm{asym}}=0.
\]
They are therefore perfectly distinguishable symmetry-sector flags. Their marginals on the first environment system are nevertheless identical.  Indeed, writing

\begin{align}
\Pi_E^{\mathrm{sym}}=\frac{1}{2}\left(\id_{E_1E_2}\pm F_{E_1E_2}\right),\Pi_E^{\mathrm{asym}}=\frac{1}{2}\left(\id_{E_1E_2}\pm F_{E_1E_2}\right),
\end{align}

and 

\begin{align}
\tr_{E_2}\id_{E_1E_2}=d_E\id_{E_1},\qquad \tr_{E_2}F_{E_1E_2}=\id_{E_1}.
\end{align}

Consequently,

\begin{align}
\tr_{E_2}\frac{2\Pi_{E}^{\mathrm{sym}}}{d_E(d_E+1)}=\frac{\id_{E_1}}{d_E},\qquad \tr_{E_2}\frac{2\Pi_{E}^{\mathrm{asym}}}{d_E(d_E-1)}=\frac{\id_{E_1}}{d_E}.
\end{align}
This equality is important for the sequential reduction. The two sectors
can be distinguished after the second environment output has been
received, but they induce the same state on the first environment system.
Therefore, both sectors must be compatible with the same sequential
normalization before the second round.

For $j\in\{0,1\}$\(\eta\in\{\mathrm{sym},\mathrm{asym}\}\), define the conditional tester outcomes

\begin{align}
\widetilde{T}_{j,\eta}:=\tr_{E_1E_2}\left[\left(\id_{I_1O_1I_2O_2}\otimes\sqrt{\tau_E^\eta}\right)\widetilde{T}_j\left(\id_{I_1O_1I_2O_2}\otimes\sqrt{\tau_E^\eta}\right)\right].
\label{eq:conditional-averaged-tester-outcomes}
\end{align}

Since \(\widetilde{T}_j\succeq0\), we have

\begin{align}
\widetilde{T}_{j,\mathrm{sym}}\succeq0,\qquad \widetilde{T}_{j,\mathrm{asym}}\succeq0.
\end{align}

For any operator $X_{AE}$, any operator $Y_A$, and any state $\tau_E$, the identity

\begin{align}
\tr\left[X_{AE}\left(Y_A\otimes\tau_E\right)\right]=\tr\left[Y_A\tr_E\left[\left(\id_A\otimes\sqrt{\tau_E}\right)X_{AE}\left(\id_A\otimes\sqrt{\tau_E}\right)\right]\right]
\end{align}

holds. Applying this identity to the sector decomposition of $R_{avg}^{(2)}(C_i)$ gives

\begin{align}
\tr\left[\widetilde{T}_jR_{avg}^{(2)}(C_i)\right]=\tr\left[\widetilde{T}_{j,\mathrm{sym}}\Pi^{\mathrm{sym}}C_i^{\otimes 2}\Pi^{\mathrm{sym}}\right]+\tr\left[\widetilde{T}_{j,\mathrm{asym}}\Pi^{\mathrm{asym}}C_i^{\otimes 2}\Pi^{\mathrm{asym}}\right].\label{eq:sector-resolved-averaged-statistics}
\end{align}

We next determine the normalization of the conditional testers. The
normalization constraints in Eq.~\eqref{eq:raw-prima-avg} imply the
usual factorization of the final output
\begin{align}
\widetilde{T}=\widetilde N_{I_1O_1E_1I_2}\otimes\id_{O_2E_2},
\label{eq:extended-tester-factorization}
\end{align}
where \(\widetilde N_{I_1O_1E_1I_2}\) is the process containing the
first channel output, the environment system \(E_1\) retained as memory,
and the input \(I_2\) to the second channel use. The causal constraint in
Eq.~\eqref{eq:raw-prima-avg} implies
\begin{align}
\tr_{I_2} \widetilde N_{I_1O_1E_1I_2}=\rho_{I_1}\otimes\id_{O_1E_1}.
\label{eq:extended-causal-normalization}
\end{align}

Using Eq.~\eqref{eq:extended-tester-factorization}, for
\(\eta\in\{\mathrm{sym},\mathrm{asym}\}\) we obtain

\begin{align}
\widetilde{T}_{0,\eta}+\widetilde{T}_{1,\eta}&=\tr_{E_1E_2}\left[\left(\id\otimes\sqrt{\tau_E^\eta}\right)\widetilde T\left(\id\otimes\sqrt{\tau_E^\eta}\right)\right]\\
&=\tr_{E_1}\left[\widetilde N_{I_1O_1E_1I_2}\left(\id_{I_1O_1I_2}\otimes\tr_{E_2}\tau_E^\eta\right)\right]\otimes\id_{O_2}\\
    &=\left(\frac{1}{d_E}\tr_{E_1}\widetilde N_{I_1O_1E_1I_2}\right)\otimes\id_{O_2}, 
\label{eq:flag-conditioned-normalization}
\end{align}
where we used
\[
\tr_{E_2}\tau_E^\eta=\frac{\id_{E_1}}{d_E}
\]
for both \(\eta=\mathrm{sym}\) and \(\eta=\mathrm{asym}\).

We now define the reduced process on the channel systems by
\begin{align}
N_{I_1O_1I_2}:=\frac{1}{d_E}\tr_{E_1}\widetilde N_{I_1O_1E_1I_2},\qquad T:=N_{I_1O_1I_2}\otimes\id_{O_2}.
\label{eq:reduced-tester-normalization-averaged}
\end{align}
Equation~\eqref{eq:flag-conditioned-normalization} then becomes
\begin{align}
\widetilde{T}_{0,\mathrm{sym}}+\widetilde{T}_{1,\mathrm{sym}}&=T,
&\widetilde{T}_{0,\mathrm{asym}}+\widetilde{T}_{1,\mathrm{asym}}&=T.
\label{eq:conditional-tester-common-normalization}
\end{align}

Equation~\eqref{eq:flag-conditioned-normalization} then becomes

\begin{align}
\widetilde{T}_{0,\mathrm{sym}}+\widetilde{T}_{1,\mathrm{sym}}=\widetilde{T}_{0,\mathrm{asym}}+\widetilde{T}_{1,\mathrm{asym}}=T.
\end{align}

Thus $\{\widetilde{T}_{0,\mathrm{sym}},\widetilde{T}_{1,\mathrm{sym}}\}$ and $\{\widetilde{T}_{0,\mathrm{asym}},\widetilde{T}_{1,\mathrm{asym}}\}$ are two
conditional binary testers with the same deterministic normalization $T$.
They are not four outcomes of a single tester on the reduced space; indeed,
\begin{align}
\widetilde{T}_{0,\mathrm{sym}}+\widetilde{T}_{1,\mathrm{sym}}+\widetilde{T}_{0,\mathrm{asym}}+\widetilde{T}_{1,\mathrm{asym}}=2T.
\end{align}

The reduced operator $N_{I_1O_1I_2}$ satisfies the usual sequential normalization. The reduced process satisfies the usual sequential normalization. Indeed,
using Eq.~\eqref{eq:extended-causal-normalization},

\begin{align}
\tr_{I_2}N_{I_1O_1I_2}&=\frac{1}{d_E}\tr_{E_1}\left[\tr_{I_2}N_{I_1O_1E_1I_2}\right]=\frac{1}{d_E}\tr_{E_1}\left[\rho_{I_1}\otimes\id_{O_1E_1}\right]=\rho_{I_1}\otimes\id_{O_1}.
\end{align}

Thus, \(T\) is a valid deterministic two-round tester
normalization of the same form as in the bare-channel problem. Consequently, every feasible tester for the original problem induces the parity-resolved optimization

\begin{align}
P_{\mathrm{SEQ}}\left(\mathsf{C}_{\mathrm{avg}}^{(2)};s\right)=\max\quad &\frac{1}{2}\tr\left[T_{0,\mathrm{sym}}\Pi^{\mathrm{sym}}C_0^{\otimes 2}\Pi^{\mathrm{sym}}\right]+\frac{1}{2}\tr\left[T_{1,\mathrm{sym}}\Pi^{\mathrm{sym}}C_1^{\otimes 2}\Pi^{\mathrm{sym}}\right]\\
&+\frac{1}{2}\tr\left[T_{0,\mathrm{asym}}\Pi^{\mathrm{asym}}C_0^{\otimes 2}\Pi^{\mathrm{asym}}\right]+\frac{1}{2}\tr\left[T_{1,\mathrm{asym}}\Pi^{\mathrm{asym}}C_1^{\otimes 2}\Pi^{\mathrm{asym}}\right]
\label{eq:parity-resolved-binary-sdp}\\
\text{subject to}\quad &T_{0,\mathrm{sym}}\succeq0,\qquad T_{1,\mathrm{sym}}\succeq0,\qquad T_{0,\mathrm{asym}}\succeq0,\qquad T_{1,\mathrm{asym}}\succeq0,\\
&T_{0,\mathrm{sym}}+T_{1,\mathrm{sym}}=N_{I_1O_1I_2}\otimes\id_{O_2},\\
&T_{0,\mathrm{asym}}+T_{1,\mathrm{asym}}=N_{I_1O_1I_2}\otimes\id_{O_2},\\
&N_{I_1O_1I_2}\succeq0,\qquad \rho_{I_1}\succeq0,\\
&\tr_{I_2}N_{I_1O_1I_2}=\rho_{I_1}\otimes\id_{O_1},\\
&\tr\rho_{I_1}=1.
\notag
\end{align}

The converse also holds. Given any feasible solution of the
parity-resolved problem, define
\begin{align}
\widehat N_{I_1O_1E_1I_2}:=N_{I_1O_1I_2}\otimes\id_{E_1},
\end{align}
and, for \(j\in\{0,1\}\),
\begin{align}
\widehat T_j:=T_{j,\mathrm{sym}}\otimes\Pi_E^{\mathrm{sym}}+T_{j,\mathrm{asym}}\otimes\Pi_E^{\mathrm{asym}}.
\end{align}
Since
\[
\Pi_E^{\mathrm{sym}}+\Pi_E^{\mathrm{asym}}=\id_{E_1E_2},
\]
we have
\begin{align}
\widehat T_0+\widehat T_1=\widehat N_{I_1O_1E_1I_2}\otimes\id_{O_2E_2}.
\end{align}
Moreover,
\begin{align}
\tr_{I_2}\widehat N_{I_1O_1E_1I_2}=\rho_{I_1}\otimes\id_{O_1E_1}.
\end{align}
Thus, the reconstructed operators satisfy the original deterministic
tester constraints.

Finally, because the two normalized flag states are supported on the
orthogonal projectors \(\Pi_E^{\mathrm{sym}}\) and
\(\Pi_E^{\mathrm{asym}}\),
\begin{align}
\tr\left[\widehat T_jR_{avg}^{(2)}(C_i)\right]&=\tr\left[T_{j,\mathrm{sym}}\Pi_{IO}^{\mathrm{sym}}C_i^{\otimes2}\Pi_{IO}^{\mathrm{sym}}\right]
\nonumber\\
&\quad+\tr\left[T_{j,\mathrm{asym}}\Pi_{IO}^{\mathrm{asym}}
C_i^{\otimes2}\Pi_{IO}^{\mathrm{asym}}\right].
\end{align}
Therefore, the original and parity-resolved problems have the same
optimal value, and thus, $\widetilde{T}=\widehat{T}$, $\widetilde{T}_i=\widehat{T}_i$ and $\widetilde{N}=\widehat{N}$.

It remains to optimize the two conditional binary measurements for a
fixed deterministic normalization \(T\). Define
\begin{align}
G_{\mathrm{sym}}&:=T_{0,\mathrm{sym}}-T_{1,\mathrm{sym}},&G_{\mathrm{asym}}&:=T_{0,\mathrm{asym}}-T_{1,\mathrm{asym}}.
\end{align}
Using Eq.~\eqref{eq:conditional-tester-common-normalization}, we have
\begin{align}
T_{0,\eta}=\frac{T+G_\eta}{2},\qquad T_{1,\eta}=\frac{T-G_\eta}{2},\qquad\eta\in\{\mathrm{sym},\mathrm{asym}\}.
\end{align}
Hence, positivity of the conditional outcomes is equivalent to
\begin{align}
-T\preceq G_{\mathrm{sym}}\preceq T,\qquad-T\preceq G_{\mathrm{asym}}\preceq T.
\label{eq:parity-W-order-interval}
\end{align}

Substituting these expressions into
Eq.~\eqref{eq:parity-resolved-binary-sdp} gives
\begin{align}
P_{\mathrm{SEQ}}\left(\mathsf{C}_{\mathrm{avg}}^{(2)};s\right)&=\frac{1}{4}\sum_{\eta\in\{\mathrm{sym},\mathrm{asym}\}}\tr\left[T\Pi_{IO}^\eta\left(C_0^{\otimes2}+C_1^{\otimes2}\right)\Pi_{IO}^\eta\right]
\nonumber\\
&\quad+\frac{1}{4}\tr\left[G_{\mathrm{sym}}\Pi_{IO}^{\mathrm{sym}}D\Pi_{IO}^{\mathrm{sym}}\right]
+\frac{1}{4}\tr\left[G_{\mathrm{asym}}\Pi_{IO}^{\mathrm{asym}}D\Pi_{IO}^{\mathrm{asym}}\right].
\label{eq:parity-success-probability}
\end{align}

Since
\[
\Pi_{IO}^{\mathrm{sym}}+\Pi_{IO}^{\mathrm{asym}}=\id_{I_1O_1I_2O_2},
\]
and \(T\) is a deterministic tester normalization, the first term in
Eq.~\eqref{eq:parity-success-probability} equals \(1/2\).

For fixed \(T\), the two decision operators can be optimized independently.
Applying the trace-norm reduction derived for the bare-channel problem in Eqs. \eqref{eq:fixed-T-trace-norm},\eqref{eq:bias-trace-norm-form} and \eqref{eq: en proof prop 1} to
each symmetry sector yields
\begin{align}
P_{\mathrm{SEQ}}\left(\mathsf{C}_{\mathrm{avg}}^{(2)};s\right)&=\frac{1}{2}+\frac{1}{4}\max_T\Bigg(\norm{\sqrt T\,\Pi_{IO}^{\mathrm{sym}}D\Pi_{IO}^{\mathrm{sym}}\sqrt T}_1
+\norm{\sqrt T\,\Pi_{IO}^{\mathrm{asym}}D\Pi_{IO}^{\mathrm{asym}}\sqrt T}_1\Bigg).
\end{align}
This proves the proposition.
\end{proof}

\subsection{Dual SDP}

For the certification of the two-copy advantage, we only need the dual
problem for the bare-channel ensemble. A feasible point of this dual problem gives an upper bound on the optimal value of \(P_{\mathrm{SEQ}}(\mathsf{C}_{\mathrm{bare}}^{(2)};s)\).
\begin{proposition}[Dual sequential binary discrimination of two channels]\label{prop: dual 2 bare copies}
The dual of the discrimination task in Eq.~\eqref{eq:bias-primal-TG} is
\begin{align}
\frac{1}{2}+\frac{1}{4}\min_{R,Q,\lambda}\quad
& \lambda \\
\text{subject to}\quad
& R\succeq D,\qquad R\succeq-D,\\
& \tr_{O_2}(R)\preceq Q_{I_1O_1}\otimes \id_{I_2},\\
& \tr_{O_1}Q_{I_1O_1}\preceq\lambda\id_{I_1},\\
& Q_{I_1O_1}=Q_{I_1O_1}^\dagger,\qquad \lambda\in\mathbb R.
\end{align}
\end{proposition}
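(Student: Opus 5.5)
Our plan is to derive the dual by standard Lagrangian duality applied to the program in Eq.~\eqref{eq:bias-primal-TG}. First we rewrite the primal so that only the primitive variables appear. We use the factorization $T=N_{I_1O_1I_2}\otimes\id_{O_2}$ together with $\tr_{I_2}N_{I_1O_1I_2}=\rho_{I_1}\otimes\id_{O_1}$ and $\tr\rho_{I_1}=1$, which were obtained in the proof of the primal reduction. The primal is then: maximize $\tr[GD]$ over Hermitian $G$, $N\succeq0$ and $\rho\succeq0$, subject to
\begin{align}
N\otimes\id_{O_2}-G\succeq0,\qquad N\otimes\id_{O_2}+G\succeq0,
\end{align}
$\tr_{I_2}N=\rho\otimes\id_{O_1}$ and $\tr\rho=1$. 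Multiplying the resulting value by $1/4$ and adding $1/2$ gives the success probability, so it suffices to dualize $\beta(D)$.

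Next we introduce multipliers. We take $P_\pm\succeq0$ for the two order constraints, a Hermitian $Q_{I_1O_1}$ for the causal equality, and a real $\lambda$ for the trace normalization. The Lagrangian is
\begin{align}
L=\tr[GD]+\tr[P_+(N\otimes\id-G)]+\tr[P_-(N\otimes\id+G)]+\tr[Q(\rho\otimes\id_{O_1}-\tr_{I_2}N)]+\lambda(1-\tr\rho).
\end{align}
The supremum over the free variable $G$ is finite only when $D=P_+-P_-$. The supremum over $N\succeq0$ is finite only when $\tr_{O_2}(P_++P_-)\preceq Q\otimes\id_{I_2}$. The supremum over $\rho\succeq0$ is finite only when $\tr_{O_1}Q\preceq\lambda\id_{I_1}$. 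The dual value is then $\lambda$.

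Setting $R:=P_++P_-$, the conditions $D=P_+-P_-$ with $P_\pm\succeq0$ imply $R\succeq\pm D$. Conversely, any $R\succeq\pm D$ yields the admissible choice $P_\pm=(R\pm D)/2$. Hence the multiplier constraints collapse exactly to those stated in Proposition~\ref{prop: dual 2 bare copies}. Weak duality follows directly from this construction, and it is the only ingredient needed for the upper bound in Thm.~\ref{thm:explicit-2copies-gap}.

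We also plan to check strong duality through Slater's condition. A strictly feasible primal point is $N=\frac{\id}{d_{I_2}}\cdot\frac{\id_{I_1}}{d_{I_1}}\otimes\id_{O_1}$ with $G=0$, which satisfies the order constraints strictly. The main obstacle is bookkeeping. The adjoint of $N\mapsto N\otimes\id_{O_2}$ is $\tr_{O_2}$, and the adjoint of $\tr_{I_2}$ is $\cdot\otimes\id_{I_2}$; getting these right is what places $Q\otimes\id_{I_2}$ and $\tr_{O_1}Q$ correctly in the constraints.
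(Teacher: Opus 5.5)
Your proposal is correct and follows essentially the same route as the paper. It uses the same causal reparametrization of $T$ via $(N,\rho)$, positive multipliers for the two order constraints (your $P_+,P_-$ play the roles of the paper's $K,J$), a Hermitian $Q$ and a real $\lambda$, the same three finiteness conditions, and the same collapse to $R=P_++P_-$ with $R\succeq\pm D$. Your explicit Slater point ($G=0$ with maximally mixed $N$ and $\rho$) is a small addition that the paper only asserts in passing in the two-copy theorem. It also justifies writing the dual as an attained minimum.
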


\begin{proof}

We begin with the causal form of the primal problem derived in
Eq.~\eqref{eq:bias-primal-TG}:
\begin{align}
\max_{T,G}\quad & \Tr[GD] \\
\text{subject to}\quad & -T\preceq G\preceq T, \notag\\
& T=T_0+T_2,\, \{T_0,T_1\}\in SEQ. \notag
\end{align}

Although the primal problem above was written using normalized
partial-trace constraints, it is equivalent to the usual causal
parametrization in terms of an intermediate process \(N\) and an initial
state \(\rho\) such that
\begin{align}
T=N_{I_1O_1I_2}\otimes\id_{O_2},\qquad N_
{I_1O_1I_2}\succeq0,
\label{eq:dual-causal-parametrization-bare}
\end{align}
where \(N_{I_1O_1I_2}\) contains the preparation, memory, and input to
the second channel use. The causal normalization is
\begin{align}
\tr_{I_2}N_{I_1O_1I_2}=\rho_{I_1}\otimes\id_{O_1},\qquad\rho_{I_1}\succeq0,
\label{eq:dual-causal-normalization-bare}
\end{align}
Then the optimization task reads
\begin{align}
P_{\mathrm{SEQ}}\left(\mathsf{C}_{\mathrm{bare}}^{(2)};s\right)=\frac{1}{2}+\frac{1}{4}\max_{G,N,\rho}\quad
& \tr[GD] \label{eq:causal-primal-bare}\\
\text{subject to}\quad
& N_{I_1O_1I_2}\otimes \id_{O_2}+G\succeq0,\label{eq:causal-primal-bare-plus}\\
& N_{I_1O_1I_2}\otimes \id_{O_2}-G\succeq0,\label{eq:causal-primal-bare-minus}\\
& N_{I_1O_1I_2}\succeq0,\qquad \rho_{I_1}\succeq0,\label{eq:causal-primal-bare-positivity}\\
& \tr_{I_2}N_{I_1O_1I_2}=\rho_{I_1}\otimes \id_{O_1},\label{eq:causal-primal-bare-causality}\\
& \tr\rho_{I_1}=1.\label{eq:causal-primal-bare-state-normalization}
\end{align}

To construct the dual, introduce positive-semidefinite operators $J\succeq0$ and $K\succeq0$ for the two semidefinite inequalitiesin
Eqs.~\eqref{eq:causal-primal-bare-plus} and\eqref{eq:causal-primal-bare-minus}. Introduce a Hermitian dual variable
\begin{align}
Q_{I_1O_1}=Q_{I_1O_1}^\dagger
\end{align}
 for the causal equality in Eq.~\eqref{eq:causal-primal-bare-causality}, and a real scalar\(\lambda\in\mathbb R\) for the normalization condition in Eq.~\eqref{eq:causal-primal-bare-state-normalization}.

The positive operators \(J\) and \(K\) act as certificates for the two
semidefinite tester constraints. The operator \(Q\) enforces the causal
normalization, and \(\lambda\) enforces the normalization of the initial
state. The corresponding Lagrangian is
\begin{align}
\mathcal{L}=&\tr[GD]+\tr[J(N\otimes \id_{O_2}+G)]+\tr[K(N\otimes \id_{O_2}-G)]\\
    &-\tr[Q(\tr_{I_2}[N]-\rho\otimes\id_{O_1})]-\lambda(\tr[\rho]-1).
\end{align}
Regrouping terms according to the primal variables gives
\begin{align}
  \mathcal{L}=&\tr[G(D+J-K)]+\tr\!\left[N\left(\tr_{O_2}(J+K)-Q\otimes \id_{I_2}\right)\right]\nonumber\\
    &+\tr\!\left[\rho\left(\tr_{O_1}Q-\lambda \id_{I_1}\right)\right]+\lambda.
\end{align}
The dual function is obtained by taking the supremum of this expression
over the primal variables \(G\), \(N\), and \(\rho\). For this supremum to
be finite, the coefficient of every unbounded primal variable must satisfy
the corresponding dual constraint.

First, \(G\) is an unconstrained Hermitian operator. Therefore,
\begin{align}
\sup_{G=G^\dagger}\tr\left[G(D+J-K)\right]
\end{align}
is finite if and only if
\begin{align}
D+J-K=0.
\end{align}
Equivalently,
\begin{align}
K-J=D.
\end{align}

Next, the variable $N$ is positive semidefinite. Hence,
\begin{align}
\sup_{N\succeq0}
\tr\left[N\left(\tr_{O_2}(J+K)-Q\otimes \id_{I_2}\right)\right]
\end{align}
is finite if and only if
\begin{align}
\tr_{O_2}(J+K)-Q\otimes \id_{I_2}\preceq0.
\end{align}
Equivalently,
\begin{align}
\tr_{O_2}(J+K)\preceq Q\otimes \id_{I_2}.
\end{align}
This is the dual counterpart of the causal normalization of the sequential
tester.

Similarly, because \(\rho\) is positive semidefinite,
\begin{align}
\sup_{\rho\succeq0}\tr\left[\rho\left(\tr_{O_1}Q-\lambda\id_{I_1}\right)\right]
\end{align}
is finite if and only if 
\begin{align}
\tr_{O_1}Q \preceq \lambda\id_{I_1}.
\end{align}
This constraint is the dual counterpart of the normalization
\(\tr\rho_{I_1}=1\).

When all these finiteness conditions hold, all terms depending on \(G\), \(N\), and \(\rho\) are bounded above, and the supremum of the Lagrangian reduces to the constant term:
\begin{align}
\sup_{G,N,\rho}\mathcal{L}=\lambda.
\end{align}

The dual can therefore first be written as
\begin{align}\label{eq:dual-JK-form}
\frac{1}{2}+\frac{1}{4}\min_{J,K,Q,\lambda}\quad & \lambda \\
\text{subject to}\quad
& J\succeq0,\qquad K\succeq0,\\
& K-J=D,\\
& \tr_{O_2}(J+K)\preceq Q_{I_1O_1}\otimes \id_{I_2},\\
& \tr_{O_1}Q_{I_1O_1} \preceq \lambda\id_{I_1},\\
& Q_{I_1O_1}=Q_{I_1O_1}^\dagger,\qquad \lambda\in\mathbb R.
\end{align}

It remains to combine the two positive operators \(J\) and \(K\) into a
single dual variable. Define
\begin{align}
R:=J+K.
\end{align}
Using \(K-J=D\), we find
\[
R-D=2J,\qquad R+D=2K.
\]
Thus, \(J\succeq0\) and \(K\succeq0\) are equivalent to
\begin{align}
R\succeq D,\qquad R\succeq-D.
\end{align}
Moreover,
\[
J+K=R.
\]
Substituting these relations into Eq.~\eqref{eq:dual-JK-form} yields
\begin{align}
\frac{1}{2}+\frac{1}{4}\min_{R,Q,\lambda}\quad& \lambda \\
\text{subject to}\quad
& R\succeq D,\qquad R\succeq-D,\\
& \tr_{O_2}(R)\preceq Q_{I_1O_1}\otimes \id_{I_2},\\
& \tr_{O_1}Q_{I_1O_1} \preceq \lambda\id_{I_1},\\
& Q_{I_1O_1}=Q_{I_1O_1}^\dagger,\qquad \lambda\in\mathbb R.
\end{align}
Any feasible choice of \(R\), \(Q\), and \(\lambda\) therefore provides an
upper bound on the optimal bare-channel success probability. This is the
form used below to construct the dual certificate for the two-copy qubit
example.
\end{proof}

\section{Proof of the two-copy qubit advantage}\label{ap:: qubit advantage}

This appendix proves the strict two-use advantage of averaged-purification
access for a pair of qubit channels [Theorem \ref{thm:explicit-2copies-gap}] in the main text. The proof compares a feasible primal certificate for the averaged-purification discrimination problem with a feasible dual certificate for the bare-channel problem.

\begin{figure}[H]
        \centering
        \includegraphics[width=\textwidth]{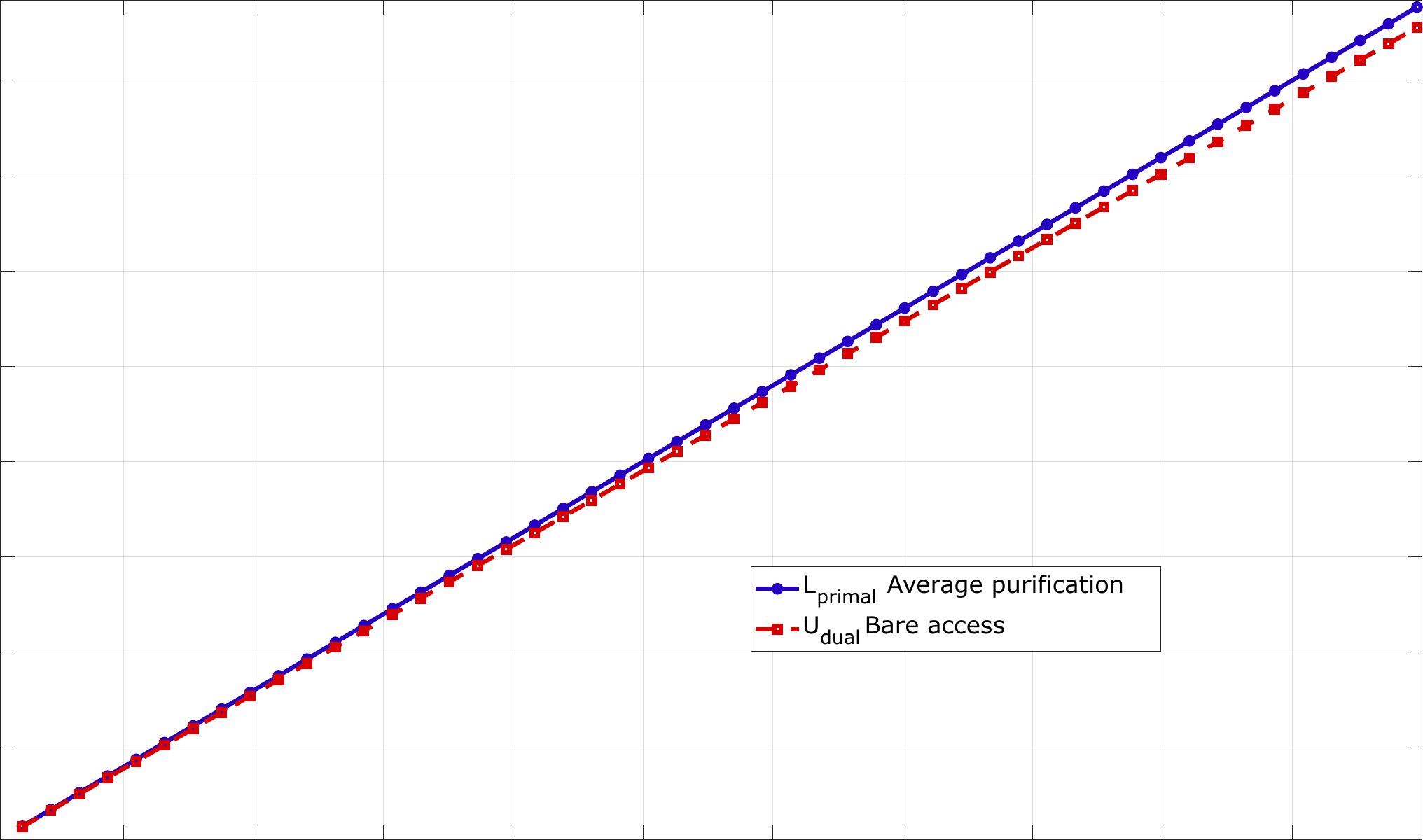}
        \caption{\textbf{Certified gap between two-use discrimination of the averaged purification $R_{avg}^{(2)}(C_\pm(t))$ and the bare channels $\{C^{\otimes 2}_\pm(t)\}$}. Ti can b seen that throughout the admissible range of $t$, the primal lower bound for averaged-dilation access (blue) exceeds the dual upper bound for bare access (red). This vertifies the existence of a gap in discrimination both families of channels using the two resources.}
        \label{fig:first}
\end{figure}


\begin{theorem}[Two-query sequential channel discrimination advantage from purification]
    In a channel discrimination task, where the score function is $s(x,y)=\delta_{x,y}$, for $k=2$ queries, there exist channel ensembles such that, for sequential strategies,
    \begin{align}
        P_{\mathrm{SEQ}}(\mathsf{C}_\mathrm{bare}^{(2)};s)
        < P_{\mathrm{SEQ}}(\mathsf{C}_\mathrm{avg}^{(2)};s) = P_{\mathrm{SEQ}}(\mathsf{C}_\mathrm{ukn}^{(2)};s).
    \end{align}
    \\
    
    In particular, let $\Hcal_I\simeq\Hcal_O\simeq\mathbb C^2$ and $\Hcal_E\simeq\mathbb C^{d_E}$, with $d_E\geq4$, and let $X,Y,Z$ be the Pauli matrices. On $\Hcal_I\otimes\Hcal_O$, set
    \begin{align}
        H\coloneqq X_I\otimes Z_O-Y_I\otimes Y_O-\id_I\otimes X_O.
    \end{align}
    Now take the hypothesis set $\mathsf{X}=\{+,-\}$ and the decision set $\mathsf{Y}=\mathsf{X}$. Encode the hypothesis set into the family of qubit-qubit channels with Choi operators defined as
    \begin{align}
        C_\pm(t)\coloneqq\frac{\id_{IO}}{2}\pm tH
        \in\mathsf L(\Hcal_I\otimes\Hcal_O),
    \end{align}
    for $0<t<\frac{1}{2\sqrt5}$.
    Then, set the channel ensembles $\mathsf{C}_\mathrm{bare}^{(2)}(t) \coloneqq \{p(\pm),R^{(2)}_\mathrm{bare}(C_\pm(t))\}$ and $\mathsf{C}_\mathrm{avg}^{(2)}(t) \coloneqq \{p(\pm),R^{(2)}_\mathrm{avg}(C_\pm(t))\}$   
    according to the two access models in Eqs.~\eqref{eq::R1} and~\eqref{eq::R2} and by taking an equal prior distribution $p(+)=p(-)=1/2$. Then, considering sequential strategies,
    \begin{align}
        P_{\mathrm{SEQ}}(\mathsf{C}_\mathrm{avg}^{(2)}(t);s)
        -P_{\mathrm{SEQ}}(\mathsf{C}_\mathrm{bare}^{(2)}(t);s)
        \geq \frac{7-3\sqrt{5}}{6}\,t>0.
    \end{align}
\end{theorem}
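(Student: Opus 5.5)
The equality $P_{\mathrm{SEQ}}(\mathsf{C}_\mathrm{avg}^{(2)};s)=P_{\mathrm{SEQ}}(\mathsf{C}_\mathrm{ukn}^{(2)};s)$ is immediate from Lemma~\ref{lemma::avg_equal_unk_sequential}, so the plan concentrates on the strict gap. I will first record that $C_\pm(t)$ are valid Choi operators for $0<t<\frac{1}{2\sqrt5}$. Trace preservation holds because $\tr_O H=0$, since every term of $H$ is traceless on $O$. Positivity holds because $H$ has operator norm at most $\sqrt5$. I will check this norm by squaring $H$: the anticommutation relations give $H^2=3\id+(\text{cross terms})$, and its largest eigenvalue should be $5$. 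Thus $\id/2\pm tH\succeq0$ exactly when $t\le 1/(2\sqrt5)$. Then $D=C_+^{\otimes2}-C_-^{\otimes2}=t(H\otimes\id+\id\otimes H)$ is linear in $t$, with no quadratic term, since the $t^2 H\otimes H$ pieces cancel. This linearity is why both bounds will be linear in $t$.

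\textbf{Lower bound for the averaged purification.} I will use Proposition~\ref{prop: reduction primal averaged}. It suffices to exhibit one deterministic tester normalization $T=N_{I_1O_1I_2}\otimes\id_{O_2}$ and evaluate
\[
\norm{\sqrt T\,\Pi^{\mathrm{sym}}D\Pi^{\mathrm{sym}}\sqrt T}_1+\norm{\sqrt T\,\Pi^{\mathrm{asym}}D\Pi^{\mathrm{asym}}\sqrt T}_1 .
\]
Because $D$ is symmetric under the swap of $A_1A_2$, the projections $\Pi^{\eta}D\Pi^{\eta}$ are the restrictions of $D$ to the two sectors. In the antisymmetric sector, $H\otimes\id+\id\otimes H$ acts like $2H$ on antisymmetrized pairs. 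My first candidate is a simple sequential $T$: send half of a maximally entangled state into $I_1$, then feed a fixed state (or the $O_1$ output via a swap) into $I_2$. I would compute both trace norms explicitly as $t$ times constants built from the spectra of $H$, which are presumably where the $\sqrt5$ arises. I would tune this $T$, possibly guided by numerics, so that the sum is as large as possible. The aim is a closed form $\tfrac12+\tfrac{t}{4}\alpha$.

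\textbf{Upper bound for bare access.} I will use Proposition~\ref{prop: dual 2 bare copies}. Since $D$ scales with $t$, I take $R=tR_1$, $Q=tQ_1$ and $\lambda=t\lambda_1$, with $R_1\succeq\pm(H\otimes\id+\id\otimes H)$. The natural first guess is $R_1=|H|\otimes\id+\id\otimes|H|$, which satisfies both constraints by the triangle inequality in operator order. Here $|H|$ has eigenvalues $\sqrt5$ and smaller, computed from $H^2$. Next I take $\tr_{O_2}R_1$, which involves $\tr_O|H|$, and bound it by $Q_1\otimes\id_{I_2}$. Then I bound $\tr_{O_1}Q_1\preceq\lambda_1\id$. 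If this crude guess does not beat the primal value, I will refine $R_1$ by adding a correction supported where $D$ vanishes. The goal is to arrive at $\tfrac12+\tfrac{t}{4}\beta$ with $\tfrac{\alpha-\beta}{4}=\tfrac{7-3\sqrt5}{6}$.

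The main obstacle is finding a dual certificate tight enough to fall below the primal value. The primal side only requires evaluating trace norms of explicit $16\times16$ matrices. The dual side must respect the causal structure, so the operator inequality $\tr_{O_2}R\preceq Q\otimes\id_{I_2}$ must hold with the right constant. I would obtain candidate $R,Q,\lambda$ numerically and then rationalize them into the field $\mathbb Q(\sqrt5)$. Feasibility would be verified by exhibiting explicit eigen-decompositions, and the final step is to confirm $7-3\sqrt5>0$.
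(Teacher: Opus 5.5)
Your architecture matches the paper's: Lemma~\ref{lemma::avg_equal_unk_sequential} for the equality with $\mathsf{C}_\mathrm{ukn}^{(2)}$, a feasible $T$ in Proposition~\ref{prop: reduction primal averaged} for the lower bound, a feasible $(R,Q,\lambda)$ in Proposition~\ref{prop: dual 2 bare copies} for the upper bound, and weak duality. But the proof \emph{is} the two certificates, and you supply neither. ``Tune $T$, possibly guided by numerics'' and ``rationalize numerical $(R,Q,\lambda)$ into $\mathbb{Q}(\sqrt5)$'' describe a search, not an argument. The window is narrow: you need a sector sum $\alpha$ and a dual value $\beta$ with $\alpha-\beta\geq\tfrac{14}{3}-2\sqrt5\approx0.2$. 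The candidates you name cannot reach it.

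\textbf{Primal candidates.} Feeding $I_2$ a fixed state gives $N=\rho_{I_1}\otimes\id_{O_1}\otimes\sigma_{I_2}$. Its lift $N\otimes\id_{E_1}$ is a parallel tester on the purified resource, so by Proposition~\ref{prop:unk-eq-avg-channel} its value is at most $P_{\mathrm{PAR}}(\mathsf{C}_\mathrm{avg}^{(2)};s)=P_{\mathrm{PAR}}(\mathsf{C}_\mathrm{bare}^{(2)};s)$. More generally, if $T$ commutes with the copy swap $F_A$, the two sector terms are the orthogonal blocks of $\sqrt{T}D\sqrt{T}$ and add up to the bare value $\norm{\sqrt{T}D\sqrt{T}}_1$. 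For instance, $T=\id/4$ gives $\tfrac14\norm{D}_1=1+3\sqrt5$ with $D=H\otimes\id+\id\otimes H$. Routing $O_1$ into $I_2$ gives a sector sum of only about $5.8$. What is needed is a genuinely sequential, swap-asymmetric $T$. The paper takes $N^\star=\tfrac14\id+\tfrac16\,\id_{I_1}\otimes X_{O_1}\otimes Z_{I_2}+\tfrac1{12}\,Z_{I_1}\otimes\id_{O_1}\otimes Z_{I_2}$, whose symmetric and antisymmetric sector norms are $6$ and $2$, so $\alpha=8$. Separately, $H\otimes\id+\id\otimes H$ does not act like $2H$ on the antisymmetric subspace; its eigenvalues there are $h_i+h_j$ with $i<j$.

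\textbf{Dual candidates.} Your guess $R_1=|H|\otimes\id+\id\otimes|H|$ forces $Q_1\succeq 2|H|+(1+\sqrt5)\id$ and hence $\lambda_1\geq4(1+\sqrt5)\approx12.9$. Your repair cannot help. A correction supported on $\ker D$ keeps $R$ block diagonal with respect to $\ker D\oplus(\ker D)^\perp$, and $R\succeq\pm D$ then forces $R\succeq0$ on $\ker D$. So you still have $R\succeq|D|$, hence $\tr_{O_2}R\succeq\tr_{O_2}|D|$ and $\lambda(R)\ge\lambda(|D|)$.

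Now use $P_\pm=(\id\pm Z_I\otimes X_O)/2$, which are natural because $H^2=3\id-2Z_I\otimes X_O$. The $(+,+)$ and $(+,-)$ blocks of $|D|$ leave a residual term $(1-\tfrac{1}{\sqrt5})(\id_{I_1}\otimes X_{O_1})P_+^{(1)}\otimes Z_{I_2}$ in $\tr_{O_2}|D|$. This forces $\lambda(|D|)\geq2+\tfrac{14}{\sqrt5}\approx8.26$, already above the averaged-side value $8$.

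The missing idea is that $R\succeq\pm D$ does not require $R\succeq|D|$. The paper keeps $|D|$ on the $(+,+)$, $(-,+)$, $(-,-)$ blocks. It replaces the $(+,-)$ block by $\tfrac73\id+(\id\otimes X_{O_1})\otimes[\id\otimes X_{O_2}-\tfrac13(X\otimes Z-Y\otimes Y)_{I_2O_2}]$. This block still dominates $\pm D$: the spectra of $R\mp D$ there are $\{0,\tfrac{20}{3}\}$ and $\{0,\tfrac{8}{3}\}$. Its $Z_{I_2}$ term exactly cancels that of the $(+,+)$ block after $\tr_{O_2}$. This yields $Q^\star=\tfrac{10}{3}P_+ +2\sqrt5\,P_-$ and $\lambda^\star=\tfrac{10}{3}+2\sqrt5$. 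The gap is then $\tfrac{t}{4}\bigl(8-\tfrac{10}{3}-2\sqrt5\bigr)=\tfrac{7-3\sqrt5}{6}\,t$.

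Without these two explicit objects, or equivalent ones, the proposal does not establish the theorem.
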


\begin{proof}
We prove the separation by constructing two explicit semidefinite-programming
certificates. Proposition~\ref{prop: reduction primal averaged} gives the
primal formulation for sequential discrimination of the averaged
purification ensemble, while Proposition~\ref{prop: dual 2 bare copies}
gives the dual formulation for the corresponding bare-channel ensemble.
Both are finite-dimensional SDPs satisfying the appropriate Slater
conditions, so strong duality holds. The argument below, however, only
requires weak duality.

A feasible deterministic tester normalization \(T^\star\) in the primal
problem gives an achievable success probability and therefore a lower
bound
\[
L_{\mathrm{primal}}(t)\leq P_{\mathrm{SEQ}}\left(\mathsf C_{\mathrm{avg}}^{(2)}(t);s\right).
\]
Likewise, any feasible dual point
\((R^\star,Q^\star,\lambda^\star)\) in the dual problem gives an upperbound
\[
P_{\mathrm{SEQ}}\left(\mathsf C_{\mathrm{bare}}^{(2)}(t);s\right)\leq U_{\mathrm{dual}}(t).
\]
It is therefore sufficient to verify the positivity and causal
constraints of the two certificates and to show that
\begin{align}
L_{\mathrm{primal}}(t)>U_{\mathrm{dual}}(t).
\end{align}

The proof has three steps. We first determine the Pauli-sector structure
of the perturbation \(H\). We then construct a feasible primal tester
normalization for the averaged-purification problem and evaluate the
corresponding symmetry-resolved objective. Finally, we construct a
feasible dual point for the bare-channel problem and compare the resulting
upper and lower bounds.\\


We begin by isolating the Pauli strings that generate the algebra used in both constructions. Introduce the three operators

\begin{align}
A_1&=\id_I\otimes X_O,\\ 
A_2&=X_I\otimes Z_O-Y_I\otimes Y_O,\\
A_3&=Z_I\otimes X_O. \label{eq:explicit-gap-generators} 
\end{align}

Then $H=A_2-A_1$. Direct Pauli multiplication gives

\begin{align} 
A_1^2=A_3^2=\id_{IO},\qquad [A_1,A_3]=[A_2,A_3]=\{A_1,A_2\}=0,\qquad A_2^2=2(\id_{IO}-A_3). 
\end{align}

Consequently,

\begin{align} 
H^2=3\id_{IO}-2A_3. 
\end{align}

In particular, \(A_3\) commutes with \(H\), so \(H\) is block diagonal with respect to the projectors

\begin{align}
 P_\pm:=\frac{\id_{IO}\pm Z_I\otimes X_O}{2},\qquad \Hcal_I\otimes\Hcal_O=\operatorname{ran}\Pi_+\oplus\operatorname{ran}\Pi_-.
\label{eq:explicit-gap-sector-decomposition}
\end{align}

On the two sectors,

\begin{align} 
H^2\Pi_+=\Pi_+,\qquad H^2\Pi_-=5\Pi_-. 
\end{align}

Therefore,
\begin{align} 
\norm{H}_\infty=\sqrt{5}. 
\end{align}

Moreover, $\tr_OH=0$, so that $\tr_OC_\pm(t)=\id_I$, and

\begin{align} 
C_\pm(t)\succeq\left(\frac{1}{2}-\sqrt{5}\,t\right)\id_{IO}\succ0 
\end{align}

whenever $0<t<1/(2\sqrt{5})$. Thus $C_\pm(t)$ are full-rank valid channel Choi operators in this range.

The two-copy difference is linear in \(t\). Define the $t$-independent two-copy difference

\begin{align} 
D&:=\frac{1}{t}\left(C_+(t)^{\otimes2}-C_-(t)^{\otimes2}\right)\\
&=\id_{I_1O_1}\otimes H_{I_2O_2}+H_{I_1O_1}\otimes\id_{I_2O_2}. \label{eq:explicit-gap-scaled-difference} 
\end{align}

The terms quadratic on $t$ because the two hypotheses differ only by the sign of the perturbation \(tH\).

By the averaged-purification discrimination reduction in Proposition~\ref{prop: reduction primal averaged},

\begin{align} 
P_{\mathrm{SEQ}}\left(\mathsf C_{\mathrm{avg}}^{(2)}(t);s\right)=\frac{1}{2}+\frac{t}{4}\max_T\left\{\norm{\sqrt{T}\Pi^{\mathrm{sym}}D\Pi^{\mathrm{sym}}\sqrt{T}}_1+\norm{\sqrt{T}\Pi^{\mathrm{asym}}D\Pi^{\mathrm{asym}}\sqrt{T}}_1\right\}.
\label{eq:explicit-gap-scal-primal} 
\end{align}

\paragraph{Primal certificate.\\}

We now construct a feasible deterministic sequential tester normalization for the averaged-purification problem. The symmetry reduction in Eq.~\eqref{eq:parity-resolved-binary-sdp} shows that the tester should be chosen to make both the symmetric and antisymmetric channel sectors contribute constructively to the trace-norm objective.\\

The starting point is the maximally mixed deterministic normalization $N_0=\frac{\id_{I_1O_1I_2}}{4}$, whose causal marginal is $\tr_{I_2}N_0=\frac{\id_{I_1}}{2}\otimes\id_{O_1}$. To identify useful perturbations of $N_0$ that can give rise to a valid tester, we use the local algebra of $H$. On the $A_3=+1$ sector, $H\Pi_+=-(\id\otimes X)\Pi_+$, while $\Pi_+(\id\otimes X)=\left(\id\otimes X+Z\otimes\id\right)/2$. This singles out the two first-copy directions $\id_{I_1}\otimes X_{O_1}$ and $Z_{I_1}\otimes\id_{O_1}$. To preserve the causal marginal, these directions must be coupled to an operator with vanishing partial trace on $\Hcal_{I_2}$; the simplest Pauli choice is $Z_{I_2}$. This leads to the restricted family $N_{r,s}=N_0+r\,\id_{I_1}\otimes X_{O_1}\otimes Z_{I_2}+s\,Z_{I_1}\otimes\id_{O_1}\otimes Z_{I_2}$. Positivity restricts the coefficients to $|r|+|s|\leq1/4$, and maximizing the symmetric- and antisymmetric-sector trace norms within this feasible family fixes $r=1/6$ and $s=1/12$. Thus the coefficients in the final ansatz are calibrated by the primal objective under positivity and causality, rather than chosen independently.

The resulting feasible primal point is
  \begin{align}
      T^\star&= N^\star \otimes \id_{O_2}\\
      N^\star&=\left(\frac{1}{4}\id_{I_1O_1I_2}+\frac{1}{6}\id_{I_1}\otimes X_{O_1}\otimes Z_{I_2}+\frac{1}{12}Z_{I_1}\otimes\id_{O_1}\otimes Z_{I_2}\right)
  \end{align}

We first verify positivity. Both Pauli strings are Hermitian involutions, so the operators

\begin{align}
\frac{1}{2}\left(\id_{I_1O_1I_2}+\id_{I_1}\otimes X_{O_1}\otimes Z_{I_2}\right),\qquad \frac{1}{2}\left(\id_{I_1O_1I_2}+Z_{I_1}\otimes\id_{O_1}\otimes Z_{I_2}\right) 
\end{align}
are projectors. Moreover,
\begin{align}
N^\star=\frac13\,\frac{\id+\id_{I_1}\otimes X_{O_1}\otimes Z_{I_2}}{2}+\frac16\,\frac{\id+Z_{I_1}\otimes\id_{O_1}\otimes Z_{I_2}}{2}.
\end{align}
Thus $N^\star$, and hence $T^\star$, is positive semidefinite:
 \begin{align}
     T^\star\succeq 0.
 \end{align}

Both perturbation terms in \(N^\star\) contain the traceless operator \(Z_{I_2}\). Their partial traces over \(I_2\) therefore vanish, and

\begin{align} \tr_{I_2}N^\star=\frac{1}{4}\tr_{I_2}\id_{I_1O_1I_2}=\frac{1}{2}\id_{I_1O_1}=\frac{\id_{I_1}}{2}\otimes\id_{O_1}. \end{align}

Defining

\begin{align} \rho_{I_1}:=\frac{\id_{I_1}}{2}, \end{align}

we have

\begin{align} \rho_{I_1}\succeq0,\qquad \tr\rho_{I_1}=1,\qquad \tr_{I_2}N^\star=\rho_{I_1}\otimes\id_{O_1}. \end{align}

Finally, $T^\star$  has the required final-output factorization

\begin{align} T^\star=N^\star\otimes\id_{O_2}, 
\end{align}

and its total normalization is

\begin{align} \tr T^\star=\tr N^\star\,\tr\id_{O_2}=d_{O_1}d_{O_2}=4. \end{align}

Therefore $T^\star$ satisfies positivity and all the sequential causality and normalization constraints, and is a valid deterministic two-step sequential tester normalization.\\

It remains to evaluate the objective achieved by \(T^\star\). The operator \(D\) is invariant under exchanging the two copies, and therefore commutes with the two-copy projectors \(\Pi_{IO}^{\mathrm{sym}}\) and \(\Pi_{IO}^{\mathrm{asym}}\). Using \(\det(\lambda\id-AB)=\det(\lambda\id-BA)\), we obtain
\begin{align} 
\det(\lambda\id-\sqrt{T}\Pi^{\mathrm{sym}}D\Pi^{\mathrm{sym}}\sqrt{T})&=\det\left(\lambda\id-T^\star D\Pi^{\mathrm{sym}}\right),\\
\det(\lambda\id-\sqrt{T}\Pi^{\mathrm{asym}}D\Pi^{\mathrm{asym}}\sqrt{T})&=\det\left(\lambda\id-T^\star D\Pi^{\mathrm{asym}}\right). \label{eq:primal-characteristic-reduction} 
\end{align}

The operator \(T^\star D\) is

\begin{align}
T^\star D&=\left[\frac{1}{4}\id_{I_1O_1I_2O_2}+\frac{1}{6}\id_{I_1}\otimes X_{O_1}\otimes Z_{I_2}\otimes\id_{O_2}+\frac{1}{12}Z_{I_1}\otimes\id_{O_1}\otimes Z_{I_2}\otimes\id_{O_2}\right]\\ 
&\quad\times\left[\id_{I_1O_1}\otimes H_{I_2O_2}+H_{I_1O_1}\otimes\id_{I_2O_2}\right].
\label{eq:explicit-Tstar-D}
\end{align}

Consequently, the two operators whose characteristic polynomials are
needed are

\begin{align} 
T^\star D\Pi^{\mathrm{sym}}&=\left[\frac{1}{4}\id_{I_1O_1I_2O_2}+\frac{1}{6}\id_{I_1}\otimes X_{O_1}\otimes Z_{I_2}\otimes\id_{O_2}+\frac{1}{12}Z_{I_1}\otimes\id_{O_1}\otimes Z_{I_2}\otimes\id_{O_2}\right]\\ 
&\quad\times\left[\id_{I_1O_1}\otimes H_{I_2O_2}+H_{I_1O_1}\otimes\id_{I_2O_2}\right]\Pi^{\mathrm{sym}},\\
T^\star D\Pi^{\mathrm{asym}}&=\left[\frac{1}{4}\id_{I_1O_1I_2O_2}+\frac{1}{6}\id_{I_1}\otimes X_{O_1}\otimes Z_{I_2}\otimes\id_{O_2}+\frac{1}{12}Z_{I_1}\otimes\id_{O_1}\otimes Z_{I_2}\otimes\id_{O_2}\right]\\ 
&\quad\times\left[\id_{I_1O_1}\otimes H_{I_2O_2}+H_{I_1O_1}\otimes\id_{I_2O_2}\right]\Pi^{\mathrm{asym}}.
\label{eq:explicit-projected-Tstar-D} \end{align}

Using the Pauli relations in Eq.~\eqref{eq:explicit-gap-generators}, together with \(H=A_2-A_1\), \((\Pi_A^{\mathrm{sym}})^2=\Pi_A^{\mathrm{sym}}\), \((\Pi_A^{\mathrm{asym}})^2=\Pi_A^{\mathrm{asym}}\), and \(\Pi_A^{\mathrm{sym}}\Pi_A^{\mathrm{asym}}=0\), one obtains
\begin{align}
\det\left(\lambda\id-T^\star D\Pi_A^{\mathrm{sym}}\right)=\lambda^8(\lambda^2-1)^2\left(\lambda^2+\frac{1}{3}\lambda-\frac{2}{9}\right)\left(\lambda^2-\frac{1}{3}\lambda-\frac{2}{9}\right).
\label{eq:Ksym-characteristic-polynomial}
\end{align}
Similarly,
\begin{align}
\det\left(\lambda\id-T^\star D\Pi_A^{\mathrm{asym}}\right)=\lambda^{12}\left(\lambda^2+\frac{1}{3}\lambda-\frac{2}{9}\right)\left(\lambda^2-\frac{1}{3}\lambda-\frac{2}{9}\right).
\label{eq:Kasym-characteristic-polynomial}
\end{align}

Hence the nonzero eigenvalues of the symmetric sector operator are 
\begin{align}
1,1,-1,-1,\frac{2}{3},-\frac{2}{3},\frac{1}{3},-\frac{1}{3}, 
\end{align}

whereas those for the antisymmetric sector are
\begin{align}
\frac{2}{3},-\frac{2}{3},\frac{1}{3},-\frac{1}{3}. 
\end{align}

Therefore,

\begin{align} 
\norm{\sqrt{T^\star}\Pi^{\mathrm{sym}}D\Pi^{\mathrm{sym}}\sqrt{T^\star}}_1&=6,\\ \norm{\sqrt{T^\star}\Pi^{\mathrm{asym}}D\Pi^{\mathrm{asym}}\sqrt{T^\star}}_1&=2. 
\end{align}

Adding the two contributions gives

\begin{align} \norm{\sqrt{T^\star}\Pi^{\mathrm{sym}}D\Pi^{\mathrm{sym}}\sqrt{T^\star}}_1+\norm{\sqrt{T^\star}\Pi^{\mathrm{asym}}D\Pi^{\mathrm{asym}}\sqrt{T^\star}}_1=8. \end{align}

Using Eq.~\eqref{eq:explicit-gap-scal-primal}, the feasible tester \(T^\star\) gives
\begin{align}
P_{\mathrm{SEQ}}\left(\mathsf C_{\mathrm{avg}}^{(2)}(t);s\right)\geq L_{\mathrm{primal}}(t):=\frac{1}{2}+\frac{t}{4}(6+2)=\frac{1}{2}+2t.
\label{eq:Lprimal}
\end{align}

\paragraph{Dual certificate.\\}

We now construct a feasible point of the dual SDP for the discrimination of $C_+(t)^{\otimes2}$ and $C_-(t)^{\otimes2}$. Since $C_+(t)^{\otimes2}-C_-(t)^{\otimes2}=tD$, it is sufficient to construct the certificate for the $t$-independent operator $D$ and restore the factor $t$ at the end. The dual optimization program is 

\begin{align} 
\min_{R,Q}\quad &\lambda\\
\text{s.t}\quad &R\succeq D,\qquad R\succeq-D,\\ 
&\tr_{O_2}R\preceq Q_{I_1O_1}\otimes\id_{I_2},\label{eq: consrtaint dual causality 1}\\ 
&\tr_{O_1}Q_{I_1O_1}\preceq\lambda\id_{I_1},\label{eq: consrtaint dual causality 2}\\ 
&Q_{I_1O_1}=Q_{I_1O_1}^\dagger,\qquad \lambda\in\mathbb{R}.\label{eq: consrtaint dual causality 3}
\end{align}

A natural starting point is the spectral majorant $|D|$, since $|D|\succeq D$ and $|D|\succeq-D$.Its partial trace, however, doesnot satisfy the first dual causal constraint because it contains acomponent proportional to \(Z_{I_2}\). The sector decomposition induced by\(\Pi_{j_1}^{(1)}\otimes\Pi_{j_2}^{(2)}\) isolates this obstruction in the\((+,-)\) sector. We therefore retain the absolute-value majorant in the other three sectors and modify only the \((+,-)\) block. The coefficients
in this corrected block are chosen so that the unwanted \(Z_{I_2}\) contribution cancels after tracing over \(O_2\), while the two order constraints \(R\succeq D\) and \(R\succeq-D\) remain valid.

The resulting  dual ansatz is
\begin{align}
R^\star:={}&\left[\id_{I_1O_1I_2O_2}+\left(\id_{I_1}\otimes X_{O_1}\right)\otimes\left(\id_{I_2}\otimes X_{O_2}\right)\right]\left(\Pi_+^{(1)}\otimes \Pi_+^{(2)}\right)\nonumber\\
&+\left\{\frac{7}{3}\id_{I_1O_1I_2O_2}+\left(\id_{I_1}\otimes X_{O_1}\right)\otimes\left[\id_{I_2}\otimes X_{O_2}-\frac{1}{3}\left(X_{I_2}\otimes Z_{O_2}-Y_{I_2}\otimes Y_{O_2}\right)\right]\right\}\left(\Pi_+^{(1)}\otimes \Pi_-^{(2)}\right)\nonumber\\
&+\left[\sqrt{5}\,\id_{I_1O_1I_2O_2}-\frac{1}{\sqrt{5}}H_{I_1O_1}\otimes\left(\id_{I_2}\otimes X_{O_2}\right)\right]\left(\Pi_-^{(1)}\otimes \Pi_+^{(2)}\right)\nonumber\\
&+\left[\sqrt{5}\,\id_{I_1O_1I_2O_2}+\frac{1}{\sqrt{5}}H_{I_1O_1}\otimes H_{I_2O_2}\right]\left(\Pi_-^{(1)}\otimes \Pi_-^{(2)}\right).
\end{align}

We first verify the order constraints. Both $D$ and $R^\star$ are block diagonal with respect to the orthogonal decomposition induced by $\Pi_{j_1}^{(1)}\otimes \Pi_{j_2}^{(2)}$, with $j_1,j_2\in\{+,-\}$. In the $(+,+)$, $(-,+)$, and $(-,-)$ sectors, the blocks of $R^\star$ are, respectively,
\begin{align}
R_{++}^\star=|D_{++}|,\qquad R_{-+}^\star=|D_{-+}|,\qquad R_{--}^\star=|D_{--}|.
\end{align}
For every Hermitian operator $A$, one has $|A|-A\succeq0$ and $|A|+A\succeq0$. Consequently,
\begin{align}
R_{j_1j_2}^\star\succeq D_{j_1j_2},\qquad R_{j_1j_2}^\star\succeq-D_{j_1j_2},
\end{align}
in these three sectors.

It remains to check the corrected $(+,-)$ block. On $\operatorname{ran}\Pi_+^{(1)}$, fix an eigenvalue $a\in\{+1,-1\}$ of $\id_{I_1}\otimes X_{O_1}$. On $\operatorname{ran}\Pi_-^{(2)}$, define
\begin{align}
B:=\left.(\id_{I_2}\otimes X_{O_2})\right|_{\operatorname{ran}\Pi_-^{(2)}},\qquad \hat B:=\left.\frac{1}{2}\left(X_{I_2}\otimes Z_{O_2}-Y_{I_2}\otimes Y_{O_2}\right)\right|_{\operatorname{ran}\Pi_-^{(2)}}.
\end{align}

Here, the restriction symbol means that the operators are considered only
on the indicated invariant subspace. The operator \(B\) is the restriction
of the first Pauli direction, while \(\hat B\) is the normalized restriction of
the second direction appearing in \(H\).

The relations $A_1^2=\id$, $A_2^2=2(\id-A_3)$, and $\{A_1,A_2\}=0$ imply
\begin{align}
B^2=\hat B^2=\id,\qquad \{B,\hat B\}=0
\end{align}
on $\operatorname{ran}\Pi_-^{(2)}$. Therefore the eigenvalues of $u\id+vB+w\hat B$ are $u\pm\sqrt{v^2+w^2}$.

On the eigenspace labelled by \(a\), the restrictions of the two relevant
operators are
\begin{align}
\left.D_{+-}\right|_a&=-a\id-B+2\hat B,\\
\left.R_{+-}^\star\right|_a&=\frac{7}{3}\id+aB-\frac{2a}{3}\hat B.
\end{align}
For $a=+1$, this gives
\begin{align}
\left.\left(R_{+-}^\star-D_{+-}\right)\right|_{a=+1}&=\frac{10}{3}\id+2B-\frac{8}{3}\hat B,\\
\left.\left(R_{+-}^\star+D_{+-}\right)\right|_{a=+1}&=\frac{4}{3}\id+\frac{4}{3}\hat B.
\end{align}
Their eigenvalues are, respectively,
\begin{align}
\frac{10}{3}\pm\sqrt{2^2+\left(\frac{8}{3}\right)^2}&\in\left\{0,\frac{20}{3}\right\},\\
\frac{4}{3}\pm\frac{4}{3}&\in\left\{0,\frac{8}{3}\right\}.
\end{align}
Thus both operators are positive semidefinite. For $a=-1$, one obtains
\begin{align}
\left.\left(R_{+-}^\star-D_{+-}\right)\right|_{a=-1}&=\frac{4}{3}\id-\frac{4}{3}\hat B,\\
\left.\left(R_{+-}^\star+D_{+-}\right)\right|_{a=-1}&=\frac{10}{3}\id-2B+\frac{8}{3}\hat B,
\end{align}
whose eigenvalues are again $\{0,8/3\}$ and $\{0,20/3\}$, respectively. Hence,
\begin{align}
R_{+-}^\star\succeq D_{+-},\qquad R_{+-}^\star\succeq-D_{+-}.
\end{align}
Since the four sector projectors are mutually orthogonal and sum to the identity, the sector wise inequalities imply
\begin{align}
R^\star-D\succeq0,\qquad R^\star+D\succeq0.
\end{align}
Therefore,
\begin{align}
R^\star\succeq D,\qquad R^\star\succeq-D.
\end{align}

We next verify the first dual causal constraint, Eq.~\eqref{eq: consrtaint dual causality 1}. For the $(+,+)$ block,
\begin{align}
\tr_{O_2}R_{++}^\star=\Pi_+^{(1)}\otimes\id_{I_2}+\left(\id_{I_1}\otimes X_{O_1}\right)\Pi_+^{(1)}\otimes Z_{I_2}.
\end{align}
For the corrected $(+,-)$ block,
\begin{align}
\tr_{O_2}R_{+-}^\star=\frac{7}{3}\Pi_+^{(1)}\otimes\id_{I_2}-\left(\id_{I_1}\otimes X_{O_1}\right)\Pi_+^{(1)}\otimes Z_{I_2}.
\end{align}
The terms proportional to $Z_{I_2}$ cancel, and hence
\begin{align}
\tr_{O_2}\left(R_{++}^\star+R_{+-}^\star\right)=\frac{10}{3}\Pi_+^{(1)}\otimes\id_{I_2}.
\end{align}

For the $(-,+)$ block,
\begin{align}
\tr_{O_2}R_{-+}^\star=\sqrt{5}\,\Pi_-^{(1)}\otimes\id_{I_2}-\frac{1}{\sqrt{5}}H_{I_1O_1}\Pi_-^{(1)}\otimes Z_{I_2}.
\end{align}
For the $(-,-)$ block,
\begin{align}
\tr_{O_2}R_{--}^\star=\sqrt{5}\,\Pi_-^{(1)}\otimes\id_{I_2}+\frac{1}{\sqrt{5}}H_{I_1O_1}\Pi_-^{(1)}\otimes Z_{I_2}.
\end{align}
Once again, the terms proportional to $Z_{I_2}$ cancel, giving
\begin{align}
\tr_{O_2}\left(R_{-+}^\star+R_{--}^\star\right)=2\sqrt{5}\,\Pi_-^{(1)}\otimes\id_{I_2}.
\end{align}
Adding the four sectors yields
\begin{align}
\tr_{O_2}R^\star=\left(\frac{10}{3}\Pi_+^{(1)}+2\sqrt{5}\,\Pi_-^{(1)}\right)\otimes\id_{I_2}.
\end{align}
We may therefore choose
\begin{align}
Q^\star_{I_1O_1}:=\frac{10}{3}\Pi_+^{(1)}+2\sqrt{5}\,\Pi_-^{(1)}.
\end{align}
Equivalently,
\begin{align}
Q^\star_{I_1O_1}=\left(\frac{5}{3}+\sqrt{5}\right)\id_{I_1O_1}+\left(\frac{5}{3}-\sqrt{5}\right)Z_{I_1}\otimes X_{O_1}.
\end{align}
Thus the first causal constraint, Eq.~\eqref{eq: consrtaint dual causality 1}, is satisfied with equality:
\begin{align}
\tr_{O_2}R^\star=Q^\star_{I_1O_1}\otimes\id_{I_2}\preceq Q^\star_{I_1O_1}\otimes\id_{I_2}.
\end{align}

We now check the second dual causal constraint, Eq.~\eqref{eq: consrtaint dual causality 2}. Since
\begin{align}
\tr_{O_1}\Pi_\pm^{(1)}=\id_{I_1},
\end{align}
we have
\begin{align}
\tr_{O_1}Q^\star_{I_1O_1}=\left(\frac{10}{3}+2\sqrt{5}\right)\id_{I_1}.
\end{align}
Choosing
\begin{align}
\lambda^\star:=\frac{10}{3}+2\sqrt{5},
\end{align}
the second dual causal constraint is also satisfied with equality:
\begin{align}
\tr_{O_1}Q^\star_{I_1O_1}=\lambda^\star\id_{I_1}\preceq\lambda^\star\id_{I_1}.
\end{align}
Finally, $Q^\star=(Q^{\star})^{\dagger}$ because it is a real linear combination of orthogonal projectors, and $\lambda^\star\in\mathbb{R}$. Hence $(R^\star,Q^\star,\lambda^\star)$ satisfy all dual constraints.

Recalling that $C_+^{\otimes2}-C_-^{\otimes2}=tD$ and that $t>0$, we obtain the following dual upper bound for the discrimination of the two bare channels:
\begin{align}
P_{\mathrm{SEQ}}\left(\mathsf C_{\mathrm{bare}}^{(2)}(t);s\right)\leq U_{\mathrm{dual}}(t):=\frac{1}{2}+\frac{t}{4}\left(\frac{10}{3}+2\sqrt5\right).
\end{align}
Combining this dual upper bound with the primal lower bound in
Eq.~\eqref{eq:Lprimal}, we obtain
\begin{align}
P_{\mathrm{SEQ}}\left(\mathsf C_{\mathrm{avg}}^{(2)}(t);s\right)-P_{\mathrm{SEQ}}\left(\mathsf C_{\mathrm{bare}}^{(2)}(t);s\right)&\geq L_{\mathrm{primal}}(t)-U_{\mathrm{dual}}(t)\\
&=\frac{1}{2}+2t-\frac{1}{2}-\frac{t}{4}\left(\frac{10}{3}+2\sqrt5\right)\\
&=\frac{t}{4}\left(8-\frac{10}{3}-2\sqrt5\right)\\
&=\frac{7-3\sqrt5}{6}\,t.
\end{align}
Since \(7-3\sqrt5>0\), the difference is strictly positive for every
\(0<t<1/(2\sqrt5)\). Therefore,
\begin{align}
P_{\mathrm{SEQ}}\left(\mathsf C_{\mathrm{avg}}^{(2)}(t);s\right)>P_{\mathrm{SEQ}}\left(\mathsf C_{\mathrm{bare}}^{(2)}(t);s\right),
\end{align}
which proves the claimed two-use averaged-purification advantage.

\end{proof}

The analytical expression also matches the numerically optimized gap certificate $(L_{\mathrm{primal}}-U_{\mathrm{dual}})$ throughout the admissible range of $t$.
\begin{figure}[H]
        \centering
        \includegraphics[width=0.8\textwidth]{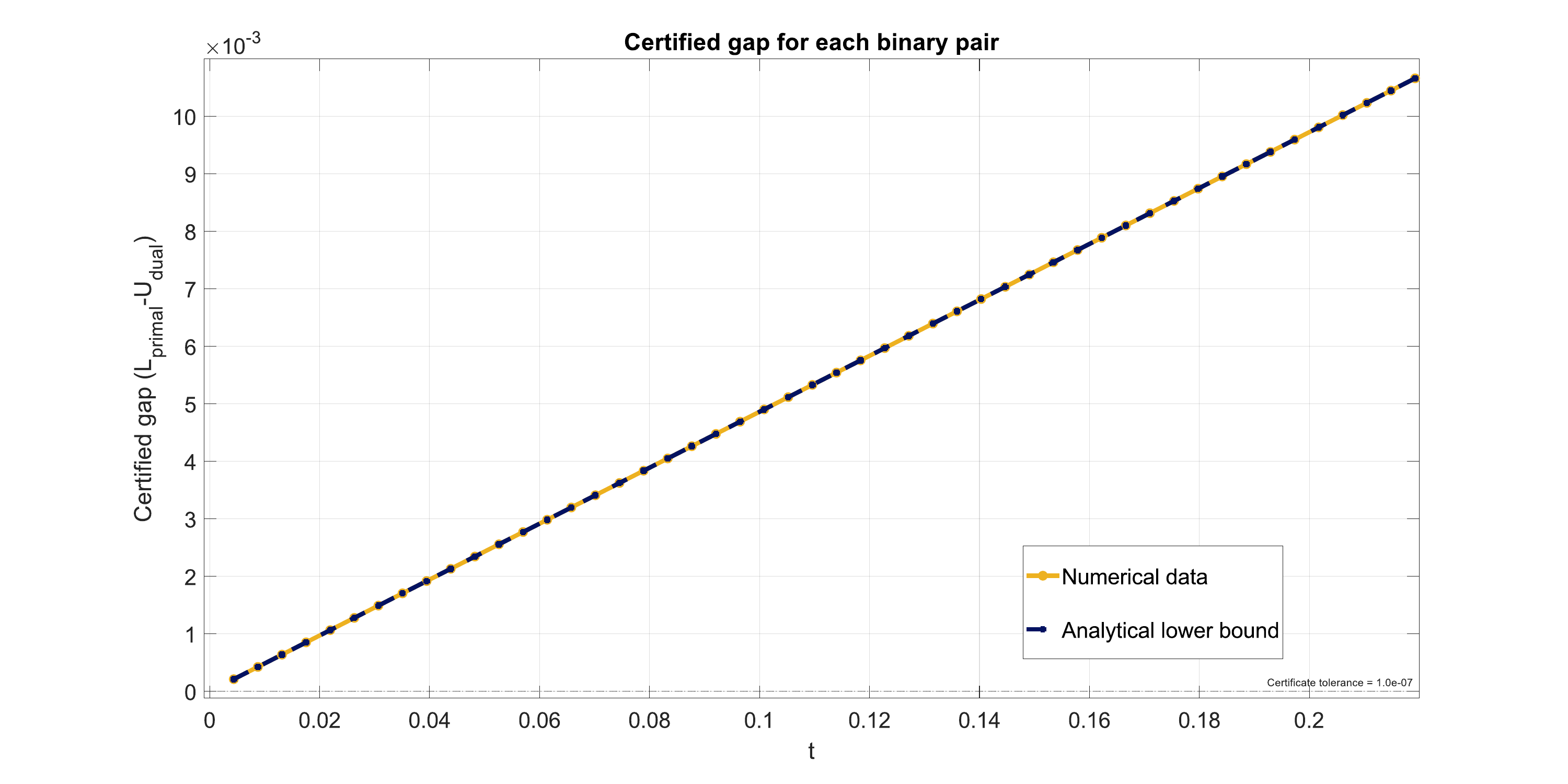}
        \caption{Gap certificate obtained by numerical optimization of the task, compared with the analytical expression.}
        \label{fig:second}
    \end{figure}

\section{Proof of the advantage for three or more copies}\label{app:: perfect discrimination k copies}\label{app: advantage k copies}

We now construct a pair of finite-dimensional channels whose averaged
purifications can be perfectly discriminated with three sequential uses, although the original channels cannot be perfectly discriminated with any finite number of ordinary channel uses. The strict separation therefore
persists for every finite number of bare queries.

The construction has two ingredients. First, the channels must fail the
finite-query criterion for perfect discrimination. By the characterization
of finite-query channel discrimination in Ref.~\cite{Duan09}, it is sufficient to choose channels that are not entanglement-assisted disjoint.Second, their Choi supports must be sufficiently different for the corresponding averaged purification combs to become perfectly distinguishable. The construction below uses three purified uses; it does not exclude the possibility of a different two-use perfect-discrimination
example.
Our construction requires three purified uses; it does not exclude the existence of a two-use perfect-discrimination example.

\begin{lemma}
    The quantum channels with Choi operators  Let us begin by rewriting the two Choi operators 
 \begin{align}
        C_0&\coloneqq\ketbra{00}+\ketbra{11},  \\
        C_1&\coloneqq\frac12\ketbra{-0}+\ketbra{+1}+\frac12\ketbra{-1},  
    \end{align}
    are not entanglement-assisted disjoint.
\end{lemma}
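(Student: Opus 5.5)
The plan is to reduce the statement to linear algebra on Kraus operators. For a bipartite input $\ket{\psi}_{RI}$, the output support of a channel with Kraus operators $\{K_i\}$ is $\operatorname{span}\{(\id_R\otimes K_i)\ket{\psi}\}$. For a mixed input $\rho_{RI}$, the support of the output contains the output support of any pure state in its decomposition. So it suffices to show that the two output supports intersect nontrivially for every pure input.

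Write $\ket{\psi}=(\id_R\otimes M)\ket{\Omega}$ for some nonzero operator $M$, up to identifying $R$ with a copy of $I$. A common nonzero output vector exists as soon as there are $E\in\mathcal E:=\operatorname{span}\{E_i\}$ and $F\in\mathcal F:=\operatorname{span}\{F_j\}$ with $EM=FM\neq0$. The proof thus reduces to producing such a pair for every nonzero $M$.

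First I will read off Kraus operators from the Choi vectors, using the convention that $\ket{a}_I\ket{b}_O$ corresponds to the operator $\ket{b}\langle\bar a|$.
\begin{itemize}
\item For $C_0$ this gives $E_0=\ketbra{0}$ and $E_1=\ketbra{1}$, so $\mathcal E$ is the algebra of operators diagonal in the computational basis.
\item For $C_1$ this gives $F_0=\tfrac{1}{\sqrt2}\ket{0}\!\bra{-}$, $F_1=\ket{1}\!\bra{+}$ and $F_2=\tfrac{1}{\sqrt2}\ket{1}\!\bra{-}$.
\end{itemize}
Hence $\mathcal F$ consists of all operators whose first row $\bra{0}F$ is proportional to $\bra{-}$, with an arbitrary second row. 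I will also double-check trace preservation, $\sum_j F_j^\dagger F_j=\id$, as a sanity check on the Choi convention.

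The key observation is that $\mathcal E\cap\mathcal F\ni\ketbra{1}$. A diagonal operator has first row proportional to $\bra{0}$, which is proportional to $\bra{-}$ only if it vanishes; so the intersection is exactly $\operatorname{span}\{\ketbra{1}\}$.

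Next I split into cases according to $M$.
\begin{itemize}
\item If $\ketbra{1}M\neq0$, take $E=F=\ketbra{1}$. Then $(\id_R\otimes\ketbra{1})\ket{\psi}$ is a nonzero vector in both output supports.
\item Otherwise the range of $M$ lies in $\operatorname{span}\{\ket{0}\}$. The input is then of the form $\ket{r}_R\ket{0}_I$, i.e.\ a product input. Here $E_0$ maps it to $\ket{r}\ket{0}$, while $F_0$ maps it to $\tfrac{1}{\sqrt2}\langle -|0\rangle\,\ket{r}\ket{0}\neq0$, which is proportional to the same vector. So the intersection is again nonzero.
\end{itemize}
This covers all inputs, including those entangled with an arbitrary reference, so the two channels are not entanglement-assisted disjoint. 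This is the fact needed to invoke the finite-query criterion of Ref.~\cite{Duan09} in Thm.~\ref{thm: perfect discrimination 3 copies}.

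The main obstacle I expect is bookkeeping rather than substance. One point is fixing the Choi-to-Kraus convention, namely where the complex conjugation falls and which factor is input; since all vectors here are real this is harmless, but the ordering $I\otimes O$ must be respected. The other point is making sure that the reduction from mixed to pure inputs, and the identification of reference-assisted outputs with $(\id\otimes K)(\id\otimes M)\ket{\Omega}$, matches the precise definition of entanglement-assisted disjointness used in Ref.~\cite{Duan09}. If that definition is phrased through $\operatorname{span}\{E_i^\dagger F_j\}$ instead, I would rephrase the argument accordingly: the case analysis above is the same computation expressed in terms of supports.
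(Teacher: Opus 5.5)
Your proposal is correct and follows the paper's proof. Your split on whether $\ketbra{1}M\neq0$ is exactly the paper's split on whether $\ket{r_1}\neq0$, and your two common output vectors, $(\id_R\otimes\ketbra{1})\ket{\psi}=\ket{r_1}_R\ket{1}_O$ and $\ket{r}_R\ket{0}_O$, are the same ones the paper finds in $(R_\psi\otimes\id_O)\supp C_0\cap(R_\psi\otimes\id_O)\supp C_1$; your Kraus-span language is just the operator form of those Choi supports. One caution about your closing fallback: a criterion phrased as $\id\in\operatorname{span}\{E_i^\dagger F_j\}$ is not equivalent to the support-intersection property and actually fails here, since $\operatorname{span}\{E_i^\dagger F_j\}=\operatorname{span}\{\ket{0}\bra{-},\ket{1}\bra{+},\ket{1}\bra{-}\}$ does not contain $\id$, so rely on the support formulation you actually prove, which already rules out finite-query perfect discrimination by a simple induction over the rounds.
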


\begin{proof}

We take $\supp C_0=\operatorname{span}\{\ket{00},\ket{11}\}$. The choice of \(C_1\) is guided by two requirements. First, its support should intersect \(\supp C_0\) non trivially, so that the two channels are not immediately perfectly distinguishable. Second, the support should still be different enough from \(\supp C_0\) to allow a separation after purification.

We therefore choose \(\supp C_1\) to be the hyperplane orthogonal to
\(\ket{+0}\):
\begin{align}
\supp C_1:=\ket{+0}^{\perp}=\operatorname{span}\{\ket{-0},\ket{+1},\ket{-1}\}.
\end{align}
Indeed, for a vector \(a\ket{00}+b\ket{11}\in\supp C_0\), the condition
\(\bra{+0}(a\ket{00}+b\ket{11})=0\) gives \(a=0\). Hence,
\begin{align}
\supp C_0\cap\supp C_1=\operatorname{span}\{\ket{11}\}.
\end{align}

It remains to choose a positive Choi operator supported on this
hyperplane and satisfying the trace-preserving condition. In the basis
\(\{\ket{-0},\ket{+1},\ket{-1}\}\), consider
\begin{align}
C_1=a\ketbra{-0}+b\ketbra{+1}+c\ketbra{-1},
\qquad a,b,c\geq0.
\end{align}
Its partial trace over the output system is
\begin{align}
\tr_OC_1=(a+c)\ketbra{-}+b\ketbra{+}.
\end{align}
Since
\[
\id_I=\ketbra{+}+\ketbra{-},
\]
the trace-preserving condition \(\tr_OC_1=\id_I\) requires
\[
b=1,\qquad a+c=1.
\]
We choose the symmetric positive solution \(a=c=1/2\), which gives
\begin{align}
C_1=\frac{1}{2}\ketbra{-0}+\ketbra{+1}+\frac{1}{2}\ketbra{-1}.
\end{align}

Both Choi operators are positive, and their partial traces satisfy
\begin{align}
    \tr_{O}C_0=\id_I,\quad \tr_{O}C_1=\frac{1}{2}\ketbra{-}+\ketbra{+}+\frac{1}{2}\ketbra{-}=\id_{I}.
\end{align}

Before proving the failure of entanglement-assisted disjointness, we record several useful properties. Both channels are entanglement breaking and have commuting outputs, since they admit the measure-and-prepare representations
\begin{align}
\Ccal_0(\rho)&=\bra{0}\rho\ket{0}\ketbra{0}+\bra{1}\rho\ket{1}\ketbra{1},\\
\Ccal_1(\rho)&=\bra{+}\rho\ket{+}\ketbra{1}+\bra{-}\rho\ket{-}\frac{\id_O}{2}.
\end{align}
Thus $\Ccal_0$ is the unital completely dephasing channel in the computational basis, whereas $\Ccal_1$ measures in the $X$ basis and is nonunital. Moreover,
\begin{align}
\operatorname{rank}C_0=2,\qquad \operatorname{rank}C_1=3,\qquad \supp C_0\cap\supp C_1=\operatorname{span}\{\ket{11}\}.
\end{align}
Consequently, $d_E=3$ is the smallest common purification-environment dimension, and the common vector $\ket{11}$, equivalently the common Kraus direction $\ket{1}_O\!\bra{1}_I$, already suggests the obstruction to entanglement-assisted disjointness that we now prove.\\

Let $\Hcal_R$ be an arbitrary reference system and let
\begin{align}
\ket{\psi}_{RI}=\ket{r_0}_R\ket{0}_I+\ket{r_1}_R\ket{1}_I
\end{align}
be an arbitrary nonzero pure input. Define the linear map $R_\psi:\Hcal_I\rightarrow\Hcal_R$ by
\begin{align}
R_\psi\ket{0}_I=\ket{r_0}_R,\qquad R_\psi\ket{1}_I=\ket{r_1}_R.
\end{align}
Using the Choi representation, the corresponding output state of $\Ccal_i$ is
\begin{align}
\left(\id_R\otimes\Ccal_i\right)\left(\ketbra{\psi}\right)=\left(R_\psi\otimes\id_O\right)C_i\left(R_\psi^\dagger\otimes\id_O\right).
\end{align}
Therefore, the output support is
\begin{align}
\mathcal{S}_i(\psi):=\supp\left[\left(\id_R\otimes\Ccal_i\right)\left(\ketbra{\psi}\right)\right]=\left(R_\psi\otimes\id_O\right)\supp C_i.
\end{align}
Using \(\supp C_0=\operatorname{span}\{\ket{00},\ket{11}\}\), we obtain
\begin{align}
\mathcal S_0(\psi)=\operatorname{span}\left\{\ket{r_0}_R\ket{0}_O,\ket{r_1}_R\ket{1}_O\right\}.
\end{align}
Similarly, using
\(\supp C_1=\operatorname{span}\{\ket{-0},\ket{+1},\ket{-1}\}\), we find
\begin{align}
\mathcal S_1(\psi)=\operatorname{span}\left\{(\ket{r_0}-\ket{r_1})_R\ket{0}_O,(\ket{r_0}+\ket{r_1})_R\ket{1}_O,(\ket{r_0}-\ket{r_1})_R\ket{1}_O\right\},
\end{align}
where irrelevant factors \(1/\sqrt2\) have been omitted.

If $\ket{r_1}\neq0$, then
\begin{align}
\ket{r_1}_R\ket{1}_O=\frac{1}{2}\left[\left(\ket{r_0}+\ket{r_1}\right)_R\ket{1}_O-\left(\ket{r_0}-\ket{r_1}\right)_R\ket{1}_O\right]
\end{align}
belongs to both output supports. Hence,
\begin{align}
\ket{r_1}_R\ket{1}_O\in\mathcal{S}_0(\psi)\cap\mathcal{S}_1(\psi).
\end{align}
If $\ket{r_1}=0$, then $\ket{r_0}\neq0$ because $\ket{\psi}$ is nonzero, and
\begin{align}
\ket{r_0}_R\ket{0}_O\in\mathcal{S}_0(\psi)\cap\mathcal{S}_1(\psi).
\end{align}
Thus, for every reference system $\Hcal_R$ and every nonzero input $\ket{\psi}_{RI}$,
\begin{align}
\mathcal{S}_0(\psi)\cap\mathcal{S}_1(\psi)\neq\{0\}.
\end{align}
The channels are therefore not entanglement-assisted disjoint. By the
finite-query discrimination criterion of Ref.~\cite{Duan09}, they cannot
be perfectly discriminated using any finite number of bare channel uses.
In particular,
\begin{align}
P_{\mathrm{SEQ}}\left(\mathsf C_{\mathrm{bare}}^{(3)};s\right)<1.
\end{align}
\end{proof}

We now show that the averaged purification combs of these channels can
nevertheless be perfectly discriminated with three sequential uses. Applying the Weingarten expansion from Eq.~\eqref{eq: expanded permutation Scal k} gives
\begin{align}
R^{(3)}_{avg}(C_i)=\sum_{\pi,\sigma\in\mathsf S_3}\operatorname{Wg}_{d_E,3}(\pi^{-1}\sigma)\left(C_i^{\otimes3}W_\pi^A\right)\otimes W_\sigma^E.
\end{align}

For a spectral decomposition $C=\sum_a\lambda_a\ketbra{c_a}$, the operator appearing in this expression is
\begin{align}
C^{\otimes3}W_\pi^{A}=\sum_{a_1,a_2,a_3}\lambda_{a_1}\lambda_{a_2}\lambda_{a_3}\bigotimes_{\ell=1}^3\ket{c_{a_\ell}}\bra{c_{a_{\pi(\ell)}}}.
\end{align}
For \(C_0=\ketbra{00}+\ketbra{11}\), this gives
\begin{align}
R_{avg}^{(2)}(C_0)^{(3)}=\sum_{\pi,\sigma\in \mathsf S_3}\operatorname{Wg}_{d_E,3}(\pi^{-1}\sigma)\sum_{a_1,a_2,a_3\in\{0,1\}}\left(\bigotimes_{\ell=1}^3\ket{a_\ell a_\ell}\bra{a_{\pi(\ell)}a_{\pi(\ell)}}\right)\otimes W_\sigma^E.
\end{align}
For \(C_1\), define
\begin{align}
\ket{t_0}:=\ket{-0},\qquad\ket{t_1}:=\ket{+1},\qquad\ket{t_2}:=\ket{-1},
\end{align}
so that
\begin{align}
C_1=\frac{1}{2}\ketbra{t_0}+\ketbra{t_1}+\frac{1}{2}\ketbra{t_2}.
\end{align}
Since the product of the three eigenvalues is $2^{-\#\{\ell:b_\ell\neq1\}}$, we obtain
\begin{align}
R_{avg}^{(2)}(C_1)^{(3)}=\sum_{\pi,\sigma\in \mathsf S_3}\operatorname{Wg}_{d_E,3}(\pi^{-1}\sigma)\sum_{b_1,b_2,b_3\in\{0,1,2\}}2^{-\#\{\ell:b_\ell\neq1\}}\left(\bigotimes_{\ell=1}^3\ket{t_{b_\ell}}\bra{t_{b_{\pi(\ell)}}}\right)\otimes W_\sigma^E.
\end{align}

\begin{theorem}[Perfect sequential channel discrimination with purification.]
\label{app thm: perfect discrimination 3 copies}
    In a channel discrimination task, where the score function is $s(x,y)=\delta_{x,y}$, there exist channel ensembles such that, for sequential strategies,
    \begin{align}
        P_{\mathrm{SEQ}}(\mathsf{C}_\mathrm{bare}^{(k)};s)
        < P_{\mathrm{SEQ}}(\mathsf{C}_\mathrm{avg}^{(3)};s) 
        = P_{\mathrm{SEQ}}(\mathsf{C}_\mathrm{ukn}^{(3)};s) = 1.
    \end{align}
    for all finite $k$.
    \\

    In particular, let $\Hcal_I\simeq\Hcal_O\simeq\mathbb C^2$ and $\Hcal_E\simeq\mathbb C^{d_E}$, with $d_E\geq3$. Now take the hypothesis set $\mathsf{X}=\{0,1\}$ and the decision set $\mathsf{Y}=\mathsf{X}$. Encode the hypothesis set into the pair of qubit-qubit channels with Choi operators given by
    \begin{align}
        C_0&\coloneqq\ketbra{00}+\ketbra{11}, \\
        C_1&\coloneqq\frac12\ketbra{-0}+\ketbra{+1}+\frac12\ketbra{-1}, 
    \end{align}
    where $\ket{\pm}=(\ket0\pm\ket1)/\sqrt2$. 
    Then, set the channel ensembles $\mathsf{C}_\mathrm{bare}^{(k)}$, $\mathsf{C}_\mathrm{avg}^{(3)}$, and $\mathsf{C}_\mathrm{ukn}^{(3)}(\{U_x\})$
    according to the two access models in Eqs.~\eqref{eq::R1}--\eqref{eq::R3} and by taking channels $C_0$ and $C_1$ and an equal prior distribution $p(0)=p(1)=1/2$. These ensembles satisfy Eq.~\eqref{eq::inf_separation}.
\end{theorem}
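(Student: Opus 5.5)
The statement has three parts, and I would treat them separately. The strict inequality against bare access for every finite $k$ is already established: the preceding Lemma shows that $C_0$ and $C_1$ are not entanglement-assisted disjoint. By the finite-query criterion of Ref.~\cite{Duan09}, no sequential strategy using finitely many bare queries discriminates them perfectly, so $P_{\mathrm{SEQ}}(\mathsf{C}_\mathrm{bare}^{(k)};s)<1$ for all $k$. The equality $P_{\mathrm{SEQ}}(\mathsf{C}_\mathrm{avg}^{(3)};s)=P_{\mathrm{SEQ}}(\mathsf{C}_\mathrm{ukn}^{(3)};s)$ is Lemma~\ref{lemma::avg_equal_unk_sequential} with $k=3$ and $\mathrm{QS}=\mathrm{SEQ}$. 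Everything then reduces to exhibiting a deterministic three-round sequential tester $\{T_0,T_1\}$ with $\tr[T_1 R^{(3)}_{avg}(C_0)]=\tr[T_0 R^{(3)}_{avg}(C_1)]=0$.

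To make the perfect-discrimination condition tractable, I would use Proposition~\ref{prop:support-averaged-purification-three}. Since $d_E\geq3=\max\{\operatorname{rank}C_0,\operatorname{rank}C_1\}$, we have
\begin{align}
\supp R^{(3)}_{avg}(C_i)=\operatorname{Sym}^3\left(\supp C_i\otimes\Hcal_E\right).
\end{align}
Perfect discrimination therefore means that the tester outcome $T_1$ annihilates $\operatorname{Sym}^3(\supp C_0\otimes\Hcal_E)$ and $T_0$ annihilates $\operatorname{Sym}^3(\supp C_1\otimes\Hcal_E)$. Equivalently, in the fixed-purification picture, I would look for a sequential circuit with pure state preparation, isometric interventions and a final two-outcome projective measurement. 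For every $U,U'$, its final output on $V_{C_0,U}$ must be orthogonal to its final output on $V_{C_1,U'}$.

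The mechanism I expect to exploit is the following. The two supports meet only along $\ket{11}$, so each hypothesis leaves a different pattern of Kraus branches across the queries. Because the same unknown environment frame is shared by all three queries, the branches of the three queries are correlated in a $U$-independent way. Consider which branch appears at each query:
\begin{itemize}
\item for $C_0$, the environment vectors attached to the branches $\ket{0}\!\bra{0}$ and $\ket{1}\!\bra{1}$ are orthogonal;
\item for $C_1$, three mutually orthogonal environment vectors appear, attached to $\ket{-0},\ket{+1},\ket{-1}$.
\end{itemize}
A symmetric or antisymmetric test on environment triples, implemented through the sector projectors $\Pi_E^{[3]},\Pi_E^{[2,1]},\Pi_E^{[1,1,1]}$ of Appendix~\ref{ap:: expansion avge purif comb}, can detect whether three distinct environment directions were used. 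This is impossible for $C_0$: its support has rank two, so the fully antisymmetric component $\mathbb S_{[1,1,1]}(\supp C_0)$ vanishes in the Cauchy decomposition. The concrete plan is then:
\begin{enumerate}
\item Choose adaptive qubit inputs (e.g.\ $\ket{0}$, then inputs conditioned on the measured outputs, using $\ket{\pm}$) that steer $C_1$ into populating all three of its Kraus directions with a nonzero amplitude, while keeping a record of the outputs in memory.
\item Measure the three environments in the sector $\Pi_E^{[1,1,1]}$, combined with the recorded $O$ outcomes, so that an antisymmetric environment component certifies $C_1$.
\item Complete the tester with an outcome that certifies $C_0$ on the remaining branches, using the orthogonality of $\ket{+0}$ with $\supp C_1$ (output $\ket{0}$ after input $\ket{+}$ never occurs under $C_1$).
\end{enumerate}

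The main obstacle is step three together with the normalization. The rank-three argument only gives a one-sided certificate: the antisymmetric sector never fires under $C_0$, but it need not fire with probability one under $C_1$. The inputs must therefore be arranged so that every branch of $C_1$ lands either in the antisymmetric environment sector or in a detectable $O$-pattern that is impossible under $C_0$, such as output $0$ after input $\ket{+}$, and conversely for $C_0$. If a hand construction of this kind fails, I would first solve the three-round sequential SDP numerically with $d_E=3$ (the SDP of Appendix~\ref{ap:: SDPs} extended to three rounds). From the optimizer I would read off a structured tester, rationalize it using the $\mathsf S_3$ sector decomposition in Eq.~\eqref{eq:three-copy-sector-decomposition}, and verify the causal constraints of Eq.~\eqref{eq:SEQ_tester} and the two annihilation conditions directly. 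Finally, I would check that the construction is independent of $d_E\geq3$, since the annihilation conditions depend only on the supports through Proposition~\ref{prop:support-averaged-purification-three}.
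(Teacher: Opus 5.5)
Your treatment of the bare-access bound (the non-disjointness Lemma plus the finite-query criterion of Ref.~\cite{Duan09}) is correct and is what the paper does, and so is your use of Lemma~\ref{lemma::avg_equal_unk_sequential} for $P_{\mathrm{SEQ}}(\mathsf{C}_\mathrm{avg}^{(3)};s)=P_{\mathrm{SEQ}}(\mathsf{C}_\mathrm{ukn}^{(3)};s)$. The gap is in the substantive part: you never exhibit a zero-error three-round tester, and the mechanism you sketch cannot produce one.

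Suppose, as in your step 1, the outputs are measured and the later inputs are conditioned on the recorded outcomes. Then on each classical branch the environment state under hypothesis $x$ and frame $U$ is the product $\bigotimes_{\ell=1}^{3}Uk^{(x)}_\ell$, with $k^{(x)}_\ell=\bra{\phi_\ell}_O V_{C_x}\ket{\psi_\ell}_I\in\Hcal_E$. Zero error on that branch requires $\prod_\ell\bra{k^{(0)}_\ell}W\ket{k^{(1)}_\ell}=0$ for every unitary $W=U^\dagger U'$. This is a product of linear forms in the entries of $W$. By Zariski density of the unitary group, it vanishes identically only if some $k^{(x)}_\ell=0$. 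So on any branch that is possible under both hypotheses, no measurement of the environments separates them; in any case $\Pi_E^{[1,1,1]}$ is only a one-sided certificate for $C_1$. If instead no branch were possible under both hypotheses, the same inputs and output measurements would already discriminate the bare channels perfectly, contradicting the Lemma. Hence the rank-two versus rank-three sector argument cannot be what drives perfect discrimination. The final measurement has to interfere different output histories, which requires keeping the outputs coherently in memory. A numerical SDP does not replace this missing construction.

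The idea you need is to turn the single relation $\ket{+0}\perp\supp C_1$ into a coherent null vector. This relation says $\bra{0}_O V\ket{0}_I=-\bra{0}_O V\ket{1}_I$ for every purification $V$ of $C_1$. Apply it to copies 2 and 3, then use the invariance of $\ket{V_{C_1,U}}^{\otimes3}$ under exchange of the first two channel--environment copies. This gives
\begin{equation*}
\left(\bra{h_1}\otimes\id_{E_1E_2E_3}-\bra{h_2}\otimes F_{E_1E_2}\otimes\id_{E_3}\right)\ket{V_{C_1,U}}^{\otimes 3}=0\quad\text{for all } U,
\end{equation*}
with $h_1=(I_1O_1,I_2O_2,I_3O_3)=(11,00,00)$ and $h_2=(10,11,10)$. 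Take input $\ket{1}$ in round one and set $I_2=I_3=\neg O_1$, implemented coherently. Then $h_1$ is the deterministic history of $C_0$, and $h_2$ is forbidden by $C_0$.

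The paper's $T_0^{(3)}$ is $YY^\dagger$, with $Y^\dagger$ the operator displayed above, so it annihilates $\supp R_{avg}^{(3)}(C_1)$. $T_1^{(3)}$ is the same coherent term with $O_3=1$ in both histories, plus diagonal completions with $(I_2,O_2)\in\{(0,1),(1,0)\}$. Every term of $T_1^{(3)}$ carries an input--output mismatch, so it annihilates $\supp R_{avg}^{(3)}(C_0)$. The causal constraints then check directly with $\rho=\ket{1}\!\bra{1}_{I_1}$. The environment enters only through the two-copy swap $F_{E_1E_2}$, not through the antisymmetric sector.
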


\begin{proof}

We now propose the two tester outcomes
\(T_0^{(3)}\) and \(T_1^{(3)}\) given below. We will prove that they form
an optimal three-round sequential tester in two steps. First, we verify
that they are positive and satisfy all recursive causal-normalization
constraints. Second, we show that each outcome has zero probability on
the comb corresponding to the opposite hypothesis. The resulting
success probability is therefore equal to one. Since no discrimination
probability can exceed one, this proves that the proposed tester is
optimal. 
The relation
\[
\supp C_1=\ket{+0}^{\perp}
\]
implies that contraction with \(\bra{00}\) agrees with contraction with
\(-\bra{10}\) on \(\supp C_1\). Applying this relation to two copies,
and using the permutation symmetry of the averaged purification comb,
produces a coherent difference of two histories that vanishes on the
support of \(R_{avg}^{(3)}(C_1)^{(3)}\).

The same difference does not vanish on the support of
\(R_{avg}^{(3)}(C_0)^{(3)}\). Indeed,
\[
\supp C_0=\operatorname{span}\{\ket{00},\ket{11}\}
\]
contains only matched input--output labels. One of the two histories is
therefore allowed by \(C_0\), while the other contains an input--output
mismatch and is forbidden. The square of this coherent difference gives
a positive tester outcome identifying the first comb. Additional positive
diagonal terms are then added to complete the tester causally. These
completion terms also contain input--output mismatches and hence do not
affect the support separation.

Throughout the construction, the computational-basis vectors are ordered
as \(I_3O_2I_2O_1\), and \(F_{E_1E_2}\) denotes the swap operator on the
first two environment systems. Define

\begin{align}
T_0^{(3)} &=\ketbra{0}_{O_3}\otimes\id_{E_3}\otimes\Big[\ketbra{0001}_{I_3O_2I_2O_1}\otimes\id_{E_1E_2}+\ketbra{1110}_{I_3O_2I_2O_1}\otimes\id_{E_1E_2}\nonumber\\
&-\ket{0001}\!\bra{1110}_{I_3O_2I_2O_1}\otimes F_{E_1E_2}-\ket{1110}\!\bra{0001}_{I_3O_2I_2O_1}\otimes F_{E_1E_2}\Big]\otimes\ketbra{1}_{I_1}.
\label{eq:explicit-optimal-M0}
\end{align}

The complementary outcome is

\begin{align}
T_1^{(3)}&=\ketbra{1}_{O_3}\otimes\id_{E_3}\otimes\Big[\ketbra{0001}_{I_3O_2I_2O_1}\otimes\id_{E_1E_2}+\ketbra{1110}_{I_3O_2I_2O_1}\otimes\id_{E_1E_2}\nonumber\\
&-\ket{0001}\!\bra{1110}_{I_3O_2I_2O_1}\otimes F_{E_1E_2}-\ket{1110}\!\bra{0001}_{I_3O_2I_2O_1}\otimes F_{E_1E_2}\Big]\otimes\ketbra{1}_{I_1}\nonumber\\
&\quad+\id_{O_3E_3}\otimes\left(\ketbra{0101}_{I_3O_2I_2O_1}+\ketbra{1010}_{I_3O_2I_2O_1}\right)\otimes\id_{E_1E_2}\otimes\ketbra{1}_{I_1}.
\label{eq:explicit-optimal-M1}
\end{align}

The causal constraints these operators must fulfill to be feasible are given in the main text in equation \eqref{eq:SEQ_tester}. We translate them to recursive constraints for 3 copies
\begin{align}
T_0^{(3)}\succeq0,\qquad T_1^{(3)}\succeq0,\qquad T_0^{(3)}+T_1^{(3)}=\id_{O_3E_3}\otimes N^{(3)},\\
\tr_{I_3}N^{(3)}=\id_{O_2E_2}\otimes N^{(2)},\qquad\tr_{I_2}^{(2)}=\id_{O_1E_1}\otimes\rho,\qquad\tr\rho=1.
\end{align}

\paragraph{Positivity.} The first outcome is positive because its central factor between brackets has the form $XX^\dagger$, and is
therefore positive semidefinite.  The first term of $T_1^{(3)}$ contains the same positive factor, while its second term is a sum of rank-one projectors tensored with identities and $\ketbra{1}_{I_1}$. Therefore,

\begin{align}
T_0^{(3)}\succeq0,\qquad T_1^{(3)}\succeq0.
\end{align}

\paragraph{Normalization of the outcomes.} Adding the two outcomes and using $\ketbra{0}_{O_3}+\ketbra{1}_{O_3}=\id_{O_3}$ gives

\begin{align}
T_0^{(3)}+T_1^{(3)}&=\id_{O_3E_3}\otimes\Big[\ketbra{0001}_{I_3O_2I_2O_1}\otimes\id_{E_1E_2}+\ketbra{1110}_{I_3O_2I_2O_1}\otimes\id_{E_1E_2}\nonumber\\
&\quad\quad\quad-\ket{0001}\!\bra{1110}_{I_3O_2I_2O_1}\otimes F_{E_1E_2}-\ket{1110}\!\bra{0001}_{I_3O_2I_2O_1}\otimes F_{E_1E_2}\Big]\otimes\ketbra{1}_{I_1}\\
&+\id_{O_3E_3}\otimes\left(\ketbra{0101}_{I_3O_2I_2O_1}+\ketbra{1010}_{I_3O_2I_2O_1}\right)\otimes\id_{E_1E_2}\otimes\ketbra{1}_{I_1}.
\end{align}

Thus, there exists an operator \(N_\star^{(3)}\) such that
\begin{align}
T_0^{(3)}+T_1^{(3)}=\id_{O_3E_3}\otimes N_\star^{(3)}
\end{align}

\paragraph{First causal constraint.} The vectors $\ket{0001}$ and $\ket{1110}$ have orthogonal $I_3$ labels, so the cross terms vanish under the partial trace over $I_3$. Using $F_{E_1E_2}F_{E_1E_2}^{\dagger}=\id_{E_1E_2}$, we obtain

{\small
\begin{align}
\tr_{I_3}\left[\left(\ket{0001}\otimes\id_{E_1E_2}-\ket{1110}\otimes F_{E_1E_2}\right)\left(\bra{0001}\otimes\id_{E_1E_2}-\bra{1110}\otimes F_{E_1E_2}\right)\right]=\left(\ketbra{001}+\ketbra{110}\right)_{O_2I_2O_1}\otimes\id_{E_1E_2}.
\end{align}}

The diagonal completion terms give

\begin{align}
\tr_{I_3}\left[\left(\ketbra{0101}+\ketbra{1010}\right)_{I_3O_2I_2O_1}\otimes\id_{E_1E_2}\right]=\left(\ketbra{101}+\ketbra{010}\right)_{O_2I_2O_1}\otimes\id_{E_1E_2}.
\end{align}

Combining both contributions yields

\begin{align}
\tr_{I_3}N_\star^{(3)}=\left(\ketbra{001}+\ketbra{010}+\ketbra{101}+\ketbra{110}\right)_{O_2I_2O_1}\otimes\id_{E_1E_2}\otimes\ketbra{1}_{I_1}=\id_{O_2E_2}\otimes N_\star^{(2)},
\end{align}

where

\begin{align}
N_\star^{(2)}:=\left(\ketbra{01}+\ketbra{10}\right)_{I_2O_1}\otimes\id_{E_1}\otimes\ketbra{1}_{I_1}.
\end{align}

This proves the first recursive causal constraint.

\paragraph{Second causality constraint.} Tracing \(N_\star^{(2)}\)  over $I_2$ gives

\begin{align}
\tr_{I_2}N_\star^{(2)}=\left(\ketbra{1}_{O_1}+\ketbra{0}_{O_1}\right)\otimes\id_{E_1}\otimes\ketbra{1}_{I_1}=\id_{O_1E_1}\otimes\rho_\star,
\end{align}

where

\begin{align}
\rho_\star:=\ketbra{1}_{I_1}.
\end{align}

Finally,

\begin{align}
\rho_\star\succeq0,\qquad\tr\rho_\star=1.
\end{align}

We have therefore verified directly from $T_0^{(3)}$ and $T_1^{(3)}$ that

\begin{align}
T_0^{(3)}\succeq0,\qquad T_1^{(3)}\succeq0,\qquad T_0^{(3)}+T_1^{(3)}=\id_{O_3E_3}\otimes N_\star^{(3)},\\
\tr_{I_3}N_\star^{(3)}=\id_{O_2E_2}\otimes N_\star^{(2)},\qquad\tr_{I_2}N_\star^{(2)}=\id_{O_1E_1}\otimes\rho_\star,\qquad\tr\rho_\star=1.
\end{align}

Hence $\{T_0^{(3)},T_1^{(3)}\}$ is a valid binary three-round sequential tester.

It remains to prove that the tester has no error. Let \(\Pi_1\) denote
the projector onto \(\supp R_{avg}^{(3)}(C_1)\). Since
\(\supp C_1=\ket{+0}^{\perp}\), one has
\[
\bra{00}P_{\supp C_1}=-\bra{10}P_{\supp C_1}.
\]
Applying this relation to the second and third copies, and using the
invariance of \(\Pi_1\) under simultaneous exchange of the first two
channel--environment copies, gives

\begin{align}
\left[\left(\bra{11}_{I_1O_1}\otimes\bra{00}_{I_2O_2}\otimes\bra{00}_{I_3O_3}\right)\otimes\id_{E_1E_2E_3}-\left(\bra{10}_{I_1O_1}\otimes\bra{11}_{I_2O_2}\otimes\bra{10}_{I_3O_3}\right)\otimes F_{E_1E_2}\otimes\id_{E_3}\right]P_1=0.
\end{align}

The expression in brackets is the contraction defining the first tester
outcome. Hence,
\begin{align}
T_0^{(3)}\Pi_1=0.
\label{eq:M0P1-zero}
\end{align}

For the other error term, let \(\Pi_0\) denote the projector onto
\(\supp\widetilde{\Scal}_0^{(3)}\). Since
\[
\supp C_0=\operatorname{span}\{\ket{00},\ket{11}\},
\]
every basis component in \(\Pi_0\) has matching input and output labels at
each channel use. Every contribution to \(T_1^{(3)}\), however, contains
an input--output mismatch on at least one copy. The coherent terms contain
either \((I_3,O_3)=(0,1)\) or \((I_1,O_1)=(1,0)\), while the two diagonal
completion terms contain respectively
\((I_2,O_2)=(0,1)\) and \((I_2,O_2)=(1,0)\). Therefore,
\begin{align}
T_1^{(3)}\Pi_0=0.
\label{eq:M1P0-zero}
\end{align}

Since $R_{avg}^{(3)}(C_j)=\Pi_jR_{avg}^{(3)}(C_j)\Pi_j$, Eqs.~\eqref{eq:M0P1-zero} and \eqref{eq:M1P0-zero} imply

\begin{align}
\tr\left[T_0^{(3)}R_{avg}^{(3)}(C_1)\right]=0,\qquad \tr\left[T_1^{(3)}R_{avg}^{(3)}(C_0)\right]=0.
\label{eq:perfect-zero-errors}
\end{align}
Consequently,
\begin{align}
    \frac{1}{2}\tr\left[T_0^{(3)}R_{avg}^{(3)}(C_0)\right]+\frac{1}{2}\tr\left[T_1^{(3)}R_{avg}^{(3)}(C_1)\right]=\frac{1}{2}+\frac{1}{2}=1
\end{align}

Thus, the tester discriminates the two three-round averaged purifications perfectly, and
\begin{align}
P_{\mathrm{SEQ}}\left(\mathsf C_{\mathrm{avg}}^{(3)};s\right)=1.
\end{align}
\end{proof}
\end{document}